\renewcommand{\baselinestretch}{1.4}
\newtheorem{definition}{Definition}[section]
\newtheorem{theorem}{Theorem}
\newtheorem{lemma}{Lemma} 
\newtheorem{corollary}{Corollary}
\newtheorem{proposition}{Proposition}
\renewcommand{\hat}{\widehat}
\def\singlespace{\def\baselinestretch{1}\@normalsize}
\def\wh{\widehat}
\newcommand{\cov}{{\rm Cov}}
\newcommand{\diag}{{\rm diag}}
\newcommand{\var}{{\rm Var}}
\def\la{\lambda}
\newcommand{\pcon}{\stackrel{P}{\longrightarrow}}
\newcommand{\calS}{{\mathcal S}}
\newcommand{\RR}{\EuScript R}
\def\6bullets{\bullet\bullet\bullet\bullet\bullet\bullet}
\DeclareMathAlphabet\EuScriptBF{U}{eus}{b}{n}
\def\AS{{\sl The Annals of Statistics}}
\def\JMA{{\sl Journal of multivariate analysis}}
\newcommand{\Date}[1]{\def\@Date{#1}}
\def\today{\number\day~\ifcase\month\or
	January\or February\or March\or April\or May\or June\or
	July\or August\or September\or October\or November\or December\fi~\number\year}
\def \b1{{\bf 1}}
\begin{document}
	\title{\bf Blind Source Separation over Space}
	\author{
		Bo Zhang\\
		Department of Statistics and Finance, International Institute of Finance\\
		School of Management, University of Science and Technology of China, Hefei, China\\
		zhangbo890301@outlook.com\\[1ex]
		Sixing Hao \quad and \quad Qiwei Yao\\
		Department of Statistics,  London School of Economics,  London, WC2A 2AE, UK\\
		s.hao3@lse.ac.uk \qquad q.yao@lse.ac.uk
	}
	\maketitle
	
\begin{abstract}
	We propose a new estimation method for the blind source separation model
	of \cite{bgnrv20}. The new estimation is based on an eigenanalysis of a positive
	definite matrix defined in terms of multiple normalized spatial
	local covariance matrices, and, therefore, can handle moderately high-dimensional
	random fields. The consistency of the estimated mixing matrix is established
	with explicit error rates even when the eigen-gap decays to zero slowly.
	The proposed method is illustrated via both simulation
	and a real data example.
\end{abstract}
	
	\noindent
	{\sl Some key words}: Eigen-analysis; Eigen-gap; High-dimensional random field;
	Mixing matrix; Spatial local covariance matrix.

	\section{Introduction}
	Blind source separation is an effective way to reduce the complexity in modelling
	$p$-variant spatial data \citep{nofr15, bgnrv20}. The approach can be viewed as
	a version of independent component analysis \citep{hko01} for
	multivariate spatial random fields.
	Though only the second moment properties are concerned, the challenge is to
	decorrelate $p$  spatial random fields at the same location
	as well as across different locations.
	Note that the standard principal component analysis does not
	capture spatial correlations, as it only diagonalizes the covariance
	matrix (at the same location).
	\cite{nofr15} introduced a so-called local covariance matrix to represent
	the dependence across different locations. Furthermore, it proposed to estimate
	the mixing matrix in the blind source separation decomposition based on
	a generalized eigenanalysis, which
	can be viewed as an extension of the
	principal component analysis as it diagonalizes a local covariance matrix
	in addition to the standard covariance matrix.
	To overcome the drawback of using the information from only one
	local covariance
	matrix, \cite{bgnrv20} proposed to use multiple local covariance matrices in
	the estimation. The method of \cite{bgnrv20} has a clear advantage in
	incorporating the spatial dependence information
	over different ranges.
	However, the estimation is based on a nonlinear optimization with $p^2$
	parameters. Hence it is compute-intensive and cannot cope with very large $p$.
	
	Inspired by \cite{bgnrv20}, we propose a new method also based on multiple (normalized)
	local covariance matrices for estimating the
	mixing matrix. Different from \cite{bgnrv20}, the new method is
	computationally efficient as it boils down to an
	eigenanalysis of a positive definite matrix
	which is a matrix function of multiple normalized spatial local covariance matrices.
	Therefore it can handle the cases with the dimension of
	random fields in the order of a few thousands on an ordinary personal
	computer. While the basic idea resembles that of \cite{cgy18} which
	dealt with multiple time series, the spatial random fields concerned
	are sampled irregularly and non-unilaterally, and the spatial correlations spread in all
	directions. Furthermore, we incorporate the pre-whitening in our search for the mixing matrix. This implies estimating the covariance matrix of
	the process, which is assumed to be an identity matrix in  \cite{cgy18}.
	The normalized spatial local covariance matrix is a modified version
	of the spatial local covariance matrix in \cite{nofr15}, and is introduced to facilitate the effect of the pre-whitening.
	All these entail completely different theoretical exploration;
	leading to the asymptotic results under the similar setting of \cite{bgnrv20}
	but allowing the dimension of the random field to diverge together
	with the number of the observed locations, which is assumed to be fixed
	in \cite{bgnrv20}.
	
	Another new contribution of the paper concerns the eigen-gap in the
	eigenanalysis for estimating the mixing matrix. In order to identify
	a consistent estimator for the mixing matrix, the standard condition
	is to assume that the minimum pairwise absolute difference among the eigenvalues
	remains positive.  See Assumptions 8 and 9 of \cite{bgnrv20}.
	The similar conditions have been imposed in the literature
	in order to identify factor loading spaces in factor models
	\citep{ly12}.
	However this condition is invalid under the setting concerned in
	this paper when the dimension of random field $p$ diverges to
	infinity, as the maximum order of the eigen-gap is $p^{-1}$.
	We show that the identification of the mixing matrix is still
	possible when $p \to \infty$ at the rate $p=o(n^{1/3})$.  See Theorem
	\ref{t3} and Remark 1 in Section \ref{sec3}.
	
	The rest of the paper is organised as follows. We present the spatial blind source
	separation model and the new estimation method in Section \ref{sec2}. The asymptotic
	properties are developed in Section \ref{sec3}. Numerical illustration with both
	simulated data and a real data set is presented in Section \ref{sec4}. All the technical
	proofs are given in Section \ref{sec5}.

	\section{Setting and Methodology} \label{sec2}
	
	\subsection{Model} \label{sec21}

We adopt the spatial blind source separation model of \cite{bgnrv20}.
More precisely, let
$X(s) = \{ X_1(s), \cdots, X_p(s) \}^\top $ be a $p$-variate random field
defined on $s \in \calS \subset \RR^d$, and  $X(s)$ admits the representation
\begin{equation} \label{b1}
	X(s) = \Omega Z(s) \equiv \Omega \{ Z_1(s), \cdots, Z_p(s) \}^\top,
\end{equation}
where $ Z_1(s), \cdots, Z_p(s) $ are $p$ independent latent random fields,
and $\Omega$ is a $p\times p$ invertible constant matrix and
is called the mixing matrix. Furthermore, \cite{bgnrv20} assumes that for any $s, u \in \calS$,
\begin{equation} \label{b2}
	E Z(s) = \mu_0, \quad \var\{ Z(s) \} = I_p, \quad \cov\{Z(s), Z(u) \} = H(s - u),
\end{equation}
where  $\mu_0$ is an unknown constant vector, $I_p$ denotes the $p\times p$ identity matrix, $H(\cdot )$ is a $p\times p$ diagonal
matrix
\[
H(s - u) = \diag\{ K_1(s-u), \cdots, K_p(s-u) \},
\]
i.e. $\cov\{ Z_i(s), Z_j(u) \} = K_i(s-u)$ if $i=j$, and 0 otherwise. Let
$\mu= \Omega \mu_0$.
Under (\ref{b1}) and (\ref{b2}), $X(\cdot)$ is a weakly stationary process as
\begin{equation} \label{b3}
	EX(s) =\mu, \quad \var\{X(s)\} = \Omega \Omega^\top, \quad  \cov\{X(s), X(u) \}
	= \Omega H(s - u) \Omega^\top.
\end{equation}

\subsection{Estimation method}\label{sec22}

Let $X(s_1), \cdots, X(s_n)$ be available observations.
Put
\[
\tilde X(s_i) =X(s_i) - {1 \over n} \sum_{j=1}^n X(s_j), \quad
\tilde Z(s_i) = Z(s_i) - {1 \over n} \sum_{j=1}^n Z(s_j),
\quad i=1, \cdots, n.
\]
Then the spatial local covariance matrix of \cite{nofr15}
is defined as
\begin{equation} \label{b5}
	\tilde M(f) = {1 \over n} \sum_{i,j=1}^n f(s_i - s_j) \tilde X(s_i) \tilde X(s_j)^\top,
\end{equation}
where $f(\cdot)$ is a kernel function such as $f(s) = 1(h_1 \le \|s\|\le h_2)$
for some constants $0\le h_1 < h_2 < \infty$, and $1(\cdot)$ denotes
the indicator function.
To recover the mixing matrix $\Omega$,
\cite{bgnrv20} propose to estimate the
unmixing matrix (i.e. the inverse
of the mixing matrix) $\Gamma = \Omega^{-1} \equiv (\gamma_1, \cdots, \gamma_p)^\top$ by
\[
\hat \Gamma \; \in \;
\arg\max_{ \Gamma \tilde M(f_0) \Gamma^\top = I_p} \sum_{i=1}^k \sum_{j=1}^p
\{ \gamma_j^\top \tilde M(f_i) \gamma_j \}^2,
\]
where $f_0(s) = I(s=0)$, and $f_1, \cdots, f_k$ are appropriately specified kernels.
This is a nonlinear optimization problems with $p^2$ variables, which
\cite{bgnrv20} adopted the algorithm of \cite{c88} to solve. When $k=1$,
the objective function contains only one kernel function.
Then the above optimization can be solved
based on a generalized eigenanalysis; see \cite{nofr15} and \cite{bgnrv20}, though
the estimation based on a single kernel requires the prior knowledge
on which kernel to use for a given problem.

We now propose a new method to estimate the mixing matrix using multiple kernels but based on a single eigenanalysis.
To this end, we define,
for any given $k$ kernel function $f_1(\cdot), \cdots, f_k(\cdot)$,
\begin{align} \label{nfi}
	N &= E\Big[  {1 \over k}\sum_{h=1}^k \big\{ {1 \over n} \sum_{i,j=1}^n f_h(s_i - s_j)
	\tilde Z(s_i) \tilde Z(s_j)^\top \big\} \big\{ {1 \over n} \sum_{i,j=1}^n f_h(s_i - s_j)
	\tilde Z(s_i)  \tilde Z(s_j)^\top \big\}^\top\Big],\\ \nonumber
	W &= E\Big[ {1 \over k}\sum_{h=1}^k \big\{ {1 \over n} \sum_{i,j=1}^n f_h(s_i - s_j)
	\Sigma^{-1/2} \tilde X(s_i) \tilde X(s_j)^\top \big\} \Sigma^{-1} \\ \nonumber
	& \;\; \quad \times \big\{ {1 \over n} \sum_{i,j=1}^n f_h(s_i - s_j)
	\tilde X(s_i) \tilde X(s_j)^\top\Sigma^{-1/2} \big\}^\top \Big],
\end{align}
where $\Sigma = {\rm Var}\{X(s)\}= \Omega \Omega^\top$.
Then $N$ and $W$ are  $p\times p$ non-negative definite matrices. Furthermore,
$N$ is a diagonal matrix, as its $(i,j)$-th element, for $i\ne j$, is
\[
{1 \over n^2k} \sum_{h=1}^k \sum_{\ell =1}^p \sum_{i_1, i_2, j_1, j_2=1}^n
f_h(s_{i_1} - s_{j_1}) f_h(s_{i_2} - s_{j_2}) E\{ \tilde Z_i(s_{i_1}) \tilde Z_\ell(s_{j_1})
\tilde Z_j(s_{i_2}) \tilde Z_\ell (s_{j_2}) \} =0,
\]
which is guaranteed by the fact that
the components of $Z(\cdot)$ are the $p$ independent random fields.
Since $\Omega$ is a $p \times p$ full rank matrix, we can rewrite $\Omega=V_{\Omega}\Lambda_{\Omega}U_{\Omega}$, where $V_{\Omega}$ and $U_{\Omega}$ are two $p\times p$ orthogonal matrices, and $\Lambda_{\Omega}$ is a diagonal matrix. Then $\Sigma^{-1/2} =
V_{\Omega}\Lambda_{\Omega}^{-1}V_{\Omega}^\top$. Combining this and (\ref{b1}), we have
\begin{equation} \label{b9}
	W = V_{\Omega} U_{\Omega} N U_{\Omega}^\top V_{\Omega}^\top,
\end{equation}
i.e. the columns of $ U_W\equiv V_{\Omega} U_{\Omega}$ are the $p$ orthonormal eigenvectors of matrix $W$ with the diagonal elements of $N$ as the corresponding eigenvalues. As $\Sigma^{1/2} U_{W}=V_{\Omega}\Lambda_{\Omega}V_{\Omega}^\top V_{\Omega} U_{\Omega}=\Omega$, this paves the way to identifying
mixing matrix $\Omega$. We summarize the finding in
the proposition below.

\begin{proposition} \label{prop1}
	Under the condition (\ref{b2}), the  mixing matrix $\Omega$ defined
	in (\ref{b1}) is of the form $\Sigma^{1/2}U_W$, where the columns of $U_W$ are the $p$ orthonormal eigenvectors of matrix $W$.
	Moreover, those $p$ eigenvectors are identifiable, upto the sign changes, if the $p$ diagonal elements of
	$N$ are distinct from each other.
\end{proposition}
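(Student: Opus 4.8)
The plan is that almost all the work has already been done in the discussion preceding the statement: the display (\ref{b9}) and the computation showing that $N$ is diagonal. So the proof amounts to organizing those observations into the claimed structural form and then appending one standard uniqueness fact from the spectral theorem for the identifiability part.

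First I would record that $N$ is a nonnegative definite diagonal matrix, writing $N=\diag\{\lambda_1,\ldots,\lambda_p\}$ with $\lambda_j\ge 0$; the vanishing of its off-diagonal entries is precisely the computation displayed just above the proposition and uses only the independence of the component fields $Z_1(\cdot),\ldots,Z_p(\cdot)$ assumed in (\ref{b2}). Next I would take the singular value decomposition $\Omega=V_\Omega\Lambda_\Omega U_\Omega$, with $V_\Omega,U_\Omega$ orthogonal and $\Lambda_\Omega$ diagonal and invertible, deduce $\Sigma=\Omega\Omega^\top=V_\Omega\Lambda_\Omega^2 V_\Omega^\top$ and hence $\Sigma^{\pm 1/2}=V_\Omega\Lambda_\Omega^{\pm 1}V_\Omega^\top$, and substitute these together with $\tilde X(s_i)=\Omega\tilde Z(s_i)$ into the definition of $W$ to obtain $W=V_\Omega U_\Omega N U_\Omega^\top V_\Omega^\top$, which is (\ref{b9}). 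Writing $U_W=V_\Omega U_\Omega$, this matrix is orthogonal (a product of orthogonal matrices), so (\ref{b9}) is an orthogonal diagonalization of the symmetric matrix $W$: its columns are $p$ orthonormal eigenvectors of $W$ with eigenvalues $\lambda_1,\ldots,\lambda_p$. Finally $\Sigma^{1/2}U_W=V_\Omega\Lambda_\Omega V_\Omega^\top V_\Omega U_\Omega=V_\Omega\Lambda_\Omega U_\Omega=\Omega$, which is the first assertion.

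For the second assertion I would argue that if the $\lambda_j$ are pairwise distinct then each is a simple eigenvalue of $W$, so its eigenspace is one-dimensional and a unit eigenvector of $W$ associated with $\lambda_j$ is determined uniquely up to multiplication by $\pm 1$. Hence the columns of any orthonormal eigenvector matrix of $W$ agree, column by column once eigenvalues are matched, with those of $U_W$ up to sign; through $\Omega=\Sigma^{1/2}U_W$ this transfers to the columns of $\Omega$, giving the claimed identifiability.

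I do not expect a genuine obstacle here: the statement is a structural identity combined with the uniqueness half of the spectral theorem, and all the nontrivial algebra (diagonality of $N$ from independence, the conjugation (\ref{b9})) is already in place. The only points that merit a little care in the writeup are that $\Lambda_\Omega$ is invertible so that $\Sigma^{-1/2}$ is well defined, and that recovery of $\Omega$ is inevitably only up to the sign (and, without the distinctness hypothesis, permutation) ambiguity inherent in blind source separation. The quantitative eigen-gap considerations that dominate the rest of the paper play no role at this population level.
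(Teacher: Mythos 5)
Your proposal is correct and follows essentially the same route as the paper: the paper's own justification is exactly the discussion preceding the proposition (diagonality of $N$ from independence, the SVD of $\Omega$ giving (\ref{b9}) and $\Sigma^{1/2}U_W=\Omega$), with the identifiability part being the standard uniqueness-up-to-sign of eigenvectors for simple eigenvalues, just as you argue. No gaps.
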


Note that the sign changes of any columns of $U_W$ will not change the
independence of the components of $Z(\cdot)$ in (\ref{b1}), as $Z(s) = U_W^\top\Sigma^{-1/2}
X(s)$.
By Proposition \ref{prop1}, we define an estimator for the mixing matrix as
\begin{equation} \label{newesti}
	\hat \Omega=\hat \Sigma^{1/2}\hat U_{W},
\end{equation}
where $\hat \Sigma=n^{-1} \sum_{1\le j \le n} \tilde X(s_j) \tilde X(s_j)^\top$, and
the columns of $\hat U_{W}$ are the $p$ orthonormal eigenvectors
of matrix
\begin{equation} \label{nfihatX}
	\hat W =  {1 \over k}\sum_{h=1}^k \hat M(f_h) \hat M(f_h)^\top.
\end{equation}
In the above expression, $\hat M(f_h)$ is a normalized
local covariance matrix defined as
\begin{equation} \label{b5scaleh}
	\hat M(f) = {1 \over n} \sum_{i,j=1}^n f(s_i - s_j)\hat \Sigma^{-1/2}  \tilde X(s_i) \tilde X(s_j)^\top \hat \Sigma^{-1/2}.
\end{equation}
In comparison to the local covariance matrix (\ref{b5}), we replace $X(\cdot)$ by its standardized version $\hat \Sigma^{-1/2} \tilde X(\cdot)$. This effectively pre-whitens the data in our search for the mixing matrix.

To end this section, we note that the proposed new method makes use of the normalized 4th moments of the observations while the methods of \cite{bgnrv20} and \cite{nofr15} only depend on the 2nd moments.

	\section{Asymptotic properties} \label{sec3}
	
	We consider the asymptotic behaviour of the estimator $\wh \Omega$ when $n \to \infty$ and $p$ either remaining fixed or $p = o(n)$. Since $\wh\Omega^{-1}X(s)=\wh\Omega^{-1} \Omega Z(s)$, we will focus on $\wh\Gamma_{\Omega}=\wh\Omega^{-1}\Omega$. We introduce some regularity conditions first.
	\begin{description}
		\item[A1.]  In model (\ref{b1}), $Z_1(\cdot), \cdots, Z_p(\cdot)$ are $p$
		independent and strictly stationary random fields on $R^d$, and condition (\ref{b2})  holds.
		Furthermore, $Z(\cdot)$ is sub-Gaussian in the sense
		that there exists a constant $C_0>0$ independent of $p$ for which
		\begin{equation} \label{subgaussianassu}
			\sup_{\beta \geq 1, 1\leq i \leq p}\beta^{-1/2}\{E|Z_i(s)|^\beta\}^{1/\beta} \leq C_0.
		\end{equation}
		Moreover, for any unit vector $(a_1,\cdots,a_n)^\top \in R^n$ and $1 \leq \ell \leq p$,
		$\sum_{i=1}^na_iZ_\ell(s_i)$ is sub-Gaussian.
		\item[A2.] There exist positive constants $\Delta, \alpha $
		and $A$ (independent of $n$ and $p$) such that
		for any $1\le i \ne j \le n$ and $n\ge 2$,
		$\|s_i - s_j\|\ge \Delta $, and for $s, u \in R^d$, $1 \le \ell \le p$
		and $1\le h \le k$ ($k$ is fixed),
		\begin{equation} \label{zcov}
			| \cov \{Z_\ell (s+u) , Z_\ell(s) \}| \le A/(1 + \|u\|^{d+\alpha}),
		\end{equation}
		\begin{equation} \label{fcov}
			|f_h(s)| \le A/(1 + \|s\|^{d+\alpha}).
		\end{equation}
		\item[A3.] Let $\la_1 \ge \cdots \ge \la_p \ge 0$ be the diagonal
		elements of matrix $N$ defined in (\ref{nfi}), arranged in the descending
		order. There exist integers
		$0=p_0< p_1 < \cdots <p_m=p$ for which
		\begin{align} \label{nfilamdba1a}
			& \limsup_{n \to \infty} \max_{1 \le i \le m} | \la_{p_{i-1} +1} - \la_{p_i}| =0,  \qquad
			{\rm and}\\
			\label{nfilamdba1b}
			& \liminf_{n \to \infty} \min_{1 \le i <m} |\la_{p_i} - \la_{p_i+1}| =C_1 >0,
		\end{align}
		where $m \ge 2$ is a fixed integer, and $C_1$ is a constant independent of $p$.
	\end{description}
	
	Conditions A1 and A2 are essentially the same as Assumptions 1-7 of
	\cite{bgnrv20}, though we impose only the sub-Gaussianality instead of
	requiring $Z(\cdot)$ to be normally distributed. In addition, our setting allows
	$p$ to diverge together with $n$. Condition A3 is required
	for distinguishing the columns of the mixing matrix $\Omega$ from each other.
	Those $p$ columns are completely identifiable when $p$ is fixed and $m=p$.
	Then condition (\ref{nfilamdba1a}) vanishes, and (\ref{nfilamdba1b}) ensures that
	the $p$ diagonal elements of matrix $N$ are distinct from each other (see Proposition
	\ref{prop1}). The similar conditions (i.e. with $p$ fixed) were imposed in \cite{bgnrv20}: see
	Assumptions 8 and 9 therein. Note that condition (\ref{nfilamdba1b}) cannot hold
	when $m=p\to \infty$.
	When $p\to \infty$ together with $n$, (\ref{nfilamdba1a})
	and (\ref{nfilamdba1b}) ensure that the estimated mixing matrix $\wh \Omega$ transforms
	$X(\cdot)$ into $m$ independent subvectors; see Theorem \ref{t2} below.

	Without the loss of generality, we assume that the $p$ components of $Z(\cdot)$ are
	arranged in the order such that the diagonal elements of
	matrix $N$ in (\ref{nfi}) are in the descending order. This simplifies the presentation
	of Theorem \ref{t2} substantially.
	Write $\wh W = \wh U_W \wh \Lambda_W \wh U_W^{\top}$ as its spectral decomposition,
	i.e.
	\[
	\wh \Lambda_W = \diag(\wh \la_{W,1}, \cdots, \wh \la_{W,p}),
	\]
	where $\wh \la_{W,1} \ge \cdots \ge \wh \la_{W,p}\ge 0$ are the eigenvalues of $\wh W$, and
	the columns of the orthogonal matrix $\wh U_W$ are the corresponding eigenvectors.
	Consequently,
	\begin{equation} \label{eigengammaW}
		\wh\Gamma_{\Omega}=\wh\Omega^{-1}\Omega=\wh U_W^{\top} \wh \Sigma^{-1/2} \Omega.
	\end{equation}
	Corollary \ref{corol1} below shows that $ \wh \Omega^{-1}\Omega =\wh \Gamma_{\Omega} \pcon I_p$ when $p$ is finite and $m=p$ in Condition A3. To state a more
	general result first, put $q_i = p_i-p_{i-1}$ for $i=1, \cdots, m$ (see Condition A3), and
	\begin{equation}\label{eigendecGammaw}
		\hat\Omega^{-1}\Omega=\wh\Gamma_{\Omega}=\begin{pmatrix}
			\hat\Gamma_{\Omega,11} &  \cdots &
			\hat\Gamma_{\Omega,1m}\\
			\cdots & \cdots & \cdots\\
			{\hat{\Gamma}}_{\Omega,m1} &  \cdots & {\hat{\Gamma}}_{\Omega,mm}
		\end{pmatrix},
	\end{equation}
	where submatrix $\wh \Gamma_{\Omega,ij}$ is of the size $q_i \times q_j$.
	
	\begin{theorem}\label{t2}
		Let Conditions A1-A3 hold. As $n\to \infty$ and $p=o(n)$, it holds that
		\begin{equation}\label{eigendecGamma0}
			\|{\hat{\Gamma}}_{\Omega,ii}\|=1+O_p\{n^{-1/2}p^{1/2}\}, \ \ \|{\hat{\Gamma}}_{\Omega,ii}\|_{min}=1+O_p\{n^{-1/2}p^{1/2}\} \quad 1\le i  \le m,
		\end{equation}
		
		\begin{equation}\label{eigendecGamma1w}
			\|{\hat{\Gamma}}_{\Omega,ij}\|=O_p\{n^{-1/2}p^{1/2}\}, \quad 1\le i \ne j \le m,
			\quad {\rm and}
		\end{equation}
		\begin{equation}\label{eigendecLambda1w}
			\|{\hat{\Lambda}_W}-{\Lambda}\|=O_p(n^{-1/2}p^{1/2}),
		\end{equation}
		where $\Lambda = \diag(\la_{1}, \cdots, \la_{p})$, and $\la_i$ are specified
		in Condition A3.
	\end{theorem}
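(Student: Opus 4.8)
The plan is to reduce everything to two ingredients: a probabilistic step giving the operator-norm rates $\|\wh W-W\|=O_p(n^{-1/2}p^{1/2})$ and $\|\wh\Sigma-\Sigma\|=O_p(n^{-1/2}p^{1/2})$, and a deterministic step turning these into the block structure of $\wh\Gamma_{\Omega}$ via Weyl's inequality and a blockwise Davis--Kahan bound that uses the cluster structure of Condition A3. The algebraic starting point is the exact identity $\wh\Gamma_{\Omega}=\wh U_W^{\top}\wh\Sigma^{-1/2}\Omega$ from (\ref{eigengammaW}) together with the fact, already behind (\ref{b9}), that $\Sigma^{-1/2}\Omega=V_{\Omega}U_{\Omega}=U_W$ exactly; hence $\wh\Gamma_{\Omega}=\wh U_W^{\top}U_W+\wh U_W^{\top}R$ with $R:=(\wh\Sigma^{-1/2}-\Sigma^{-1/2})\Omega$, and it suffices to control $\|R\|$ and the $q_i\times q_j$ blocks $\wh U_{W,i}^{\top}U_{W,j}$ of $\wh U_W^{\top}U_W$, where $U_W=(U_{W,1},\dots,U_{W,m})$ and $\wh U_W=(\wh U_{W,1},\dots,\wh U_{W,m})$ are partitioned columnwise according to the clusters $p_0<\cdots<p_m$ of A3.

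For the probabilistic step I would first note $\wh\Sigma=\Omega\wh\Sigma_Z\Omega^{\top}$ with $\wh\Sigma_Z=n^{-1}\sum_j\tilde Z(s_j)\tilde Z(s_j)^{\top}$, since $\var\{Z(s)\}=I_p$; standard sub-Gaussian concentration under A1 (the centering contributing only $O_p(p/n)$) gives $\|\wh\Sigma_Z-I_p\|=O_p(n^{-1/2}p^{1/2})$, hence $\|\wh\Sigma-\Sigma\|=O_p(n^{-1/2}p^{1/2})$, and, using Lipschitzness of $X\mapsto X^{-1/2}$ on matrices with spectrum in a fixed compact subset of $(0,\infty)$ together with boundedness of the spectrum of $\Sigma$, also $\|R\|\le\|\wh\Sigma^{-1/2}-\Sigma^{-1/2}\|\,\|\Omega\|=O_p(n^{-1/2}p^{1/2})$. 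For $\wh W=k^{-1}\sum_h\wh\Sigma^{-1/2}P_h\wh\Sigma^{-1}P_h^{\top}\wh\Sigma^{-1/2}$ with $P_h=n^{-1}\sum_{i,j}f_h(s_i-s_j)\tilde X(s_i)\tilde X(s_j)^{\top}=\Omega Q_h\Omega^{\top}$, I would compare with $W^{*}:=k^{-1}\sum_h\Sigma^{-1/2}P_h\Sigma^{-1}P_h^{\top}\Sigma^{-1/2}=k^{-1}\sum_h U_WQ_hQ_h^{\top}U_W^{\top}$ (the identity leading to (\ref{b9})) and with $W=E[W^{*}]$. The spatial-decay bounds (\ref{zcov})--(\ref{fcov}) and the $\Delta$-separation of A2 make the weighted double sums summable, so $\|EQ_h\|=O(1)$ and $\|P_h\|=O_p(1)$, while the sub-Gaussianity of the linear forms in A1 yields $\|Q_h-EQ_h\|=O_p(n^{-1/2}p^{1/2})$ by a covering argument for kernel-weighted quadratic forms. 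Expanding $Q_hQ_h^{\top}-E[Q_hQ_h^{\top}]$ into the cross terms $(Q_h-EQ_h)(EQ_h)^{\top}+(EQ_h)(Q_h-EQ_h)^{\top}$, of order $O_p(n^{-1/2}p^{1/2})$, plus the quadratic remainder, of order $O_p(p/n)=o_p(n^{-1/2}p^{1/2})$ under $p=o(n)$, gives $\|W^{*}-W\|=O_p(n^{-1/2}p^{1/2})$; replacing $\Sigma$ by $\wh\Sigma$ in each factor of $\wh W$ costs another $O_p(n^{-1/2}p^{1/2})$, so $\|\wh W-W\|=O_p(n^{-1/2}p^{1/2})$.

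The deterministic step then proceeds as follows. By (\ref{b9}) the eigenvalues of $W$ are exactly $\la_1\ge\cdots\ge\la_p$, so Weyl's inequality gives $\|\wh\Lambda_W-\Lambda\|\le\|\wh W-W\|=O_p(n^{-1/2}p^{1/2})$, which is (\ref{eigendecLambda1w}). For $i\ne j$, writing $\wh\Lambda_i$ and $\Lambda_j$ for the diagonal matrices of block-$i$ eigenvalues of $\wh W$ and block-$j$ eigenvalues of $W$, one has the Sylvester identity
\[
\wh\Lambda_i\,(\wh U_{W,i}^{\top}U_{W,j})-(\wh U_{W,i}^{\top}U_{W,j})\,\Lambda_j=\wh U_{W,i}^{\top}(\wh W-W)U_{W,j};
\]
by Weyl the entries of $\wh\Lambda_i$ are within $o_p(1)$ of block-$i$ eigenvalues of $W$, which by (\ref{nfilamdba1a})--(\ref{nfilamdba1b}) are separated from block $j$ of $\Lambda$ by at least $C_1$ eventually, so the separation is $\ge C_1/2$ with probability tending to one and the operator-norm Sylvester bound gives $\|\wh U_{W,i}^{\top}U_{W,j}\|\le 2C_1^{-1}\|\wh W-W\|=O_p(n^{-1/2}p^{1/2})$; combined with $\|R\|=O_p(n^{-1/2}p^{1/2})$ this is (\ref{eigendecGamma1w}). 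For $i=j$, orthonormality of the columns of $U_W$ and of $\wh U_{W,i}$ gives $\wh U_{W,i}^{\top}U_{W,i}U_{W,i}^{\top}\wh U_{W,i}=I_{q_i}-\sum_{j\ne i}(\wh U_{W,i}^{\top}U_{W,j})(\wh U_{W,i}^{\top}U_{W,j})^{\top}$, whose eigenvalues are $1-O_p(p/n)$, so all singular values of $\wh U_{W,i}^{\top}U_{W,i}$ lie in $[1-O_p(p/n),1]$; adding the $O_p(n^{-1/2}p^{1/2})$ perturbation from $R$ (via Weyl for singular values) gives $\|\wh\Gamma_{\Omega,ii}\|=1+O_p(n^{-1/2}p^{1/2})$ and $\|\wh\Gamma_{\Omega,ii}\|_{min}=1+O_p(n^{-1/2}p^{1/2})$, which is (\ref{eigendecGamma0}). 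All quantities appearing in the statement are invariant under the within-block orthogonal ambiguity of the eigenvectors, so they are well defined.

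The step I expect to be the main obstacle is the bound $\|\wh W-W\|=O_p(n^{-1/2}p^{1/2})$: it requires controlling the kernel-weighted double sums $Q_h$ --- quadratic forms in the $np$ weakly dependent sub-Gaussian variables $\{Z_\ell(s_i)\}$ with spatially decaying weights --- in operator norm uniformly over the diverging dimension $p$, and, crucially, showing the quadratic-remainder (``variance'') terms are only of order $p/n$ so that they are absorbed under $p=o(n)$; by comparison the blockwise Davis--Kahan bookkeeping and the treatment of $\wh\Sigma$ are routine.
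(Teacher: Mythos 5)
Your deterministic half (Weyl for (\ref{eigendecLambda1w}), the Sylvester/Davis--Kahan block bound using the gap $C_1$ from (\ref{nfilamdba1b}), and the orthonormality identity for the diagonal blocks) is fine, and your overall architecture --- compare $\wh W$ with the population $W$ and $\wh\Sigma^{-1/2}$ with $\Sigma^{-1/2}$ --- is genuinely different from the paper, which never makes either comparison. But this is exactly where your proposal has a real gap. Your starting identity $\wh\Gamma_{\Omega}=\wh U_W^{\top}U_W+\wh U_W^{\top}R$ with $R=(\wh\Sigma^{-1/2}-\Sigma^{-1/2})\Omega$ needs $\|R\|=O_p(n^{-1/2}p^{1/2})$, and your derivation of it ($\|\wh\Sigma-\Sigma\|\le\|\Omega\|^2\|\wh\Sigma_Z-I_p\|$ plus Lipschitzness of $X\mapsto X^{-1/2}$ on a fixed compact spectral interval) explicitly invokes uniform-in-$p$ bounds on the spectrum of $\Sigma=\Omega\Omega^{\top}$. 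No such condition appears in A1--A3: the theorem only assumes $\Omega$ invertible, and with $p\to\infty$ the implicit uniformity is not cosmetic. Note that $\|R\|=\|\wh\Sigma^{-1/2}\Sigma^{1/2}-I_p\|$, while the only bound available $\Omega$-free is the relative one, $\|\Sigma^{1/2}\wh\Sigma^{-1}\Sigma^{1/2}-I_p\|=\|\wh\Sigma_Z^{-1}-I_p\|=O_p(n^{-1/2}p^{1/2})$; passing from the latter to the former is not condition-number-free (to first order it is a Schur multiplier with entries $a_j^{1/2}/(a_i^{1/2}+a_j^{1/2})$, $a_i$ the eigenvalues of $\Sigma$, applied to the relative error, and its norm can grow with $\kappa(\Sigma)$). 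The same issue contaminates your bound on $\|\wh W-W\|$ through the replacement of $\wh\Sigma^{-1/2}$ by $\Sigma^{-1/2}$. The paper sidesteps all of this: it uses $\Omega^{\top}\wh\Sigma^{-1}\Omega=(n^{-1}\sum_j\tilde Z(s_j)\tilde Z(s_j)^{\top})^{-1}$ (Lemma \ref{prewrtienlem1}), writes the sample whitening as $\wh\Sigma^{-1/2}\Omega=\wh V_{\Omega}\wh\Lambda_{\Omega}\wh U_{\Omega}$ with $\wh\Lambda_{\Omega}\approx I_p$, shows via (\ref{dfwN})--(\ref{dfwn2}) that $\wh W$ is, up to $O_p(n^{-1/2}p^{1/2})$, an orthogonal conjugation of the sample latent matrix $\wh N$, and then analyses the eigenstructure of $\wh N$ directly (Theorem \ref{t2forn}, via Schur complements for the eigenvalues and a resolvent-type linear system for the eigenvector blocks). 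Everything there is invariant to the conditioning of $\Omega$. To repair your route you must either add a uniform conditioning assumption (proving a weaker theorem) or replace the population reference frame $(U_W,W)$ by the sample frame $(\wh V_{\Omega}\wh U_{\Omega},\wh N)$, as the paper does.

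The second, smaller gap is that the bound you yourself flag as the crux, $\|Q_h-EQ_h\|=O_p(n^{-1/2}p^{1/2})$ and hence $\|\wh W-W\|=O_p(n^{-1/2}p^{1/2})$, is only asserted via ``a covering argument''. This is precisely where the paper invests most of its work: Lemmas \ref{lemCZ}, \ref{lemM} and \ref{lemNj} condition on the event $B_Z$, use $\varepsilon$-nets, and use a dyadic decomposition of off-diagonal blocks to obtain the $n^{-1/2}q^{1/2}$ rate for growing blocks under the spatial dependence allowed by A2. Your reorganization (concentrating the quadratic form $Q_h$ rather than the quartic-in-$Z$ matrix $\wh N$) would indeed make a Hanson--Wright-plus-net argument more standard, and is a reasonable plan given the sub-Gaussian linear-form assumption in A1, but as written it is a sketch, not a proof, and the spatially dependent columns still require the summability in (\ref{zcov})--(\ref{fcov}) to be used quantitatively. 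Once these two probabilistic inputs are supplied (and the conditioning issue resolved), your blockwise bookkeeping does deliver (\ref{eigendecGamma0})--(\ref{eigendecLambda1w}) at the stated rates.
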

	Theorem 1 implies that $ {\hat{\Gamma}}_{\Omega,ij} \pcon 0$ for
	any $i\ne j$. Hence the transformed process $ \wh \Omega^{-1} X(\cdot)=\wh\Gamma_{\Omega}Z(\cdot)$ can only be
	divided into the $m$ asymptotically independent random fields of dimensions
	$q_1, \cdots, q_m$ respectively. This is due to the lack of
	separation of the corresponding eigenvalues within each of
	those $m$ groups; see (\ref{nfilamdba1a}). On the other hand,
	Theorem 1 still holds, under some additional conditions, if the components of $Z(\cdot)$ within each
	of those $m$ groups are not independent with each other.
	Then
	this is in the spirit of the so-called
	multidimensional independent component analysis of \cite{c98}.
	In practice, one needs to identify the $m$ latent groups among the $p$ components of $\wh
	\Omega^{-1} X(\cdot)$, which can be carried out
	by adapting the procedures in Section 2.2 of \cite{cgy18}. By (\ref{eigendecLambda1w}),  $\hat{\Lambda}_W$ will indicate how
	those eigenvalues are different from each other; see Condition A3.
	
	Note that Theorem \ref{t2} holds when either $p$ is fixed and finite, or
	$p/n \to 0$ as $n\to \infty$.
	When $p$ is fixed and $m=p$ in Condition A3, all $\hat{\Gamma}_{\Omega,ij}$ reduces to a scale
	and $q_i =1$. Then Corollary \ref{corol1} below follows from Theorem 1 immediately.
	
	\begin{corollary} \label{corol1}
		Let Conditions A1-A3 hold with $m=p$, and $p$ be a fixed integer. Then as $n \to \infty$,
		$\|I_p- \wh \Omega^{-1}\Omega\|  = O_p(n^{-1/2})$.
	\end{corollary}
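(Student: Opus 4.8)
The plan is to obtain the corollary as an immediate specialisation of Theorem \ref{t2} to the case $m=p$, so the work is essentially bookkeeping. First I would note that when $m=p$ the integers in Condition A3 are forced to be $p_i=i$ for $0\le i\le p$, hence every block size $q_i=p_i-p_{i-1}$ equals $1$ and each submatrix $\hat\Gamma_{\Omega,ij}$ in the partition (\ref{eigendecGammaw}) is a scalar. For a $1\times1$ matrix the operator norm and the minimum singular value both coincide with the absolute value, so (\ref{eigendecGamma0}) and (\ref{eigendecGamma1w}) read
\[
|\hat\Gamma_{\Omega,ii}| = 1 + O_p(n^{-1/2}p^{1/2}), \qquad |\hat\Gamma_{\Omega,ij}| = O_p(n^{-1/2}p^{1/2}) \quad (i\ne j).
\]
Since $p$ is a fixed integer in the corollary, the factor $p^{1/2}$ is a constant, and these become $|\hat\Gamma_{\Omega,ii}| = 1 + O_p(n^{-1/2})$ and $|\hat\Gamma_{\Omega,ij}| = O_p(n^{-1/2})$ for $i\ne j$.

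Next I would address the sign indeterminacy of the eigenvectors. The columns of $\hat U_W$ are eigenvectors of $\hat W$ and hence determined only up to sign; by (\ref{eigengammaW}), flipping the sign of the $i$-th column of $\hat U_W$ flips the sign of the $i$-th row of $\hat\Gamma_\Omega=\hat U_W^\top\hat\Sigma^{-1/2}\Omega$, and, as noted after Proposition \ref{prop1}, this does not change the decomposition. Because $|\hat\Gamma_{\Omega,ii}|$ is bounded away from $0$ with probability tending to one, I can choose the signs of the columns of $\hat U_W$ so that $\hat\Gamma_{\Omega,ii}>0$ for every $i$, and then $\hat\Gamma_{\Omega,ii} = 1 + O_p(n^{-1/2})$.

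Finally I would assemble the entrywise bounds: $\hat\Gamma_\Omega - I_p$ is a $p\times p$ matrix whose $p^2$ entries are all $O_p(n^{-1/2})$, so, $p$ being fixed, $\|I_p - \hat\Omega^{-1}\Omega\| = \|I_p - \hat\Gamma_\Omega\| = O_p(n^{-1/2})$ by equivalence of norms in a fixed finite dimension. I do not expect a genuine obstacle here, as the result is a direct consequence of Theorem \ref{t2}; the only points that warrant an explicit word are the reduction of the block-norm statement to the scalar/entrywise statement when all $q_i=1$ and the explicit choice of eigenvector signs that turns the diagonal of $\hat\Gamma_\Omega$ from $\pm1+O_p(n^{-1/2})$ into $1+O_p(n^{-1/2})$.
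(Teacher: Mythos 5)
Your proposal is correct and follows essentially the same route as the paper, which simply observes that with $m=p$ fixed all blocks $\hat\Gamma_{\Omega,ij}$ become scalars with $q_i=1$, so Corollary \ref{corol1} is an immediate specialisation of Theorem \ref{t2} with the factor $p^{1/2}$ absorbed into the constant. Your explicit treatment of the eigenvector sign convention (so that the diagonal entries converge to $+1$ rather than $\pm 1$) is a detail the paper leaves implicit via the remark after Proposition \ref{prop1}, and it does not change the argument.
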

	
	A key condition in Corollary \ref{corol1} for identifying all the columns of
	the mixing matrix is that the eigengap defined as
	\begin{equation} \label{gapdef}
		v_{\rm gap}=
		\min_{1 \le i\ne j \le p} |\la_i - \la_j|
	\end{equation}
	remains bounded away from 0, which is implied by (\ref{nfilamdba1b})
	when $p=m$ is fixed. This condition cannot be fulfilled when $p$ diverges
	(together with $n$). To appreciate the performance of the proposed procedure
	when $p$ is large in relation to $n$, we present Theorem \ref{t3} below which
	indicates that the mixing matrix can still be estimated consistently  but
	at much slower rates when the eigengap $v_{\rm gap}$ decays to 0 provided
	$p$ diverges to $\infty$ not too fast; see Remark 1 below.
	
	\begin{description}
		\item[A4.] $\limsup_{n \rightarrow \infty}v^{-1}_{\rm gap}n^{-1/2}p^{1/2}=0$.
	\end{description}
	
	\begin{theorem} \label{t3}
		Let conditions A1, A2 and A4 hold. Denote by $\hat \gamma_{\Omega,ij}$ the $(i,j)$-th
		entry of matrix $\wh \Gamma_{\Omega} $. Then as $n, p \to
		\infty$, it holds that
		\begin{equation}\label{eigendecGamma2gapW}
			\wh \gamma_{\Omega,ij}=O_p(n^{-1/2} p^{1/2}v^{-1}_{\rm gap}|j-i|^{-1}) \quad {\rm for} \;\;
			1\le i\ne j \le p,
			\quad {\rm and}
		\end{equation}
		\begin{equation}\label{eigendecLambda2gapeW}
			\wh \gamma_{\Omega,ii}=1+O_p(n^{-1}pv^{-2}_{\rm gap} +n^{-1/2}p^{1/2}) \quad {\rm for} \;\; i=1, \cdots, p.
		\end{equation}
		Moreover, (\ref{eigendecLambda1w}) still holds.
	\end{theorem}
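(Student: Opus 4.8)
\textbf{Proof idea for Theorem~\ref{t3}.} The plan is to reduce the statement to an entry-wise perturbation analysis of the eigenvectors of $\wh W$ about those of $W$, exploiting that the eigenvalues $\la_1\ge\cdots\ge\la_p$ of $W$ (the diagonal of $N$) are \emph{ordered}, so that $v_{\rm gap}=\min_\ell(\la_\ell-\la_{\ell+1})$ and hence $|\la_i-\la_j|\ge|i-j|\,v_{\rm gap}$ for every pair $i,j$. This monotonicity, rather than a uniform lower bound on the gap, is what converts a single eigenvalue gap into the $|j-i|^{-1}$ decay appearing in (\ref{eigendecGamma2gapW}).

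First I would re-use the deviation bounds already established for Theorem~\ref{t2}, none of which uses A3: namely $\|\wh W-W\|=O_p(n^{-1/2}p^{1/2})$, $\|\wh\Sigma^{-1/2}-\Sigma^{-1/2}\|=O_p(n^{-1/2}p^{1/2})$, and $\|\frac1n\sum_{i,j}f_h(s_i-s_j)\tilde X(s_i)\tilde X(s_j)^\top\|=O_p(1)$ for each $h$. (For the first bound one splits $\wh W-W=(\wh W-\check W)+(\check W-W)$, where $\check W$ is obtained from $\wh W$ on replacing $\hat\Sigma$ by $\Sigma$, so that $E\check W=W$; the second term is the genuine sample fluctuation and the first is $\|\wh\Sigma^{-1/2}-\Sigma^{-1/2}\|$ times $O_p(1)$ factors.) Weyl's inequality then gives $\|\wh\Lambda_W-\Lambda\|\le\|\wh W-W\|=O_p(n^{-1/2}p^{1/2})$, which is precisely (\ref{eigendecLambda1w}); so that display continues to hold.

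Next, put $B=\wh U_W^\top U_W$, an orthogonal matrix. From $\wh W\wh U_W=\wh U_W\wh\Lambda_W$ and $WU_W=U_W\Lambda$ one obtains the identity $\wh\Lambda_W B-B\Lambda=\wh U_W^\top(\wh W-W)U_W$, i.e. $(\wh\la_{W,i}-\la_j)B_{ij}=[\wh U_W^\top(\wh W-W)U_W]_{ij}$. For $i\ne j$, monotonicity gives $|\la_i-\la_j|\ge|i-j|\,v_{\rm gap}$, and since $\max_i|\wh\la_{W,i}-\la_i|\le\|\wh\Lambda_W-\Lambda\|=O_p(n^{-1/2}p^{1/2})=o_p(v_{\rm gap})$ by A4, one has $|\wh\la_{W,i}-\la_j|\ge\frac12|i-j|\,v_{\rm gap}$ simultaneously over all $i\ne j$ with probability tending to one. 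Bounding the numerator by $\|\wh W-W\|$ then yields $|B_{ij}|=O_p(n^{-1/2}p^{1/2}v_{\rm gap}^{-1}|i-j|^{-1})$, uniformly in $i\ne j$. For the diagonal, orthogonality of $B$ together with $\sum_{\ell\ge1}\ell^{-2}<\infty$ gives $B_{ii}^2=1-\sum_{j\ne i}B_{ij}^2=1-O_p(n^{-1}p\,v_{\rm gap}^{-2})$ uniformly; fixing the (free) signs of the columns of $\wh U_W$ so that $B_{ii}\ge0$ yields $B_{ii}=1+O_p(n^{-1}p\,v_{\rm gap}^{-2})$.

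It remains to pass from $B$ to $\wh\Gamma_\Omega$. By Proposition~\ref{prop1}, $U_W=\Sigma^{-1/2}\Omega$, so $\wh\Gamma_\Omega=\wh U_W^\top\wh\Sigma^{-1/2}\Omega=B+\wh U_W^\top(\wh\Sigma^{-1/2}-\Sigma^{-1/2})\Omega$, and the second term has operator norm $O_p(n^{-1/2}p^{1/2})$ (using $\|\Omega\|=O(1)$), hence each of its entries is $O_p(n^{-1/2}p^{1/2})$. Since the $\la_i$ are uniformly bounded under A1--A2 while $p\to\infty$, $v_{\rm gap}\le(\la_1-\la_p)/(p-1)=O(p^{-1})$, so $v_{\rm gap}^{-1}|j-i|^{-1}$ is bounded below by a positive constant; thus $n^{-1/2}p^{1/2}=O(n^{-1/2}p^{1/2}v_{\rm gap}^{-1}|j-i|^{-1})$, and adding the two contributions gives (\ref{eigendecGamma2gapW}), while $\wh\gamma_{\Omega,ii}=B_{ii}+O_p(n^{-1/2}p^{1/2})=1+O_p(n^{-1}p\,v_{\rm gap}^{-2}+n^{-1/2}p^{1/2})$ gives (\ref{eigendecLambda2gapeW}). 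The main obstacle is not this last step but the probabilistic input $\|\wh W-W\|=O_p(n^{-1/2}p^{1/2})$ — an operator-norm concentration bound for a matrix built from products of sample normalized local covariance matrices of a moderately high-dimensional sub-Gaussian random field — which is, however, shared with Theorem~\ref{t2}; the genuinely new difficulty here is making the eigenvector perturbation both correct and uniform as $v_{\rm gap}\to0$, i.e. checking that the lower bound $\frac12|i-j|\,v_{\rm gap}$ on $|\wh\la_{W,i}-\la_j|$ holds simultaneously over all pairs (clean because the deficiency $\|\wh\Lambda_W-\Lambda\|$ is a single, uniformly controlled quantity) and then carrying the $|i-j|^{-1}$ decay through the sum-of-squares identity.
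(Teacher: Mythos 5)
Your overall scheme is sound and, at its core, parallels the paper's own argument: the paper's Theorem~\ref{t3forN} performs exactly the entrywise eigen-perturbation you describe (the identity $(\widehat\lambda_{j}-\lambda_i)\widehat\gamma_{ij}=\sum_s M_{is}\widehat\gamma_{sj}$, the lower bound $|\lambda_i-\lambda_j|\ge|i-j|v_{\rm gap}$, Weyl-type control of the eigenvalues, and the sum-of-squares/orthogonality trick for the diagonal), except that it is carried out for $\widehat N$ in (\ref{nfihat}) against the exactly diagonal $N$, using the entrywise moment bound $E M_{ij}^2\le C_1 n^{-1}$ of Lemma~\ref{t1}; this yields the sharper intermediate rate $n^{-1/2}v_{\rm gap}^{-1}|i-j|^{-1}$, and the extra $p^{1/2}$ in the theorem only enters through the whitening error. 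Your variant, which uses only the operator norm $\|\widehat W-W\|$, loses a factor $p^{1/2}$ at that stage, but since $v_{\rm gap}^{-1}|i-j|^{-1}$ is bounded below (as $\lambda_1-\lambda_p=O(1)$ by Lemma~\ref{l0z}) the stated rates absorb it, so this loss by itself is harmless.

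The genuine gap is in your probabilistic inputs, which are neither established in the paper nor implied by A1, A2, A4. These conditions place no restriction on $\Omega$ beyond invertibility, yet you invoke $\|\Omega\|=O(1)$, $\|\widehat\Sigma^{-1/2}-\Sigma^{-1/2}\|=O_p(n^{-1/2}p^{1/2})$ (an inverse-square-root perturbation bound that requires $\lambda_{\min}(\Sigma)$, indeed the condition number of $\Sigma=\Omega\Omega^\top$, to be controlled uniformly in $p$), and $\|n^{-1}\sum_{i,j}f_h(s_i-s_j)\tilde X(s_i)\tilde X(s_j)^\top\|=O_p(1)$ (which scales with $\|\Omega\|^2$). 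Your claim that $\|\widehat W-W\|=O_p(n^{-1/2}p^{1/2})$ is ``shared with Theorem~\ref{t2}'' is not accurate: the paper never compares $\widehat W$ with $W$ at all. It writes $\widehat\Sigma^{-1/2}\Omega=\widehat V_\Omega\widehat\Lambda_\Omega\widehat U_\Omega$ and compares $(\widehat V_\Omega\widehat U_\Omega)^\top\widehat W(\widehat V_\Omega\widehat U_\Omega)$ with $\widehat N$ (see (\ref{dfwN}) and (\ref{dfwn2})), using Lemma~\ref{prewrtienlem1}, i.e.\ $\Omega^\top\widehat\Sigma^{-1}\Omega=\{n^{-1}\sum_j\tilde Z(s_j)\tilde Z(s_j)^\top\}^{-1}$, so that the only quantities needed are $\|\widehat\Lambda_\Omega-I_p\|$ and $\|\widehat N-N\|$ (a whole-matrix application of Lemma~\ref{lemNj}); both are invariant to the conditioning of $\Omega$, which is precisely why the theorem needs no assumption on $\Omega$. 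As written, your proof establishes the result only under an additional uniformly-bounded-condition-number assumption on $\Omega$; to prove the theorem as stated you should replace $U_W=\Sigma^{-1/2}\Omega$ by the sample polar factor $\widehat V_\Omega\widehat U_\Omega$ and run your $B$-matrix argument against $\widehat N$ and $N$, which is exactly the paper's transfer step.
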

	
	\noindent
	{\bf Remark 1}.
	Note that $ \la_1 - \la_p \ge (p-1) v_{\rm gap}$, and, therefore,
	$v_{\rm gap} = O(p^{-1})$. Thus it follows from condition A4 that
	$p=o(n^{1/3})$, i.e. in order to fully identify the mixing matrix, $p$ cannot
	be too large in the sense that $p/n^{1/3} \to 0$.

	\section{Numerical illustration} \label{sec4}
	
	\subsection{Simulation}
	
	We illustrate the finite sample properties of the proposed method by simulation.
	We set the dimension of random fields at $p=$3 and 50, and the sample size $n$ (i.e. the number of locations) between 100 to 2000. The coordinates of those $n$ locations are drawn independently from $U(0, 50)^2$.
	Both Gaussian and  non-Gaussian random fields are used.
	Also included in the simulation is the method of \cite{bgnrv20}. For each setting, we replicate the simulation 1000 times.
	
	The $p$-variate random fields $X(\cdot)$ are generated according to  (\ref{b1})
	in which $Z_1(\cdot), \cdots, Z_p(\cdot)$ are $p$ independent random fields with either $N(0, 1)$ or $t_5$ marginal distributions, and the Matern correlation function
	\[
	\rho(s) = 2 ^{1-\kappa} \Gamma(\kappa)^{-1} (s/\phi)^\kappa B_\kappa(s/\phi),
	\]
	where $\kappa>0$ is the shape parameter, $\phi>0$ is the range parameter, $\Gamma(\cdot)$ is the Gamma function, and $B_\kappa$ is the modified Bessel function of the second kind of order $\kappa$. We set different values of $(\kappa, \phi)$ for different $Z_j$. More precisely $\kappa$'s are drawn independently from $U(0,6)$, and $\phi$'s are drawn independently from $U(0, 2)$.
	{ The mixing matrix $\Omega$ in (\ref{b1}) is set to be
		the $p\times p$ identity matrix.}
	
	To measure the accuracy of the estimation for $\Omega$, we define
	\[
	D(\Omega, \hat \Omega) = {1 \over 2 p(\sqrt{p}-1)} \sum_{j=1}^p
	\Big\{ {(\sum_{1\le i \le p} d_{ij}^2)^{1/2} \over \max_{1\le i \le p} |d_{ij}|} + { (\sum_{1\le i \le p} d_{ji}^2)^{1/2} \over \max_{1 \le i \le p} |d_{ji}|}  - 2 \Big\},
	\]
	where $d_{ij}$ is the $(i,j)$-th element of matrix $\Omega^{-1} \hat \Omega$.
	As $$p^{-1/2} \leq \max_{1 \le i \le p} |d_{ij}|\Big/\big(\sum_{1\le i \le p} d_{ij}^2
	\big)^{1/2}
	\leq 1.$$
	it holds that $D(\Omega, \hat \Omega) \in [0, 1]$, and $D(\Omega, \hat \Omega)=0$ if $\hat \Omega$ is a column permutation and/or column sign changes of $\Omega$.
	
	We set $k=10$ in (\ref{nfihatX}), and
	\begin{equation} \label{10kernels}
		f_h(s) = 1(c_{h-1} < \|s \| \le c_h), \qquad h=1, \cdots, 10,
	\end{equation}
	where $0=c_0 < c_1 < \cdots < c_{10}=\infty$ are specified such that for each $h=1, \cdots, 10$,
	$\{ (s_i, s_j): 1\le i < j\le n, \; c_{h-1} < \|s_i - s_j \| \le c_h \}$ contains
	the 10\% of the total pairs $ (s_i, s_j)$, $1\le i < j\le n$.
	
	The boxplots of $D(\Omega, \hat \Omega)$ obtained in the 1000 replications are presented in Figures \ref{dp3g}--\ref{dp50ng}.
	Estimations by the method of \cite{bgnrv20} are computed using the R-function {\tt sbss}, provided in R-package SpatialBSS.
	In addition to the multiple kernel estimation,
	we also compute the estimates with a single kernel, using each of the
	10 kernels in (\ref{10kernels}),
	For computing the multiple kernel method of \cite{bgnrv20}, we set the maximum number of iterations at 2000.
	By using a single kernel, the method of \cite{bgnrv20}  leads to almost identical estimates as those obtained by the proposed method (with the same single kernel). Therefore we omit the detailed results.
	
	Figures \ref{dp3g} -- \ref{dp50ng} indicate clearly that
	both the methods with multiple kernels outperform most of those
	with a single kernel, and the proposed method
	outperforms the multiple kernel method of \cite{bgnrv20} especially when $p$ is large (i.e. $p=50$).
	The proposed method with multiple kernels performs about the same as that with the best single kernel (i.e. Kernel 1 $f_1(\cdot)$). The accuracy of estimation improves with the increase in the number of observations $n$, which can be  seen as a decrease in $D(\Omega, \hat \Omega)$ in Figures \ref{dp3g}--\ref{dp50ng}. Among all single kernel methods, those using kernel $f_1$ perform  the best, as those estimations include the 10\% nearest locations. Indeed the Matern correlation is the strongest at the smallest distance. On the other hand, the performances for the
	Gaussian and the non-Gaussian random fields are
	about the same. See  Figures \ref{dp3g} \& \ref{dp3ng}, and Figures \ref{dp50g} \& \ref{dp50ng}.
	
	The iterative algorithm for implementing the multiple kernel method of \cite{bgnrv20} is to solve a nonlinear optimization problem with $p^2$ parameters. When $p=50$, it failed to converge within the 2000 iterations in some of the 1000 simulation replications. The numbers of failures with $n=$100, 500, 1000 and 2000 are, respectively, $3,1,2$ and $1$ for the Gaussian random fields,  and $6,3,3$ and $1$ for the non-Gaussian random fields. We only include the results from the converged replications in the figures.
	
	\begin{figure}[H]
		\vskip 0.2in
		\begin{center}
			\includegraphics[height=2.4in, width=5in]{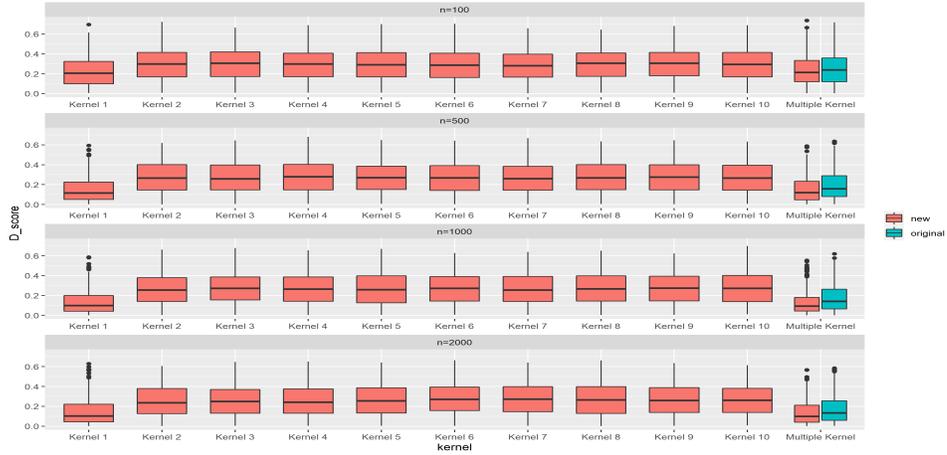}
			\caption{Boxplots of $D(\Omega, \hat \Omega)$ for the proposed method using the 10 kernels (new) in (\ref{10kernels}), or each of those 10 kernels (Kernel 1, $\cdots$, Kernel 10), and the method of \cite{bgnrv20} using the 10 kernels (original) in a simulation with 1000 replications for the Gaussian random fields. The number of observations $n$ is  100, 500, 1000 or 2000 (from top to bottom), and the dimension of random fields is $p=3$. }
			\label{dp3g}
		\end{center}
		\vskip -0.2in
	\end{figure}
	
	\begin{figure}[H]
		\vskip 0.2in
		\begin{center}
			\includegraphics[height=2.4in, width=5in]{figure/dp3ng.png}
			\caption{Boxplots of $D(\Omega, \hat \Omega)$ for the proposed method using the 10 kernels (new) in (\ref{10kernels}), or each of those 10 kernels (Kernel 1, $\cdots$, Kernel 10), and the method of \cite{bgnrv20} using the 10 kernels (original) in a simulation with 1000 replications for the non-Gaussian random fields. The number of observations $n$ is  100, 500, 1000 or 2000 (from top to bottom), and the dimension of random fields is
				$p=~3$.}
			\label{dp3ng}
		\end{center}
		\vskip -0.2in
	\end{figure}
	
	\begin{figure}[H]
		\vskip 0.2in
		\begin{center}
			\includegraphics[height=2.4in, width=5in]{figure/dp50g.png}
			\caption{Boxplots of $D(\Omega, \hat \Omega)$ for the proposed method using the 10 kernels (new) in (\ref{10kernels}), or each of those 10 kernels (Kernel 1, $\cdots$, Kernel 10), and the method of \cite{bgnrv20} using the 10 kernels (original) in a simulation with 1000 replications for the Gaussian random fields. The number of observations $n$ is  100, 500, 1000 or 2000 (from top to bottom), and the dimension of random fields is $p=50$.}
			\label{dp50g}
		\end{center}
		\vskip -0.2in
	\end{figure}
	
	\begin{figure}[H]
		\vskip 0.2in
		\begin{center}
			\includegraphics[height=2.4in, width=5in]{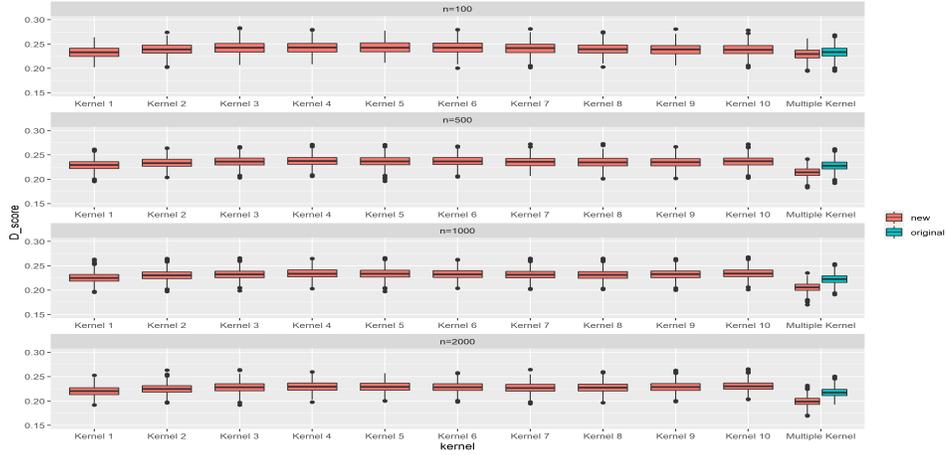}
			\caption{Boxplots of $D(\Omega, \hat \Omega)$ for the proposed method using the 10 kernels (new) in (\ref{10kernels}), or each of those 10 kernels (Kernel 1, $\cdots$, Kernel 10), and the method of \cite{bgnrv20} using the 10 kernels (original) in a simulation with 1000 replications for the non-Gaussian random fields. The number of observations $n$ is  100, 500, 1000 or 2000 (from top to bottom), and the dimension of random fields is $p=50$.}
			\label{dp50ng}
		\end{center}
		\vskip -0.2in
	\end{figure}
	
	The estimated eigengaps for the proposed method for the Gaussian
	random fields are presented in Figures \ref{egp3g} and \ref{egp50g}. As $n$ increases, the eigengap also increases. Under low-dimensional setting $p=3$, the estimates based on single kernel $f_1$ entail the largest eigengaps and the smallest estimation errors $D(\Omega, \hat \Omega)$ (see also Theorem \ref{t3}).
	However when $p=50$, using the multiple kernels leads to
	the largest eigengaps and the smallest estimation errors.
	The patterns with the non-Gaussian random fields are similar  and  not reported here to save  space.

	\begin{figure}[H]
		\vskip 0.2in
		\begin{center}
			\includegraphics[height=2.5in, width=5in]{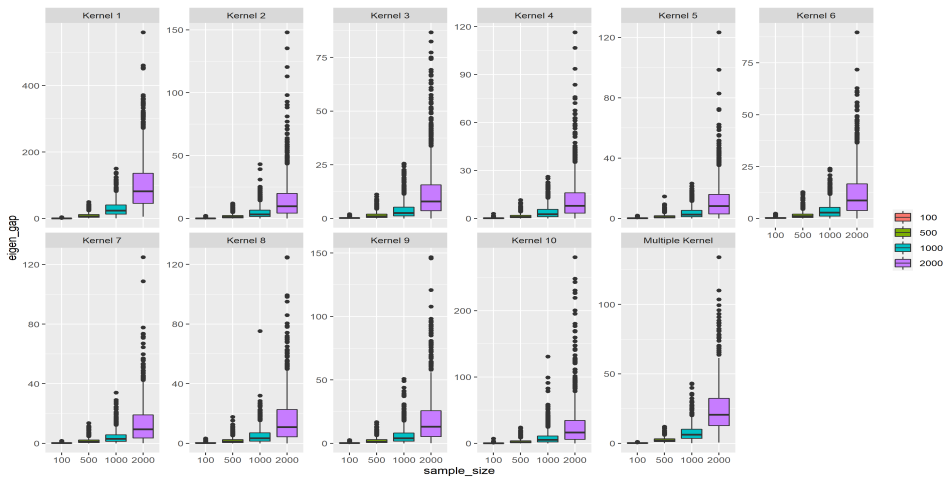}
			\caption{Boxplots of the estimated eigengaps of the proposed method using the 10 kernels (Multiple kernels) in (\ref{10kernels}), or each of those 10 kernels (Kernel 1, $\cdots$, Kernel 10) for the Gaussian random fields. Number of observations $n$ is set at 100, 500, 1000 and 2000, the dimension of random fields is $p=3$.}
			\label{egp3g}
		\end{center}
		\vskip -0.2in
	\end{figure}

	\begin{figure}[H]
		\vskip 0.2in
		\begin{center}
			\includegraphics[height=2.5in, width=5in]{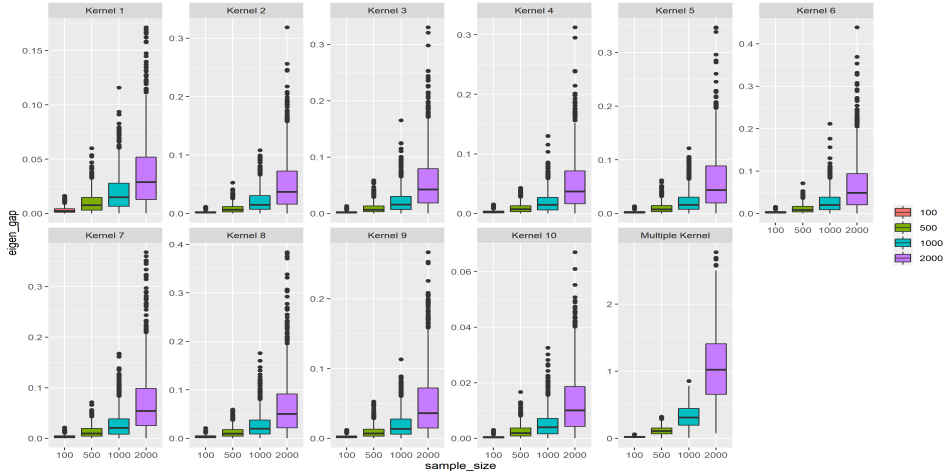}
			\caption{Boxplots of the estimated eigengaps of the proposed method using the 10 kernels (Multiple kernels) in (\ref{10kernels}), or each of those 10 kernels (Kernel 1, $\cdots$, Kernel 10) for the Gaussian random fields. Number of observations $n$ is set at 100, 500, 1000 and 2000, the dimension of random fields is $p=50$.}
			\label{egp50g}
		\end{center}
		\vskip -0.2in
	\end{figure}

	\subsection{A real data example}
	
	We apply the proposed method to the moss data from the Kola project in the R package {\tt StatDa} (See \cite{f15}). The data consists of chemical elements discovered in terrestrial moss at the 594 locations in northern Europe; see the map in Fig.D.1 of \cite{bgnrv20}.
	More information on the data is presented in \cite{rfgd08}.
	Following the lead of \cite{nofr15} and \cite{bgnrv20}, we apply the so-called isometric-log-ratio transformation to the 31 compositional chemical elements in the data. The transformed data are  used in our analysis with $n=594$ and $p=30$. We standardize the data first such that the sample mean is 0 and the sample variance is $I_{30}$.
	
	We apply the proposed estimation method with 10 kernels specified as in (\ref{10kernels}). The scores of the first six independent components (IC), corresponding to the six largest eigenvalues of $\hat W$ (see table \ref{egreal1}), are plotted in Figure \ref{NIC}; showing some interesting spatial patterns. For example, the 1st IC can be viewed as a contrast between the locations in the west and those in the east, and the 2nd IC is that between the north and the south. Figure \ref{corr} displays the absolute correlation coefficients between the first twelve ICs and those obtained in \cite{nofr15} which was referred as `gold standard' by \cite{bgnrv20}. While the ICs derived from the two methods differ from each other, the two sets of ICs correlate with each other significantly. For example the correlation between the 1st IC derived from our new method and the 2nd IC obtained in \cite{nofr15} is 0.92. Note that the `gold standard'
	estimation was obtained using the kernel specified with the relevant subject knowledge. In contrast our estimation is based on
	the multiple kernels defined generically in (\ref{10kernels}).
	
	The six largest eigenvalues of $\hat W$ are listed in Table \ref{egreal1}. The eigengaps $\Delta_i= \hat \la_{i-1} - \hat \la_i$ for $i=7, \cdots, 30$ are plotted in Figure \ref{egreal2}. It is clear that the eigengaps among the 13 largest eigenvalues are large. Based on Theorem \ref{t2}, we have
	\begin{equation}\label{eigendecGammawrealdata}
		\hat\Omega^{-1}\Omega=\wh\Gamma_{\Omega}=\begin{pmatrix}
			\hat\Gamma_{\Omega,aa} &
			\hat\Gamma_{\Omega,ab}\\
			{\hat{\Gamma}}_{\Omega,ba} &   {\hat{\Gamma}}_{\Omega,bb}
		\end{pmatrix},
	\end{equation}
	where $\hat\Gamma_{\Omega,aa}$ is a $12 \times 12$ matrix satisfying $\|\hat\Gamma_{\Omega,aa}-I_{12}\|=O_p(n^{-1/2}p^{1/2})$.  Theorem \ref{t2} also shows that $\|\hat\Gamma_{\Omega,ab}\|=O_p(n^{-1/2}p^{1/2})$, $\|\hat\Gamma_{\Omega,ba}\|=O_p(n^{-1/2}p^{1/2})$ and $\|\hat\Gamma_{\Omega,bb}\|=1+O_p(n^{-1/2}p^{1/2})$. Thus, we
	are reasonably confident that the estimated first 12 ICs are reliable. Moreover, we rewrite $\hat\Omega^\top\hat\Omega$ as
	\begin{equation}\label{eigendecGammawrealdata22}
		\hat U_{W}^\top\hat \Sigma\hat U_{W}=\hat\Omega^\top\hat\Omega=\begin{pmatrix}
			\hat\Omega_{aa} &
			\hat\Omega_{ab}\\
			{\hat{\Omega}}_{ba} &   {\hat{\Omega}}_{bb}
		\end{pmatrix},
	\end{equation}
	where $\hat\Omega_{aa}$ is a $12 \times 12$ matrix.
	We gain $tr(\hat\Omega_{aa})=6.62$ and $tr(\hat\Omega^\top\hat\Omega)=8.89$ by calculating. Thus,  the major variation of the 30 variables are largely reflected by the 12 largest ICs.

	\begin{figure}
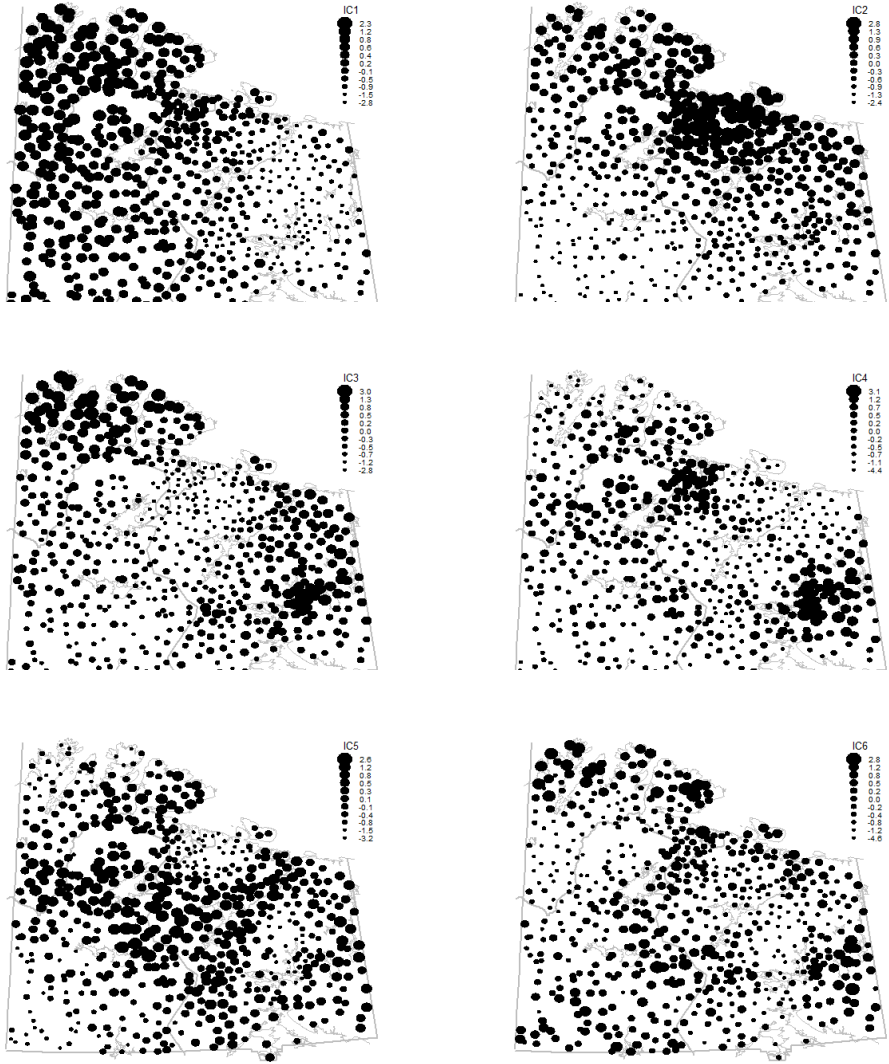

		\centering\centering
		\begin{tabular}{@{}cc@{}}
			\includegraphics[scale=0.4]{figure/NIC1.png} & \includegraphics[scale=0.4]{figure/NIC2.png}\\[-10ex]
			\includegraphics[scale=0.4]{figure/NIC3.png} & \includegraphics[scale=0.4]{figure/NIC4.png}\\[-10ex]
			\includegraphics[scale=0.4]{figure/NIC5.png} & \includegraphics[scale=0.4]{figure/NIC6.png}\\[-5ex]
		\end{tabular}
		\caption{The scores of the first six independent components over the 594 observation
			locations.}
		\label{NIC}
	\end{figure}
	
	\begin{figure}
		\centering
		\centerline{\includegraphics[scale=0.5]{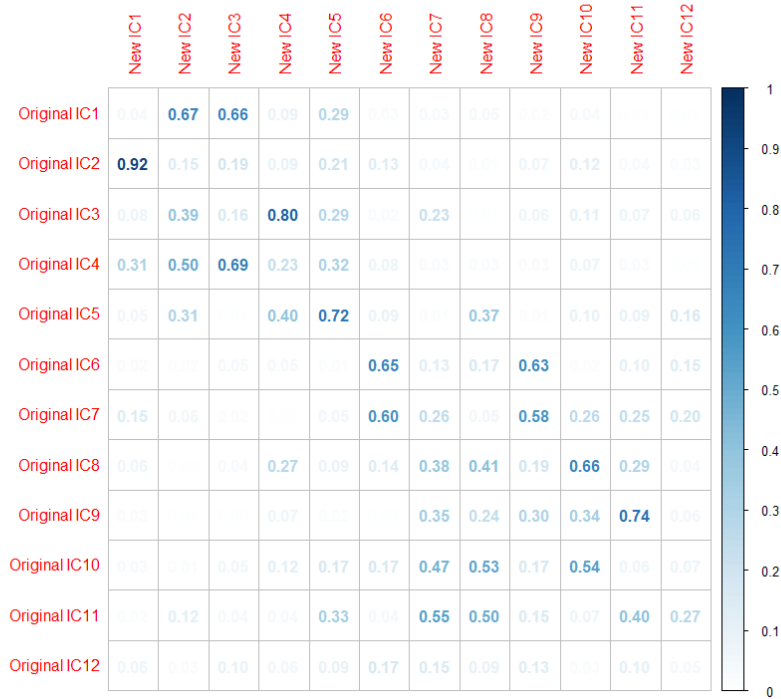}}
		\caption{The absolute correlation coefficients between the first 12 independent
			components derived from the proposed method (New) and those obtained in
			\cite{nofr15} (Original).}
		\label{corr}
	\end{figure}
	
	\begin{table*}[]
		\centering
		\caption{The six largest eigenvalues of $\hat W$ (with $k=10$) for the real data example.} \label{egreal1}
		\begin{tabular}{|c|cccccc|}
			\hline
			$i$ & 1 & 2 & 3 & 4 & 5 & 6 \\
			\hline
			$\hat\lambda_i$ & 1136.50 & 877.59 & 444.21 & 161.34 & 126.16 & 81.13 \\
			\hline
		\end{tabular}
	\end{table*}
	\begin{figure}
		\centering\centering
		\centerline{\includegraphics[scale=0.65]{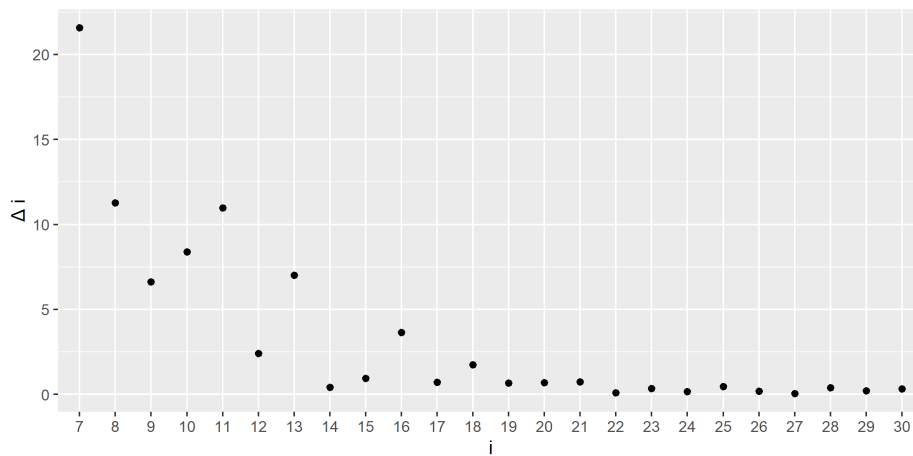}}
		\caption{The estimated eigengaps $\Delta_i=\hat \la_{i-1} - \hat \la_{i}$ for $i=7, \cdots, 30$ on real data example from proposed method with multiple kernel.}
		\label{egreal2}
	\end{figure}

	\section{Proofs} \label{sec5}
\subsection{Some useful lemmas}
$C_0$ which is defined in Condition A1  and $A$ which is defined in Condition A2 are two important notations in our proofs. Without loss of generality, we assume that $C_0 \leq A$. It means that
\begin{equation} \label{subgaussianassuA}
	\sup_{\beta \geq 1, 1\leq i \leq p}\beta^{-1/2}\{E|Z_i(s)|^\beta\}^{1/\beta} \leq A.
\end{equation}
Thus, any fixed moment  of $Z_g(s)$ can be bounded by a constant only depending on $A$.

Let $Z$ be the $p \times n$ matrix with $(Z_i(s_1),\cdots,Z_i(s_n))=Z^i$ as
its $i$-th row.

\begin{lemma} \label{l0z}
	Let conditions A1 and A2 hold, and $p=o(n)$. Then there exists
	$\lambda_{max}$ depending only on $A$ such that
	\begin{equation}\label{nfilamdba1max}
		\max_{1 \leq g \leq p}\lambda_g \leq \lambda_{max} < \infty.
	\end{equation}
\end{lemma}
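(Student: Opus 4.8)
The plan is to bound every diagonal entry of the matrix $N$ in (\ref{nfi}) uniformly; since $\la_1,\dots,\la_p$ are just these entries re-ordered, this yields the claim. Write $(A_h)_{g\ell}=\frac1n\sum_{i,j=1}^n f_h(s_i-s_j)\tilde Z_g(s_i)\tilde Z_\ell(s_j)$, so that the $g$-th diagonal entry of $N$ equals $\frac1k\sum_{h=1}^k\sum_{\ell=1}^p E[(A_h)_{g\ell}^2]$. It therefore suffices to show that the diagonal term $\ell=g$ is $O(1)$ and that the off-diagonal terms satisfy $\sum_{\ell\neq g}E[(A_h)_{g\ell}^2]=O(p/n)=o(1)$, with all $O(\cdot)$ constants depending only on $A$ and the fixed quantities $d,\alpha,\Delta,k$.

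The first ingredient is a packing estimate: because $\|s_i-s_j\|\ge\Delta$ for $i\ne j$, a standard volume/disjointness argument gives $\sup_n\sup_{1\le i\le n}\sum_{j=1}^n(1+\|s_i-s_j\|^{d+\alpha})^{-1}\le C_\Delta<\infty$. Together with (\ref{fcov}) this yields $\sum_{j}|f_h(s_i-s_j)|\le AC_\Delta$ and $\sum_{i,j}|f_h(s_i-s_j)|\le nAC_\Delta$. The second ingredient controls the centered autocovariances $c_g(s,t):=\cov(\tilde Z_g(s),\tilde Z_g(t))$: expanding $\tilde Z_g=Z_g-\bar Z_g$ and invoking (\ref{zcov}) and the packing bound shows that the three averaging corrections are each $O(1/n)$, so $|c_g(s_i,s_j)|\le A(1+\|s_i-s_j\|^{d+\alpha})^{-1}+C/n$ and hence $\sup_i\sum_j|c_g(s_i,s_j)|\le C$ uniformly in $n,p,g$.

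For the diagonal term, Hölder's inequality plus the sub-Gaussianity in Condition A1 (under which every $\tilde Z_g(s_i)$ has a uniformly bounded fourth moment, the centering contributing only an $O(1/n)$ perturbation) give $|E[\tilde Z_g(s_i)\tilde Z_g(s_j)\tilde Z_g(s_{i'})\tilde Z_g(s_{j'})]|\le C$; substituting into the quadruple sum defining $E[(A_h)_{gg}^2]$ and using $\sum_{i,j}|f_h(s_i-s_j)|\le nAC_\Delta$ bounds it by $C(AC_\Delta)^2=O(1)$. For $\ell\ne g$, the independence of $Z_g$ and $Z_\ell$ factorizes the fourth moment into $c_g(s_i,s_{i'})\,c_\ell(s_j,s_{j'})$; then summing the four indices one at a time — alternately using the uniform summability of $|f_h(s_i-\cdot)|$ and of $c_g(s,\cdot),c_\ell(s,\cdot)$, and finishing with $\sum_i 1=n$ — gives $E[(A_h)_{g\ell}^2]\le C/n$. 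Summing over $h\le k$ and over the at most $p$ indices $\ell$ yields a bound $C_1+C_2 p/n$, which by $p=o(n)$ is eventually dominated by an absolute constant $\la_{max}$; this bound is uniform in $g$, proving the lemma.

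The step needing the most care is the off-diagonal analysis: since $p\to\infty$, the crude Hölder bound used for the diagonal term is far too weak, and one must squeeze out the extra factor $n^{-1}$ by simultaneously exploiting the independence across latent components and the polynomial decay of the autocovariances — this is exactly where Condition A2 and the assumption $p=o(n)$ are used. The tracking of the $O(1/n)$ centering corrections in $c_g$ (and in the fourth moments) is routine but must be carried out to be sure no term of a larger order is overlooked.
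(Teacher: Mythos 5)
Your proof is correct and follows essentially the same route as the paper: the same decomposition of $\lambda_g$ into the own-component term ($\ell=g$) and the cross terms ($\ell\ne g$), with the cross terms handled exactly as in the paper --- factorize the fourth moment via the independence of $Z_g$ and $Z_\ell$, then use the decay conditions (\ref{zcov})--(\ref{fcov}) together with the minimum-separation (packing) bound implied by $\|s_i-s_j\|\ge\Delta$ to obtain a per-component bound $O(n^{-1})$ and hence $O(p/n)=o(1)$ in total. The only real difference is the diagonal term: the paper writes it as the quadratic form $n^{-1}Z^g(I_n-n^{-1}1_{n\times n})T_h(I_n-n^{-1}1_{n\times n})(Z^g)^\top$ and bounds its second moment via $\|T_h\|=O(1)$ plus sub-Gaussianity of the process, whereas you bound the quadruple sum directly with a uniform H\"older fourth-moment bound and the $L^1$ estimate $\sum_{i,j}|f_h(s_i-s_j)|=O(n)$; both arguments are valid, and yours is arguably more elementary and self-contained. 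You are also more careful than the paper in tracking the $O(1/n)$ centering corrections in the covariances of $\tilde Z_g$ and in stating that the constants depend on $A$ together with the fixed quantities $d,\alpha,\Delta,k$ (the paper's ``depending only on $A$'' implicitly absorbs these as well); neither point changes any rate, so there is no gap.
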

\begin{proof} For any $g=1, \cdots, p$,  (\ref{nfi}) implies that
	\begin{eqnarray}\label{nfi222}
		&&\lambda_g=\frac{1}{k}\sum_{h=1}^k\sum_{u=1}^pE[\frac{1}{n}\sum_{i,j=1}^nf_h(s_i-s_j) \tilde Z_g(s_i) \tilde Z_u(s_j)]^2\\
		\nonumber=&&\frac{1}{k}\sum_{h=1}^k\sum_{u \neq
			g}E[\frac{1}{n}\sum_{i,j=1}^nf_h(s_i-s_j) \tilde Z_g(s_i) \tilde Z_u(s_j)]^2\;
		+\; \frac{1}{k}\sum_{h=1}^kE[\frac{1}{n}\sum_{i,j=1}^nf_h(s_i-s_j) \tilde Z_g(s_i) \tilde Z_g(s_j)]^2.
	\end{eqnarray}
	We consider the first part $u \neq g$ for each $h$,
	\begin{eqnarray*}
		&&\sum_{u \neq g}E[\frac{1}{n}\sum_{i,j=1}^nf_h(s_i-s_j) \tilde Z_g(s_i) \tilde Z_u(s_j)]^2\\
		=&&\sum_{u \neq g}\frac{1}{n^2}\sum_{i,j,\tilde{i},\tilde{j}=1}^nf_h(s_i-s_j)f_h(s_{\tilde{i}}-s_{\tilde{j}})E [ \tilde Z_g(s_i) \tilde Z_u(s_j) \tilde Z_g(s_{\tilde{i}}) \tilde Z_u(s_{\tilde{j}})]\\
		=&&\sum_{u \neq g}\frac{1}{n^2}\sum_{i,j,\tilde{i},\tilde{j}=1}^nf_h(s_i-s_j)f_h(s_{\tilde{i}}-s_{\tilde{j}})E[ \tilde Z_g(s_i) \tilde Z_g(s_{\tilde{i}})]E[ \tilde Z_u(s_j) \tilde Z_u(s_{\tilde{j}})]\\
		\leq &&\sum_{u \neq g}\frac{1}{n^2}\sum_{i,j,\tilde{i},\tilde{j}=1}^n\frac{A}{1+\|s_i-s_j\|^{d+\alpha}}\frac{A}{1+\|s_{\tilde{i}}-s_{\tilde{j}}\|^{d+\alpha}}\frac{A}{1+\|s_{i}-s_{\tilde{i}}\|^{d+\alpha}}\frac{A}{1+\|s_{j}-s_{\tilde{j}}\|^{d+\alpha}}.
	\end{eqnarray*}
	The last inequality is from (\ref{zcov}) and (\ref{fcov}).
	This, together with $p=o(n)$ and $\|s_i-s_j\|\geq \triangle$ for all $n \geq 2$ and $1 \leq i \neq j \leq n$, implies that
	\begin{eqnarray}\label{nfi223}
		\frac{1}{k}\sum_{h=1}^k\sum_{u \neq g}E[\frac{1}{n}\sum_{i,j=1}^nf_h(s_i-s_j) \tilde Z_g(s_i) \tilde Z_u(s_j)]^2=O(A^4n^{-1}p)=o(1).
	\end{eqnarray}
	Thus we only need to consider $E[\frac{1}{n}\sum_{i,j=1}^nf_h(s_i-s_j) \tilde Z_g(s_i) \tilde Z_g(s_j)]^2$. Since $Z^g=(Z_g(s_1),\cdots,Z_g(s_n))$ and
	\begin{eqnarray}\label{zgtiledz}
		(\tilde Z_g(s_1),\cdots,\tilde Z_g(s_n))=Z^g[I_n-n^{-1}1_{n \times n}].
	\end{eqnarray}
	We can rewrite it as
	$E(\frac{1}{n}Z^g[I_n-n^{-1}1_{n \times n}]T_h[I_n-n^{-1}1_{n \times n}](Z^g)^\top)^2$,
	where $T_h$ is a $n \times n$ matrix with the $(i,j)$th entry
	$f_h(s_i-s_j)/2+f_h(s_j-s_i)/2$.
	Note that $\frac{1}{n}Z^g[I_n-n^{-1}1_{n \times n}]T_h[I_n-n^{-1}1_{n \times n}](Z^g)^\top$ is a quadratic form  and $Z_g(s)$ is
	a sub-Gaussian process. (\ref{fcov}) implies that $\|T_h\| \leq \tilde{C}$,
	where $\tilde{C}$ only depends on $A$. These, together with (\ref{zcov}), imply that there exists a positive constant $\tilde{C}_1$ depending only on $A$ such that
	\begin{eqnarray*}
		\frac{1}{k}\sum_{h=1}^kE[\frac{1}{n}\sum_{i,j=1}^nf_h(s_i-s_j) \tilde Z_g(s_i) \tilde Z_g(s_j)]^2 \leq \tilde{C}_1.
	\end{eqnarray*}
	This, together with (\ref{nfi222})- (\ref{nfi223}), implies that
	$
	\lambda_g \leq 2\tilde{C}_1,
	$
	for any $1 \leq g \leq p$.
	We complete the proof.
\end{proof}

\begin{lemma}\label{l1z}
	Let conditions A1 and A2 hold. For any $n \times n$ non-random symmetric matrix $Q$
	with bounded $\|Q\|$,   there exists a constant $C>0$  depending only on
	$A$ and $\lambda_{max}$ for which
	\begin{equation}\label{Zasy0}
		\max_{1 \leq g,u \leq p} var[\frac{1}{n}\sum_{i,j=1}^nQ_{ij}Z_g(s_i)Z_{u}(s_j)] \leq C\|Q\|^2n^{-1}.
	\end{equation}
	Here $Q_{ij}$ is the $(i,j)-$th entry of $Q$.
\end{lemma}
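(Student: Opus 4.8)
The plan is to fix the indices $g$ and $u$, write $\mathbf Y_g=(Z_g(s_1),\dots,Z_g(s_n))^\top$ so that the target is $\var(\tfrac1n\mathbf Y_g^\top Q\mathbf Y_u)$, bound this uniformly in $(g,u)$, and take the maximum at the end. Everything is driven by two structural facts about the $n\times n$ covariance matrix $\Sigma_g$ of $\mathbf Y_g$, whose $(i,j)$-entry is $K_g(s_i-s_j)$ and whose diagonal entries equal $K_g(0)=1$ by (\ref{b2}). The annulus/Riemann-sum estimate already used in the proof of Lemma \ref{l0z} (the $s_i$ are $\Delta$-separated and $|K_g(v)|\le A/(1+\|v\|^{d+\alpha})$) produces a constant $C_K$, depending only on $A$ and the fixed constants $d,\alpha,\Delta$, with $\max_{1\le i\le n}\sum_{j\ne i}|K_g(s_i-s_j)|\le C_K$; hence
\[
\norm{\Sigma_g}\le 1+C_K\qquad\text{and}\qquad \norm{\Sigma_g}_F^2=n+\sum_{i\ne j}K_g(s_i-s_j)^2\le n+A\,C_K\,n=:C_3\,n,
\]
the first being the row-sum bound for a symmetric matrix and the second using $K_g(v)^2\le A|K_g(v)|$ for $v\ne 0$; the same holds with $\Sigma_u$. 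Writing $Z_g(s_i)=\mu_{0,g}+Z_g^c(s_i)$ with $Z_g^c$ centred and noting $|\mu_{0,g}|\le E|Z_g(s)|\le C_0\le A$ (take $\beta=1$ in (\ref{subgaussianassu})), one expands $\tfrac1n\mathbf Y_g^\top Q\mathbf Y_u$ as a nonrandom constant plus at most two terms linear in the centred process plus one ``main'' term; since there are a bounded number of summands, $\var(\sum_r a_r)\le 4\sum_r\var(a_r)$ reduces the task to bounding the variance of each piece.

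The constant contributes nothing. A generic linear term has the form $\tfrac{c}{n}\sum_i r_i Z_g^c(s_i)$ with $|c|\le 2A$ and $\mathbf r=Q\mathbf 1$ (using that $Q$ is symmetric), so $\|\mathbf r\|\le\norm{Q}\sqrt n$ and its variance equals $\tfrac{c^2}{n^2}\mathbf r^\top\Sigma_g\mathbf r\le\tfrac{c^2}{n^2}\norm{\Sigma_g}\,\norm{Q}^2 n\le 4A^2(1+C_K)\,\norm{Q}^2 n^{-1}$. When $g\ne u$ the main term $\tfrac1n\sum_{i,j}Q_{ij}Z_g^c(s_i)Z_u^c(s_j)$ is mean zero, and the independence of $Z_g(\cdot)$ and $Z_u(\cdot)$ gives
\[
\var\Big(\tfrac1n\sum_{i,j}Q_{ij}Z_g^c(s_i)Z_u^c(s_j)\Big)=\tfrac{1}{n^2}\sum_{i,j,k,l}Q_{ij}Q_{kl}K_g(s_i-s_k)K_u(s_j-s_l)=\tfrac{1}{n^2}\tr\big(Q\Sigma_g Q\Sigma_u\big)\le \tfrac{1}{n^2}\norm{Q}^2\norm{\Sigma_g}_F\norm{\Sigma_u}_F\le C_3\,\norm{Q}^2 n^{-1},
\]
using $|\tr(AB)|\le\norm{A}_F\norm{B}_F$ and $\norm{QA}_F\le\norm{Q}\,\norm{A}_F$. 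These bounds used only second moments of $Z(\cdot)$ together with the crude bound on the means; the sub-Gaussian hypotheses of A1 are not yet needed.

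They are needed for the last piece, the genuine quadratic form $V=\tfrac1n\sum_{i,j}Q_{ij}x_ix_j$ with $x_i=Z_g^c(s_i)$, arising when $g=u$. The moment--cumulant identity $\cov(x_ix_j,x_kx_l)=\cov(x_i,x_k)\cov(x_j,x_l)+\cov(x_i,x_l)\cov(x_j,x_k)+\mathrm{cum}(x_i,x_j,x_k,x_l)$ and the symmetry of $Q$ give
\[
\var(V)=\tfrac{1}{n^2}\Big[\,2\,\tr\big((Q\Sigma_g)^2\big)+\sum_{i,j,k,l}Q_{ij}Q_{kl}\,\mathrm{cum}\big(Z_g(s_i),Z_g(s_j),Z_g(s_k),Z_g(s_l)\big)\Big].
\]
The first term is at most $\tfrac{2}{n^2}\norm{Q\Sigma_g}_F^2\le\tfrac{2}{n^2}\norm{Q}^2\norm{\Sigma_g}_F^2\le 2C_3\,\norm{Q}^2 n^{-1}$, exactly as above, since $\tr(M^2)\le\norm{M}_F^2$. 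For the fourth-cumulant term I would invoke the sub-Gaussianity of $Z_g(\cdot)$ in A1 — concretely, the bound on the variance of a quadratic form of the sub-Gaussian process $Z_g(\cdot)$ against a bounded-operator-norm weight matrix, the very device already stated and used in the proof of Lemma \ref{l0z} — which, together with the decay (\ref{zcov}), bounds it by $C\,n\,\norm{Q}^2$ and hence contributes $O(\norm{Q}^2 n^{-1})$. This is the \emph{main obstacle}: the covariance decay of A2 on its own only tames the ``Gaussian'' part $\tr((Q\Sigma_g)^2)$, and it is the sub-Gaussian hypothesis of A1 (the spatial surrogate for the joint normality imposed in \cite{bgnrv20}) that must absorb the fourth-order dependence for an arbitrary bounded-norm $Q$. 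Summing the $O(\norm{Q}^2 n^{-1})$ estimates over the bounded number of pieces and maximising over $1\le g,u\le p$ — which is harmless since the bound is uniform in $(g,u)$ — yields (\ref{Zasy0}) with a constant depending only on $A$ and the fixed constants of A2, hence only on $A$ and $\lambda_{\max}$ as asserted.
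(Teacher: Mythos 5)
Your overall route is the same as the paper's: you split into the cross case $g\ne u$ and the diagonal case $g=u$, use the independence of $Z_g(\cdot)$ and $Z_u(\cdot)$ together with the covariance decay (\ref{zcov}) and the $\Delta$-separation of the sites for the former, and fall back on a sub-Gaussian quadratic-form argument for the latter. In two respects you are more explicit than the paper: you handle the possibly non-zero mean $\mu_0$ by centring first (the paper applies the covariance bound (\ref{zcov}) directly to the second moments $E[Z_g(s_{i_1})Z_g(s_{i_2})]$, tacitly treating the mean as zero), and for $g\ne u$ your bound $\var(\cdot)=n^{-2}\tr(Q\Sigma_gQ\Sigma_u)\le n^{-2}\|Q\|^2\|\Sigma_g\|_F\|\Sigma_u\|_F$ with $\|\Sigma_g\|_F^2=O(n)$ makes transparent why the rate is $\|Q\|^2n^{-1}$ rather than merely $O(\|Q\|^2)$, a step the paper's displayed chain of inequalities leaves implicit.

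The one place where your argument is not complete is exactly the place you flag: the fourth-cumulant term when $g=u$. Appealing to ``the device used in Lemma \ref{l0z}'' does not close it, because in Lemma \ref{l0z} that device is only used to bound the second moment of a quadratic form with bounded-operator-norm weight by a constant, i.e. it yields $E\{\frac{1}{n}Z^gQ(Z^g)^\top\}^2=O(\|Q\|^2)$, whereas here you need the variance at the faster rate $O(\|Q\|^2n^{-1})$. Sub-Gaussianity of linear forms plus the covariance decay (\ref{zcov}) does not by itself give $\sum_{i,j,k,l}Q_{ij}Q_{kl}\,\mathrm{cum}\{Z_g(s_i),Z_g(s_j),Z_g(s_k),Z_g(s_l)\}=O(n\|Q\|^2)$: already for $Q=I_n$ this sum contains $\sum_{i\ne k}\mathrm{cum}\{Z_g(s_i),Z_g(s_i),Z_g(s_k),Z_g(s_k)\}$, which is of order $n^2$ unless the fourth cumulants themselves decay in $\|s_i-s_k\|$ (for instance $Z_g(s)=W\xi(s)$, with $W$ a bounded non-degenerate multiplier independent of a white-noise field $\xi$, satisfies the letter of A1--A2 yet has $\var\{n^{-1}\sum_iZ_g(s_i)^2\}=\var(W^2)$ bounded away from zero). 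Such cumulant decay is automatic under the Gaussianity assumed in \cite{bgnrv20} (the cumulants vanish) and follows from Hanson--Wright-type arguments when the $Z_g(s_i)$ are independent across $i$, but it is not a consequence of A1--A2 as written. To be fair, the paper's own proof stops at the same assertion (``a quadratic form and $Z_g(s)$ is a sub-Gaussian process''), so you have matched rather than fallen short of its level of rigour; but as a self-contained proof the diagonal case still needs either an explicit fourth-cumulant (or mixing) decay hypothesis or the Gaussian structure to be finished.
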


\begin{proof}
	When $g \neq u$, from the independence between $Z_g(s_i)$ and $Z_{u}(s_j)$  we have
	\begin{eqnarray*}
		&&var[\frac{1}{n}\sum_{i,j=1}^nQ_{ij}Z_g(s_i)Z_{u}(s_j)]\\
		=&&n^{-2}\sum_{i_1,j_1,i_2,j_2=1}^n Q_{i_1j_1}Q_{i_2j_2}E[Z_g(s_{i_1})Z_{u}(s_{j_1})Z_g(s_{i_2})Z_{u}(s_{j_2})]\\
		=&&n^{-2}\sum_{i_1,j_1,i_2,j_2=1}^n Q_{i_1j_1}Q_{i_2j_2}E[Z_g(s_{i_1})Z_g(s_{i_2})]E[Z_{u}(s_{j_1})Z_{u}(s_{j_2})]\\
		\leq&&n^{-2}\sum_{i_1,j_1,i_2,j_2=1}^n Q_{i_1j_1}Q_{i_2j_2}\frac{A}{1+\|s_{i_1}-s_{i_2}\|^{d+\alpha}}\frac{A}{1+\|s_{j_1}-s_{j_2}\|^{d+\alpha}}\\
		\leq&& C\|Q\|^2n^{-1}.
	\end{eqnarray*}
	The first inequality is from (\ref{zcov}) and (\ref{fcov}). The second inequality is from $\|s_i-s_j\|\geq \triangle$ for all $n \geq 2$ and $1 \leq i \neq j \leq n$.
	When $g=u$, we note that $\frac{1}{n}\sum_{i,j=1}^nQ_{ij}Z_g(s_i)Z_{g}(s_j)$ is a
	quadratic form and $Z_g(s)$ is a sub-Gaussian process.  This completes the proof.
\end{proof}
\begin{lemma}\label{lemCZ}
	Let conditions A1 and A2 hold, and $p=o(n)$. Then there exists a positive constant $C_A$ depending only on $A$ such that
	\begin{equation}\label{CZasy}
		\lim_{n \rightarrow \infty}P(n^{-1}\|Z\|^2 \leq C_A) =1.
	\end{equation}

\end{lemma}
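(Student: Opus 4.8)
The plan is a covering-number argument on two unit spheres, for which the only model-specific ingredient is a uniform sub-Gaussian tail bound. Note first that $n^{-1}\|Z\|^2$ is the largest eigenvalue of $n^{-1}ZZ^\top$, so
\[
n^{-1}\|Z\|^2=\sup_{v\in\RR^p,\ w\in\RR^n,\ \|v\|=\|w\|=1}\big(n^{-1/2}v^\top Zw\big)^2 ;
\]
without loss of generality one may take $EZ(s)=0$, since otherwise $Z$ is replaced by the centred field $Z-\mu_0$, which is all that enters $\tilde X(\cdot)$ and hence every subsequent quantity.

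First I would show that there is a constant $\sigma_A^2$ depending only on $A$ with $\|v^\top Zw\|_{\psi_2}^2\le\sigma_A^2$ for all unit $v\in\RR^p,\ w\in\RR^n$. Writing $v^\top Zw=\sum_{\ell=1}^p v_\ell S_\ell$ with $S_\ell=\sum_{i=1}^n w_iZ_\ell(s_i)$, one bounds $\var(S_\ell)$: since the locations are $\Delta$-separated in $\RR^d$ and $d+\alpha>d$, the kernel $a_{ii'}=(1+\|s_i-s_{i'}\|^{d+\alpha})^{-1}$ has $\sup_i\sum_{i'}a_{ii'}\le C$ for an absolute constant $C$, so by the Schur test and (\ref{zcov}), $\var(S_\ell)\le A\sum_{i,i'}|w_i||w_{i'}|a_{ii'}\le AC\|w\|^2=AC$ --- the same estimate that drives the proofs of Lemmas \ref{l0z}--\ref{l1z}. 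Combining this with the per-coordinate bound (\ref{subgaussianassuA}) and the sub-Gaussianity of $\sum_i a_iZ_\ell(s_i)$ postulated in A1 gives $\|S_\ell\|_{\psi_2}^2\le C_A'$ uniformly in $\ell,w,n$. Since $Z_1(\cdot),\dots,Z_p(\cdot)$ are independent, so are $S_1,\dots,S_p$, hence $\sum_\ell v_\ell S_\ell$ is a sum of independent sub-Gaussians with $\|v^\top Zw\|_{\psi_2}^2\le C_A'\sum_\ell v_\ell^2=:\sigma_A^2$, and so $P(|v^\top Zw|>t)\le 2\exp\{-t^2/(2\sigma_A^2)\}$ for all $t>0$.

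Next I would take $1/4$-nets $\calN_p$ and $\calN_n$ of the unit spheres of $\RR^p$ and $\RR^n$ with $|\calN_p|\le 9^p$, $|\calN_n|\le 9^n$; the standard net inequality gives $\|Z\|\le 2\max_{v\in\calN_p,\,w\in\calN_n}|v^\top Zw|$, so a union bound and the tail bound above yield, for any $C_A>0$,
\[
P\big(n^{-1}\|Z\|^2>C_A\big)\le 2\cdot 9^{p+n}\exp\!\Big(-\frac{nC_A}{8\sigma_A^2}\Big)=2\exp\!\Big(n\Big[\Big(1+\tfrac pn\Big)\ln 9-\frac{C_A}{8\sigma_A^2}\Big]\Big).
\]
Because $p=o(n)$ the bracket converges to $\ln 9-C_A/(8\sigma_A^2)$, which is negative once $C_A>8\sigma_A^2\ln 9$; taking $C_A:=16\sigma_A^2\ln 9$ (a constant depending only on $A$) sends the right-hand side to $0$, which is (\ref{CZasy}).

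The hard part will be the uniform sub-Gaussian bound of the second step, i.e. checking that the spatial dependence among $Z_\ell(s_1),\dots,Z_\ell(s_n)$ does not inflate the $\psi_2$-norm of $S_\ell$ past a constant depending only on $A$: Condition A2 keeps $\var(S_\ell)$ bounded, and upgrading that second-moment control to a genuine $\psi_2$-bound is legitimate because $S_\ell$ is a weakly dependent sum of individually sub-Gaussian summands. The remaining ingredients --- the net inequality, the union bound, and the choice of $C_A$ --- are routine and use only $p=o(n)$.
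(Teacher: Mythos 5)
Your argument is correct in outline, but it takes a somewhat different route from the paper's. You net over \emph{both} unit spheres and control the bilinear form $v^\top Z w$, writing it as $\sum_\ell v_\ell S_\ell$ with $S_\ell=\sum_i w_i Z_\ell(s_i)$, so the key input you need is a uniform $\psi_2$-bound on the row sums $S_\ell$, after which independence across $\ell$ gives sub-Gaussianity of $v^\top Zw$ and the $9^{p+n}$ union bound closes under $p=o(n)$. The paper instead nets only over $S^{n-1}$: for a fixed unit $x\in\RR^n$ it writes $\|Zx^\top\|^2=\sum_{j=1}^p z_j(x)^2$ with $z_j(x)=\sum_i x_i Z_j(s_i)$, uses (\ref{zcov}) and the $\Delta$-separation to bound $E z_j^2(x)$ (your Schur-test step is the same estimate), and then uses independence of $z_1(x),\dots,z_p(x)$ together with sub-Gaussianity to get the pointwise tail (\ref{HhJ1bound2zz}) of order $\exp(-5(n+p))$, so only a $5^n$-cardinality net is needed in (\ref{HhJ1bound3zz}). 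The two routes buy essentially the same thing; yours is the standard two-sided Vershynin-type bound and is slightly more symmetric, the paper's exploits the row-independence earlier and avoids the net over $\RR^p$.

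One caveat: the step you yourself identify as the hard part is not justified by the reason you give. A covariance-decay bound on $\var(S_\ell)$ together with sub-Gaussianity of the individual $Z_\ell(s_i)$ does \emph{not} in general upgrade to a dimension-free $\psi_2$-norm for the dependent sum $S_\ell$ (a variable can have unit variance and arbitrarily large sub-Gaussian norm). What makes the step legitimate here is the last clause of Condition A1, which \emph{postulates} that $\sum_i a_i Z_\ell(s_i)$ is sub-Gaussian for every unit vector $a$, read (as the paper implicitly does when deriving (\ref{HhJ1bound2zz})) with a constant that is uniform in $\ell$, $a$ and $n$ and depends only on $A$; the variance bound then only pins down the scale. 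So you should invoke A1 directly at that point rather than appeal to ``weak dependence of individually sub-Gaussian summands.'' Your WLOG centering remark is harmless and no less careful than the paper, which ignores the mean of $Z$ entirely in this lemma.
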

\begin{proof}
	
	For any fixed $1 \times n$ unit vector $x=(x_1,\cdots,x_{n})$, we denote $xZ^\top$
	by $ z(x)=\Big(z_{1}(x),\cdots,z_{p}(x)\Big)$. Since $Z_1(\cdot), \cdots, Z_p(\cdot)$ are
	independent, the elements of $z(x)$ are independent. (\ref{zcov}) implies that $\max_{1 \leq j \leq p}Ez_{j}^2(x) \leq \tilde{C}_A$ where $\tilde{C}_A$ only depends on $A$.
	\begin{eqnarray*}
		xZ^\top Zx^\top=\sum_{j=1}^{p} [z_{j}^2(x)-Ez_{j}^2(x)]+\sum_{j=1}^{p}Ez_{j}^2(x) \leq \sum_{j=1}^{p} [z_{j}^2(x)-Ez_{j}^2(x)]+p\tilde{C}_A.
	\end{eqnarray*}
	By the sub-Gaussian property of $Z(s)$, we can conclude that for any fixed $1 \times p$ unit vector $x$ and any $c>0$ there exists $\tilde{C}_{A,1}$ depending only on $A$ and $c$ such that
	\begin{eqnarray}\label{HhJ1bound2zz}
		P\Big(\|xZ^\top\|^2 > \tilde{C}_{A,1}(n+p)\Big) \leq c\exp(-5(n+p)).
	\end{eqnarray}
	As we know, the unit Euclidean sphere $S^{n-1}$ consists of all
	$n$-dimensional unit vectors $x$.    Unfortunately   the cardinality of
	$S^{n-1}$ is uncountable cardinal number.
	We can't use
	(\ref{HhJ1bound2zz}) to derive an upper bound of $\|Z\|^2$ directly.
	Thus we introduce a method based on nets to control $\|Z\|^2$. The basic
	idea is as follows. We define a subset of $S^{n-1}$ as $S_{\varepsilon}$
	satisfying $\max_{x \in S^{n-1}}\min_{y \in S_{\varepsilon}}\|x-y\| \leq
	\varepsilon$. $S_{\varepsilon}$ is a so-called net of  $S^{n-1}$ and the
	cardinality of $S_{\varepsilon}$ is bounded by
	$(1+2\varepsilon^{-1})^{n}$. Thus we can control $\max_{y \in
		S_{\varepsilon}}\|Zy^\top\|$ in probability by (\ref{HhJ1bound2zz}). Finally,
	we can control the difference between $\max_{y \in S_{\varepsilon}}\|Zy^\top\|$ and $\max_{x \in
		S^{n-1}}\|Zx^\top\|$.

	Let $S_{\varepsilon}$ be a subset of $S^{n-1}$. For any $x \in S^{n-1}$, there exists $\tilde{x} \in S_{\varepsilon}$ such that $\|\tilde{x}-x\| \leq \varepsilon$.
	This, together with (\ref{HhJ1bound2zz}) and $|S_{\varepsilon}| \leq (1+2\varepsilon^{-1})^{n}$, implies that
	\begin{eqnarray}\label{HhJ1bound3zz00}
		P\Big(\max_{\tilde{x} \in
			S_{1/2}}\|Z\tilde{x}^\top\|^2 > \tilde{C}_{A,1}(n+p)\Big) \leq c|S_{1/2}|\exp(-5n-5p) \leq c5^{n} \exp(-5n-5p).
	\end{eqnarray}

	Then if $\|Zx^\top\|=\|Z\|$, there exists $\tilde{x} \in S_{\varepsilon}$ such that
	\begin{eqnarray*}
		\|Z\tilde{x}^\top\| \geq \|Zx^\top\|- \|Z(\tilde{x}-x)^\top\|  \geq \|Z\|-\varepsilon\|Z\|=(1-\varepsilon)\|Z\|.
	\end{eqnarray*}
	Let $\varepsilon=1/2$,
	\begin{eqnarray*}
		\|Z\|^2 \leq 4 \max_{\tilde{x} \in S_{1/2}} \|Z\tilde{x}^\top\|^2.
	\end{eqnarray*}
	This, together with (\ref{HhJ1bound3zz00}), implies that
	\begin{eqnarray}\label{HhJ1bound3zz}
		P\Big(\|Z\|^2 > 4\tilde{C}_{A,1}(n+p)\Big) \leq c|S_{1/2}|\exp(-5n-5p) \leq c5^{n} \exp(-5n-5p).
	\end{eqnarray}
	Then (\ref{CZasy}) is implied by (\ref{HhJ1bound3zz}) and $p=o(n)$.
\end{proof}
\begin{definition}
	\begin{equation} \label{nfihat}
		\wh N =  {1 \over k}\sum_{h=1}^k \big\{ {1 \over n} \sum_{i,j=1}^n f_h(s_i - s_j)
		\tilde Z(s_i) \tilde Z(s_j)^\top \big\}\big\{ {1 \over n} \sum_{i,j=1}^n f_h(s_i - s_j)
		\tilde Z(s_i) \tilde Z(s_j)^\top \big\}^\top.
	\end{equation}
\end{definition}
\begin{lemma}\label{t1}
	Let conditions A1 and A2 hold, and $p=o(n)$. Let $M_{gu}$ be the $(g,u)$-th entry of $\hat{N}-N$. There exists a positive constant $C_1$ depending only on $A$ such that
	\begin{equation}\label{Nasy}
		\max_{1 \leq g,u \leq p}EM_{gu}^2 \leq C_1n^{-1}.
	\end{equation}
	
\end{lemma}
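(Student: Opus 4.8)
The plan is to reduce to a single kernel and then bound $\var[(\wh N)_{gu}]$ by a careful second-moment expansion. Write $A_h={1\over n}\sum_{i,j}f_h(s_i-s_j)\tilde Z(s_i)\tilde Z(s_j)^\top={1\over n}Z\tilde F_hZ^\top$, where $Z$ is the $p\times n$ matrix above, $\tilde F_h=PF_hP$ with $P=I_n-n^{-1}1_{n\times n}$ and $(F_h)_{ij}=f_h(s_i-s_j)$; by (\ref{fcov}) and the $\Delta$-separation in A2 the row sums of $F_h$ are summable, so $\|\tilde F_h\|\le\|F_h\|\le C$ with $C$ depending only on $A$. Since $\wh N={1\over k}\sum_hA_hA_h^\top$ with $k$ fixed, $M_{gu}={1\over k}\sum_h\{(A_hA_h^\top)_{gu}-E(A_hA_h^\top)_{gu}\}$, hence by Jensen $EM_{gu}^2\le{1\over k}\sum_h\var[(A_hA_h^\top)_{gu}]$, and it suffices to bound each summand by $Cn^{-1}$ uniformly in $g,u$. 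A device used repeatedly below is $\tilde F_h1_n^\top=0=1_n\tilde F_h$, which annihilates the mean $\mu_0$ of $Z(\cdot)$ and thereby keeps all constants free of $\mu_0$.

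Fix $h$ and split $Z^\top Z=(Z^g)^\top Z^g+(Z^u)^\top Z^u+S_{gu}$, where $S_{gu}=\sum_{\ell\notin\{g,u\}}(Z^\ell)^\top Z^\ell$ is independent of $(Z^g,Z^u)$. This gives $(A_hA_h^\top)_{gu}={1\over n^2}Z^g\tilde F_h(Z^\top Z)\tilde F_h^\top(Z^u)^\top=T_1+T_2+T_3$, with $T_1=(A_h)_{gg}(A_h)_{ug}$, $T_2=(A_h)_{gu}(A_h)_{uu}$ (present only when $g\ne u$), and $T_3={1\over n^2}Z^gR(Z^u)^\top$, $R=\tilde F_hS_{gu}\tilde F_h^\top=\sum_{\ell\notin\{g,u\}}w_\ell w_\ell^\top$, $w_\ell=\tilde F_h(Z^\ell)^\top$. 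For $T_1$ (and symmetrically $T_2$): $E(A_h)_{gg}={1\over n}\tr(\tilde F_hC_g)=O(1)$, where $C_g$ is the covariance matrix of $(Z_g(s_1),\dots,Z_g(s_n))$ with $\|C_g\|=O(1)$ by (\ref{zcov}), and $\var[(A_h)_{gg}]=O(n^{-1})$ by Lemma \ref{l1z}. Writing $(A_h)_{gg}=m_g+\delta_g$ with $m_g=O(1)$ and conditioning on $Z^g$, one has $E[(A_h)_{ug}^2\mid Z^g]={1\over n^2}Z^g\tilde F_h^\top C_u\tilde F_h(Z^g)^\top\le Cn^{-2}\|Z^g\|^2$ and $|\delta_g|\le Cn^{-1}\|Z^g\|^2+C$, so $E[\delta_g^2(A_h)_{ug}^2]\le Cn^{-4}E\|Z^g\|^6+Cn^{-2}E\|Z^g\|^2=O(n^{-1})$ using $E\|Z^g\|^{2m}=O(n^m)$ from (\ref{subgaussianassuA}); combined with $\var[m_g(A_h)_{ug}]=O(n^{-1})$ this gives $\var(T_1)=O(n^{-1})$, and the case $g=u$, $T_1=(A_h)_{gg}^2$, follows the same way together with a fourth-moment estimate $E\delta_g^4=O(n^{-2})$ for that quadratic form.

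The main work is $T_3$. Since $\tilde F_h1_n^\top=0$ forces $R1_n^\top=0=1_nR$, one gets $E[T_3\mid S_{gu}]=0$, so $\var(T_3)=ET_3^2=n^{-4}E[(Z^gR(Z^u)^\top)^2]$. Conditioning on $R$ and using independence of $Z^g$ and $Z^u$, $E[(Z^gR(Z^u)^\top)^2\mid R]=\tr(G_gRG_uR^\top)$ with $G_g,G_u$ the uncentered second-moment matrices; but $G_gR=C_gR$ and $RG_u=RC_u$, again because $R$ annihilates constants, so this equals $\tr(C_gRC_uR^\top)\le\|C_g\|\,\|C_u\|\,\tr(R^2)\le C\,\tr(R^2)$. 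Finally $\tr(R^2)=\sum_{\ell,\ell'}\langle w_\ell,w_{\ell'}\rangle^2$, and since each $w_\ell=PF_h(\tilde Z^\ell)^\top$ is a bounded-operator-norm linear image of the mean-zero sub-Gaussian vector $\tilde Z^\ell$, $E\|w_\ell\|^4=O(n^2)$, while for $\ell\ne\ell'$ (hence independent) $E\langle w_\ell,w_{\ell'}\rangle^2=\tr(\Sigma_{w_\ell}\Sigma_{w_{\ell'}})\le\|\Sigma_{w_\ell}\|\,\tr(\Sigma_{w_{\ell'}})=O(n)$, where $\Sigma_{w_\ell}=E[w_\ell w_\ell^\top]$. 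Thus $E\,\tr(R^2)=O(pn^2+p^2n)$ and $\var(T_3)=O(p/n^2+p^2/n^3)=o(n^{-1})$ because $p=o(n)$. Collecting the three bounds and averaging over the $k$ fixed kernels yields $\max_{g,u}EM_{gu}^2\le C_1n^{-1}$ with $C_1$ depending only on $A$.

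The real obstacle is the $T_3$ estimate: the cancellation of the $\mu_0$-components is essential (otherwise $\|G_g\|$ is of order $n$ and the trace bound collapses), and the double sum $\sum_{\ell,\ell'}\langle w_\ell,w_{\ell'}\rangle^2$ must be split into its diagonal part (genuine fourth moments, $O(pn^2)$) and its off-diagonal part (independent terms, reducing to traces of covariance products, $O(p^2n)$) rather than bounded crudely by $O(p^2n^2)$ — this is precisely what forces the range $p=o(n)$. The sub-Gaussian moment inputs ($E\|Z^g\|^{2m}=O(n^m)$, $E\|w_\ell\|^4=O(n^2)$, and the fourth-moment concentration of $(A_h)_{gg}$) are routine from (\ref{subgaussianassuA}) and the method already used in Lemma \ref{l1z}.
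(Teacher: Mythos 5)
Your proposal is correct and follows essentially the same route as the paper's proof: the same split of $(A_hA_h^\top)_{gu}$ into the $\tilde u=g$, $\tilde u=u$ and $\tilde u\notin\{g,u\}$ contributions, with the crucial cross term handled exactly as in the paper by exploiting the independence of the random kernel $R$ (the paper's $Q$) from $(Z^g,Z^u)$ together with the Frobenius-norm bound $E\|R\|_F^2=O(pn^2)$ (the paper's $E\|Q\|_F^2=O(p)$), so that $p=o(n)$ yields the $o(n^{-1})$ bound, while your conditioning treatment of $T_1,T_2$ is just a bookkeeping variant of the paper's direct moment expansion. The only small caveat is that for $g=u$ the identity $E[T_3\mid S_{gu}]=0$ fails (the conditional mean is $n^{-2}\tr(RC_g)$), so the diagonal case needs the law of total variance with the same ingredients you already derived — a point the paper likewise dispatches with ``the proof is similar.''
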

\begin{proof}
	Since N is diagonal,   when $g \neq u$,
	\begin{eqnarray*}
		M_{gu}={1 \over k}\sum_{h=1}^k\sum_{\tilde{u}=1}^p[\frac{1}{n}\sum_{i,j=1}^nf_h(s_i-s_j) \tilde Z_g(s_i) \tilde Z_{\tilde{u}}(s_j)][\frac{1}{n}\sum_{i,j=1}^nf_h(s_i-s_j) \tilde Z_u(s_i) \tilde Z_{\tilde{u}}(s_j)].
	\end{eqnarray*}
	
	Divide the term on the RHS of the above equation into three terms: (i)$\tilde{u}=g$,
	(ii)$\tilde{u}=u$ and (iii) $\tilde{u}\neq g,u$. We control each
	term as follows. When $\tilde{u}=g$,
	\begin{eqnarray*}
		E\Big([\frac{1}{n}\sum_{i,j=1}^nf_h(s_i-s_j) \tilde Z_g(s_i) \tilde Z_{g}(s_j)][\frac{1}{n}\sum_{i,j=1}^nf_h(s_i-s_j) \tilde Z_u(s_i) \tilde Z_{g}(s_j)]\Big)=0.
	\end{eqnarray*}
	\begin{eqnarray*}
		&&var\Big([\frac{1}{n}\sum_{i,j=1}^nf_h(s_i-s_j) \tilde Z_g(s_i) \tilde Z_{g}(s_j)][\frac{1}{n}\sum_{i,j=1}^nf_h(s_i-s_j) \tilde Z_u(s_i) \tilde Z_{g}(s_j)]\Big) \\
		=&&E\Big([\frac{1}{n}\sum_{i,j=1}^nf_h(s_i-s_j) \tilde Z_g(s_i) \tilde Z_{g}(s_j)][\frac{1}{n}\sum_{i,j=1}^nf_h(s_i-s_j) \tilde Z_u(s_i) \tilde Z_{g}(s_j)]\Big)^2\\
		=&&E\Big(n^{-4}\sum_{i_1,i_2,i_3,i_4,j_1,j_2,j_3,j_4=1}^nf_h(s_{i_1}-s_{j_1})f_h(s_{i_2}-s_{j_2})f_h(s_{i_3}-s_{j_3})f_h(s_{i_4}-s_{j_4})\\
		&& \tilde Z_g(s_{i_1}) \tilde Z_g(s_{i_1}) \tilde Z_g(s_{i_1}) \tilde Z_g(s_{i_1}) \tilde Z_{g}(s_{j_1}) \tilde Z_{g}(s_{j_3}) \tilde Z_{u}(s_{j_2}) \tilde Z_{u}(s_{j_4})\Big)\\
		\leq&&n^{-4}\sum_{i_1,i_2,i_3,i_4,j_1,j_2,j_3,j_4=1}^n[\prod_{v=1}^4\frac{A}{1+\|s_{i_v}-s_{j_v}\|^{d+\alpha}}]\frac{A}{1+\|s_{j_2}-s_{j_4}\|^{d+\alpha}}EZ^6_g(s) \\
		\leq&&  \tilde{C}_1n^{-1},
	\end{eqnarray*}
	where $\tilde{C}_1$ only depends on $A$. The first inequality is from (\ref{zcov})-(\ref{fcov}) and the independence between $Z_g(\cdot)$ and $Z_u(\cdot)$.  The second inequality is from (\ref{subgaussianassu}),  $C_0 \leq A$ and $\|s_i-s_j\|\geq \triangle$ for all $n \geq 2$ and $1 \leq i \neq j \leq n$.

	Thus we can control $$(\frac{1}{n}\sum_{i,j=1}^nf_h(s_i-s_j) \tilde Z_g(s_i) \tilde Z_{g}(s_j))(\frac{1}{n}\sum_{i,j=1}^nf_h(s_i-s_j) \tilde Z_u(s_i) \tilde Z_{g}(s_j)).$$
	When $\tilde{u}=u$, we can repeat the above method to control $$(\frac{1}{n}\sum_{i,j=1}^nf_h(s_i-s_j) \tilde Z_g(s_i) \tilde Z_{u}(s_j))(\frac{1}{n}\sum_{i,j=1}^nf_h(s_i-s_j) \tilde Z_u(s_i) \tilde Z_{u}(s_j)).$$
	Let's consider the third term $$\sum_{\tilde{u} \neq g,u}(\frac{1}{n}\sum_{i,j=1}^nf_h(s_i-s_j) \tilde Z_g(s_i) \tilde Z_{\tilde{u}}(s_j))(\frac{1}{n}\sum_{i,j=1}^nf_h(s_i-s_j) \tilde Z_u(s_i) \tilde Z_{\tilde{u}}(s_j)).$$
	We can rewrite it as
	\begin{eqnarray*}
		&&\frac{1}{n^2}\sum_{\tilde{u}\neq g,u}\sum_{i,j,\tilde{i},\tilde{j}=1}^nf_h(s_i-s_j)f_h(s_{\tilde{i}}-s_{\tilde{j}}) \tilde Z_g(s_i) \tilde Z_u(s_{\tilde{i}}) \tilde Z_{\tilde{u}}(s_j) \tilde Z_{\tilde{u}}(s_{\tilde{j}})\\
		=&&\frac{1}{n}\sum_{i,\tilde{i}=1}^n \Big(\frac{1}{n}\sum_{j,\tilde{j}=1}^nf_h(s_i-s_j)f_h(s_{\tilde{i}}-s_{\tilde{j}})\sum_{\tilde{u}\neq g,u} \tilde Z_{\tilde{u}}(s_j) \tilde Z_{\tilde{u}}(s_{\tilde{j}})\Big) \tilde Z_g(s_i) \tilde Z_u(s_{\tilde{i}}) .
	\end{eqnarray*}
	Let $\tilde H $ be a $n \times n$ symmetric matrix with $(i,\tilde{i})$th entry
	$$\frac{1}{n}\sum_{j,\tilde{j}=1}^nf_h(s_i-s_j)f_h(s_{\tilde{i}}-s_{\tilde{j}})\sum_{\tilde{u}\neq g,u} \tilde Z_{\tilde{u}}(s_j) \tilde Z_{\tilde{u}}(s_{\tilde{j}}).$$
	Recalling (\ref{zgtiledz}) and (\ref{Zasy0}), we define $Q=(I_n-n^{-1}1_{n \times n})\tilde H (I_n-n^{-1}1_{n \times n})$.
	Although $Q$ is random, we can find that $Q$ is independent of $Z_g(s)$ and $Z_u(s)$.
	It's easy to see
	\begin{eqnarray*}
		E\frac{1}{n}\sum_{i,j=1}^n Q_{i,j}  Z_g(s_i)Z_u(s_j)=0.
	\end{eqnarray*}
	\begin{eqnarray*}
		&&var[\frac{1}{n}\sum_{i,j=1}^n Q_{i,j}  Z_g(s_i)Z_u(s_j)]=E[\frac{1}{n}\sum_{i,j=1}^n Q_{i,j}  Z_g(s_i)Z_u(s_j)]^2\\
		=&& \frac{1}{n^2}\sum_{i,j,\tilde{i},\tilde{j}=1}^n  E(Q_{i,j}Q_{\tilde{i},\tilde{j}}) E[Z_g(s_i)Z_g(s_{\tilde{i}})] E[Z_u(s_j)Z_u(s_{\tilde{j}})]\\
		\leq &&  \frac{1}{n^2}\sum_{i,j,\tilde{i},\tilde{j}=1}^n  (EQ^2_{i,j})^{1/2}(EQ^2_{\tilde{i},\tilde{j}})^{1/2} \frac{A}{1+(s_i-s_{\tilde{i}})^{d+\alpha}}\frac{A}{1+(s_j-s_{\tilde{j}})^{d+\alpha}}\\
		\leq && \frac{\tilde{C}_2 }{n^2} \sum_{i,j=1}^nEQ_{i,j}^2=\frac{\tilde{C}_2 }{n^2}E \|Q\|_F^2,
	\end{eqnarray*}
	where $\tilde{C}_2$ only depends on $A$ and the first inequality is from (\ref{zcov}). The second inequality is from $\|s_i-s_j\|\geq \triangle$ for all $n \geq 2$ and $1 \leq i \neq j \leq n$. Recalling the definition of $Q$, we can rewrite it as
	\begin{eqnarray*}
		Q=\frac{1}{n}(I_n-n^{-1}1_{n \times n})V_h (I_n-n^{-1}1_{n \times n})   Z_{-g,-u}^\top Z_{-g,-u} (I_n-n^{-1}1_{n \times n})  V_h^\top(I_n-n^{-1}1_{n \times n}),
	\end{eqnarray*}
	where $V_h$ has the $(i,j)$th entry $f_h(s_i-s_j)$ and $Z_{-g,-u}$ is a $(p-2) \times n$ matrix without $Z^g$ and $Z^u$. Then
	
	\begin{eqnarray*}
		\|Q\|_F^2 \leq \|V_h\|^4\|\frac{1}{n}Z_{-g,-u}^\top Z_{-g,-u}\|_F^2 \leq \tilde{C}_3\|\frac{1}{n}Z^\top Z\|_F^2,
	\end{eqnarray*}
	where $\tilde{C}_3$ only depends on $A$ and the last inequality is from (\ref{fcov}).
	Moreover,
	\begin{eqnarray*}
		&&E\|\frac{1}{n}Z^\top Z\|_F^2=E\|\frac{1}{n}Z Z^\top \|_F^2\\
		=&&E\sum_{g,u=1}^p [n^{-1}\sum_{i=1}^nZ_g(s_i)Z_u(s_i)]^2\\
		=&&E\sum_{1 \leq g \neq u \leq p} [n^{-1}\sum_{i=1}^nZ_g(s_i)Z_u(s_i)]^2+E\sum_{g=1}^p [n^{-1}\sum_{i=1}^nZ^2_g(s_i)]^2\\
		=&&\sum_{1 \leq g \neq u \leq p}n^{-2}\sum_{i,j=1}^nE[Z_g(s_i)Z_g(s_j)]E[Z_u(s_i)Z_u(s_j)]+\sum_{g=1}^pn^{-2}\sum_{i,j=1}^nE[Z^2_g(s_i)Z^2_g(s_j)]\\
		\leq&&\sum_{1 \leq g \neq u \leq p}n^{-2}\sum_{i,j=1}^n(\frac{A}{1+\|s_{i}-s_{j}\|^{d+\alpha}})^2+\sum_{g=1}^pEZ^4_g(s)\\
		\leq&& \tilde{C}_4p ,
	\end{eqnarray*}
	where $\tilde{C}_4$ only depends on $A$. The first inequality is from (\ref{zcov}). The second equation is from (\ref{subgaussianassu}), $C_0 \leq A$, $p=o(n)$ and $\|s_i-s_j\|\geq \triangle$ for all $n \geq 2$ and $1 \leq i \neq j \leq n$.
	Then we can conclude that
	\begin{eqnarray*}
		E\|Q\|_F^2 \leq \tilde{C}_5p,
	\end{eqnarray*}
	where $\tilde{C}_5$ only depends on $A$. From $p=o(n)$,
	\begin{eqnarray*}
		var[\frac{1}{n}\sum_{i,j=1}^n Q_{i,j}  Z_g(s_i)Z_u(s_j)]\leq  \frac{\tilde{C}_2\tilde{C}_5 }{n^2}p=o(n^{-1}).
	\end{eqnarray*}
	Thus we control the third term and prove (\ref{Nasy}) for $g \neq u$.  When $g=u$, the proof is similar.
\end{proof}

\begin{definition}
	Let $J_1$ and $J_2$ be two subsets of $\{1,\cdots,p\}$. Let $\hat{N}_{J_1,J_2}$ be
	the  sub-matrix of $\hat{N}$ consisting of the rows with the indices in $J_1$ and
	the columns with the
	indices in $J_2$. Write $\hat{N}_{J_1}=\hat{N}_{J_1,J_1}$.
\end{definition}

\begin{lemma}\label{lemM}
	Under the conditions of Lemma \ref{lemCZ} and $J_1\cap J_2=\emptyset$,  we define the event $B_Z=\{n^{-1}\| Z\|^2 \leq  C_A\}$.
	Then  there exists a positive constant $C_2$ depending only on $A$, $c$ and $v$ such that
	\begin{equation}\label{Masy}
		P\Big(\|\hat{N}_{J_1,J_2}\|^2 > C_2n^{-1}v(|J_1|+|J_2|)\Big|B_Z\Big) \leq c(5^{|J_1|}+5^{|J_2|}) \exp(-5|J_1|v-5|J_2|v).
	\end{equation}
	Here $v>0$ can be finite or tending to infinite.
\end{lemma}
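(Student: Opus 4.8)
\emph{Net reduction.} The plan is to combine a net argument over the two unit spheres with a conditional sub-Gaussian deviation bound for each fixed pair of net vectors, in the spirit of the net step in the proof of Lemma~\ref{lemCZ}. For $r=1,2$ let $\mathcal N_r$ be a $1/2$-net of the unit sphere of $\mathbb R^{|J_r|}$, so $|\mathcal N_r|\le 5^{|J_r|}$ and, by the same triangle-inequality manipulation as in Lemma~\ref{lemCZ}, $\|\hat N_{J_1,J_2}\|\le 4\max_{x\in\mathcal N_1,\,y\in\mathcal N_2}|x^\top\hat N_{J_1,J_2}\,y|$. It then suffices to show, for a constant $c_0=c_0(A)>0$ and each fixed $x\in\mathcal N_1,\ y\in\mathcal N_2$,
\[
P\big(\,|x^\top\hat N_{J_1,J_2}\,y|>t\ \big|\ B_Z\,\big)\le C\exp(-c_0\,nt^2),
\]
since taking $t^2$ a large enough multiple of $n^{-1}v(|J_1|+|J_2|)$ and a union bound over the $\le 5^{|J_1|}5^{|J_2|}$ pairs gives (\ref{Masy}).

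\emph{Decomposition of the bilinear form.} Let $P=I_n-n^{-1}1_{n\times n}$, let $T_h$ be the symmetrised kernel matrix from the proof of Lemma~\ref{t1} (so $\|T_h\|\le\tilde C=\tilde C(A)$ by (\ref{fcov})), put $\bar T_h=PT_hP$, and set $w=Z_{J_1}^\top x$, $v=Z_{J_2}^\top y\in\mathbb R^n$. Recalling (\ref{zgtiledz}) (the matrix of the $\tilde Z(s_i)$'s is $ZP$), one checks from (\ref{nfihat}) that
\[
x^\top\hat N_{J_1,J_2}\,y=\frac1{kn^2}\sum_{h=1}^k w^\top\bar T_h\,(Z^\top Z)\,\bar T_h\,v,\qquad Z^\top Z=Z_{J_1}^\top Z_{J_1}+Z_{J_2}^\top Z_{J_2}+Z_R^\top Z_R,
\]
where $R=\{1,\dots,p\}\setminus(J_1\cup J_2)$ and the three index sets label disjoint, and hence independent, groups of rows of $Z$; this splits the left side into three groups of terms. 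The crucial structural point is that $P$, and therefore $\bar T_h$, annihilates constant vectors, so $\bar T_h v$ depends on $v$ only through its (mean-zero) centred version $\tilde v=Pv$, and likewise $w^\top\bar T_h$ through $\tilde w=Pw$; this is exactly what keeps the $Z_{J_1}^\top Z_{J_1}$- and $Z_{J_2}^\top Z_{J_2}$-pieces from being genuinely cubic in one group's sub-Gaussian entries.

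\emph{Concentration of the three pieces.} Conditioning on the rows indexed by $R$, the $Z_R^\top Z_R$-piece is $\frac1k\sum_h\tilde w^\top C_h\tilde v$ with $C_h=n^{-2}T_h P Z_R^\top Z_R P T_h$ deterministic, and $\tilde w,\tilde v$ independent mean-zero sub-Gaussian vectors whose sub-Gaussian parameter is bounded by a constant depending only on $A$ (Condition A1 and $C_0\le A$); on $B_Z$ one has $\|Z_R\|\le\|Z\|$ with $n^{-1}\|Z\|^2\le C_A$, hence $\|C_h\|\le\tilde C^2C_A/n$ and $\|C_h\|_F^2\le\mathrm{rank}(C_h)\|C_h\|^2=O(n^{-1})$, so a Hanson--Wright-type deviation inequality (with $\|\mathbb E\tilde w\tilde w^\top\|$ and $\|\mathbb E\tilde v\tilde v^\top\|$ bounded via (\ref{zcov}) and the minimal separation in A2, as in the proofs of Lemma~\ref{l0z} and Lemma~\ref{l1z}) yields an $O(\exp(-c_0 ns^2))$ tail. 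Conditioning instead on the rows in $J_1$, the $Z_{J_1}^\top Z_{J_1}$-piece equals $\frac1{kn^2}\sum_h\big(\bar T_h Z_{J_1}^\top Z_{J_1}\bar T_h w\big)^\top v$, a \emph{linear} functional of $\tilde v$ whose coefficient vector is orthogonal to the constants and of norm $\le n^{-2}\tilde C^2\|Z_{J_1}\|^3\le\tilde C^2 C_A^{3/2}n^{-1/2}$ on $B_Z$, so sub-Gaussianity of $\tilde v$ gives the same tail; the $Z_{J_2}^\top Z_{J_2}$-piece is symmetric. Since $B_Z$ is not adapted to these sub-$\sigma$-algebras, one passes through $P(E\cap B_Z)\le\mathbb E\big[\,1\{n^{-1}\|Z_J\|^2\le C_A\}\,P(E\mid\text{the relevant rows})\,\big]$ and $P(B_Z)\ge1/2$ for large $n$ (Lemma~\ref{lemCZ}). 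Summing the $3k$ pieces with $s$ a fixed fraction of $t$ gives the per-pair bound above; choosing $t^2=C_2 n^{-1}v(|J_1|+|J_2|)$ with $C_2=C_2(A,c,v)$ large enough that $c_0 C_2 v\ge\ln5+5v$ (which forces the dependence of $C_2$ on $v$: for small $v$ the net factor $\ln5$ per coordinate must be beaten by $c_0 C_2 v$) completes the proof after the union bound.

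\emph{Main obstacle.} The hard step is the concentration estimate — obtaining a genuinely sub-Gaussian $\exp(-c_0 nt^2)$ per-pair tail. This needs the identity $\bar T_h(\text{constants})=0$ to linearise the two ``diagonal-block'' pieces, careful book-keeping of which Hanson--Wright regime ($\|C_h\|_F$ versus $\|C_h\|$) is active for $t\asymp\sqrt{n^{-1}v(|J_1|+|J_2|)}$, and a clean reduction of the conditioning on the non-adapted event $B_Z$ to the row-wise sub-$\sigma$-algebras.
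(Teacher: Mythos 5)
Your route is genuinely different from the paper's, even though both start from the same three-way split of $Z^\top Z$ into the $J_1$, $J_2$ and complementary blocks. The paper works at the operator-norm level: for each piece it writes the squared norm as $\|Z_{J_2}H_{h,J_1}Z_{J_2}^\top\|$ with $H_{h,J_1}$ of rank at most $|J_1|$ and independent of $Z_{J_2}$, eigendecomposes $H_{h,J_1}=U_{h,J_1}\Lambda_{h,J_1}U_{h,J_1}^\top$, and reduces to $\|Z_{J_2}U_{h,J_1}\|$, which it controls with a single net over $S^{|J_1|-1}$ (size $5^{|J_1|}$) plus a Bernstein-type bound for the sum of $|J_2|$ independent sub-exponential terms $y_j^2(x)$; this is what produces the asymmetric prefactor $5^{|J_1|}+5^{|J_2|}$ and an exponent linear in $v(|J_1|+|J_2|)$ for every $v$. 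You instead work at the scalar level: a two-sided net and, for each fixed pair $(x,y)$, conditional concentration of the bilinear form, with the two own-block pieces linearized by conditioning and the complementary piece handled by Hanson--Wright. Your union bound is over roughly $5^{|J_1|+|J_2|}$ pairs rather than $5^{|J_1|}+5^{|J_2|}$, which is harmless only because $C_2$ is allowed to depend on $v$; your argument also avoids the eigendecomposition step and treats the three pieces more symmetrically.

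Two points need repair. First, the net constant: the two-sided bound $\|M\|\le 4\max_{x\in\mathcal N_1,y\in\mathcal N_2}|x^\top M y|$ is false for $1/2$-nets (the standard bound is $(1-2\varepsilon)^{-1}\max$, which degenerates at $\varepsilon=1/2$); the factor-4 trick in Lemma \ref{lemCZ} is the one-sided statement $\|M\|^2\le 4\max_x\|Mx\|^2$. Use $\varepsilon=1/4$ nets (cardinality $9^{|J_r|}$); the larger $\ln 9$ per coordinate is again absorbed because $C_2$ may depend on $v$. Second, and more substantively, your per-pair claim $P(|x^\top\hat N_{J_1,J_2}y|>t\mid B_Z)\le C\exp(-c_0nt^2)$ is not delivered by Hanson--Wright as invoked: (i) the coordinates of $w=Z_{J_1}^\top x$ and $v=Z_{J_2}^\top y$ are correlated across locations, so the i.i.d.-coordinate form of Hanson--Wright does not literally apply, and (ii) Hanson--Wright only gives $\exp(-c\min(nt^2,nt))$, whose sub-Gaussian branch requires $t\lesssim 1$, i.e.\ $v(|J_1|+|J_2|)=O(n)$; for $v$ growing faster (which the lemma's statement formally allows) your exponent degrades to $\sqrt{nv(|J_1|+|J_2|)}$ and the union bound fails, whereas the paper's Bernstein-across-rows argument survives. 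Both defects disappear if you treat the complementary piece exactly as you treat the other two: condition on all rows outside $J_2$, bound the coefficient vector $n^{-2}\bar V_h Z_R^\top Z_R\bar V_h^\top w$ deterministically by $O(n^{-1/2})$ on $B_Z$ (via $\|Z_R\|^2\le C_An$ and $\|w\|\le (C_An)^{1/2}$), and use sub-Gaussianity of linear functionals of $Z_{J_2}$ from Condition A1 together with your $P(E\cap B_Z)$ device for the non-adapted conditioning. That yields the clean $\exp(-cnt^2)$ tail for all $t>0$, removes the Hanson--Wright regime bookkeeping you flag as the main obstacle, and makes your proof complete at the same level of rigor as the paper's.
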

\begin{proof}
	Since $k$ is finite, it's sufficient to prove (\ref{Masy}) on
	$$n^{-2} Z_{J_1}(I_n-n^{-1}1_{n \times n}) V_h (I_n-n^{-1}1_{n \times n}) Z^\top  Z (I_n-n^{-1}1_{n \times n})   V_h^\top (I_n-n^{-1}1_{n \times n})   Z_{J_2}^\top, $$
	where $ Z_{J_1}$ is a sub-matrix of $ Z$ with $i$th row if and only if $i \in J_1$.  $ V_h$ is a $n \times n$ matrix with the $(i,j)$th entry $f_h(s_i-s_j)$. We define $\tilde V_h=(I_n-n^{-1}1_{n \times n}) V_h (I_n-n^{-1}1_{n \times n})$.
	\begin{eqnarray}\label{decomaa}
		\nonumber&& Z_{J_1}\tilde  V_h Z^\top  Z \tilde  V_h^\top  Z_{J_2}^\top\\
		\nonumber=&& Z_{J_1} \tilde  V_h Z_{J_1}^\top  Z_{J_1} \tilde   V_h^\top  Z_{J_2}^\top+ Z_{J_1} \tilde  V_h   Z_{J_2}^\top  Z_{J_2} \tilde   V_h^\top    Z_{J_2}^\top\\
		+&& Z_{J_1} \tilde  V_h Z_{J}^\top  Z_{J} \tilde   V_h^\top  Z_{J_2}^\top,
	\end{eqnarray}
	where $J$ is the complementary set of $J_1 \cup J_2$.
	At first we deal with $ Z_{J_1} \tilde  V_h Z_{J_1}^\top  Z_{J_1} \tilde   V_h^\top  Z_{J_2}^\top$.
	\begin{eqnarray*}
		&&\| Z_{J_1} \tilde  V_h Z_{J_1}^\top  Z_{J_1} \tilde  V_h^\top  Z_{J_2}^\top\|^2\\
		=&&\| Z_{J_2} \tilde  V_h Z_{J_1}^\top  Z_{J_1} \tilde V_h^\top Z_{J_1}^\top Z_{J_1} \tilde V_h Z_{J_1}^\top  Z_{J_1} \tilde  V_h^\top  Z_{J_2}^\top\|\\
		=&&\| Z_{J_2} H_{h,J_1} Z_{J_2}^\top\|,
	\end{eqnarray*}
	where $$ H_{h,J_1}=\tilde  V_h Z_{J_1}^\top  Z_{J_1}\tilde  V_h^\top Z_{J_1}^\top Z_{J_1}\tilde  V_h Z_{J_1}^\top  Z_{J_1} \tilde  V_h^\top$$
	is a $n \times n$ symmetric matrix with rank $|J_1|$ at most. Since $J_1\cap J_2=\emptyset$, $ H_{h,J_1}$ and $ Z_{J_2}^\top$ are independent.
	Moreover, under the event $B_Z=\{n^{-1}\| Z\|^2 \leq  C_A\}$,
	\begin{eqnarray*}
		\| H_{h,J_1}\|\leq \| \tilde V_h\|^4\| Z_{J_1}^\top Z_{J_1}\|^3\leq \| V_h\|^4\| Z^\top Z\|^3 \leq  \| V_h\|^4n^3C_A^3.
	\end{eqnarray*}
	It follows that
	\begin{eqnarray}\label{HhJ1bound}
		\lim_{n \rightarrow \infty}P(\| H_{h,J_1}\| \leq n^3\tilde{C}_A|B_Z )=1,
	\end{eqnarray}
	where $\tilde{C}_A$ only depends on $A$. Now we recall the rank of $ H_{h,J_1}$ is not larger than $|J_1|$. For given $ H_{h,J_1}$, we can do eigen-decomposition on it as follows.
	\begin{eqnarray}\label{HhJ1bound1}
		H_{h,J_1}= U_{h,J_1} \Lambda_{h,J_1} U_{h,J_1}^\top,
	\end{eqnarray}
	where $ U_{h,J_1}$ is a $n \times |J_1|$ matrix and $ \Lambda_{h,J_1}$ is a $|J_1| \times |J_1|$ diagonal matrix. $ U_{h,J_1}^\top U_{h,J_1}= I_{|J_1|}$. Then
	\begin{eqnarray*}
		\| Z_{J_1} \tilde  V_h Z_{J_1}^\top  Z_{J_1} \tilde   V_h^\top  Z_{J_2}^\top\|^2
		\leq \| Z_{J_2} U_{h,J_1}\|^2 \| \Lambda_{h,J_1}\|.
	\end{eqnarray*}
	Since $\| \Lambda_{h,J_1}\|$ can be controlled by (\ref{HhJ1bound}), we only need to consider $\| Z_{J_2} U_{h,J_1}\|^2$. Let $ Y= Z_{J_2} U_{h,J_1}$ be a $|J_2| \times |J_1|$ matrix with the $(i,j)$th entry $Y_{ij}$. The independence between the rows of $ Z_{J_2}$ implies the independence between the rows of $ Y$.

	For any fixed $1 \times |J_1|$ unit vector $ x=(x_1,\cdots,x_{|J_1|})$, we define $ x Y^\top$ as $ Y(x)=(y_{1}(x),\cdots,y_{|J_2|}(x))$. Then the elements of $ Y(x)$ are independent.
	\begin{eqnarray*}
		x Y^\top Y x^\top=\sum_{j=1}^{|J_2|} [y_{j}^2(x)-Ey_{j}^2(x)]+\sum_{j=1}^{|J_2|}Ey_{j}^2(x).
	\end{eqnarray*}
	$Yx^\top=Z_{J_2} U_{h,J_1}x^\top$ and $U_{h,J_1}x^\top$ is an unit vector independent of $Z_{J_2}$. By the sub-Gaussian property of $Z(s)$, we have
	\begin{eqnarray*}
		x Y^\top Y x^\top\leq \sum_{j=1}^{|J_2|} [y_{j}^2(x)-Ey_{j}^2(x)]+|J_2|\tilde{C}_{A,2},
	\end{eqnarray*}
	where $\tilde{C}_{A,2}$ only depends on $A$. Moreover, we can also deal with $\sum_{j=1}^{|J_2|} [y_{j}^2(x)-Ey_{j}^2(x)]$ with the sub-Gaussian property of $Z(s)$.
	Thus, for any fixed $1 \times |J_1|$ unit vector $ x$, any $c>0$ and $v>0$, there exists $C_{A,3}$ depending only on $A$, $c$ and $v$ such that

			\begin{eqnarray}\label{HhJ1bound2}
				P\Big(\| x Y^\top\|^2 > C_{A,3}v(|J_1|+|J_2|)\Big|B_Z\Big) \leq c\exp(-5|J_1|v-5|J_2|v).
			\end{eqnarray}
			As we know, the unit Euclidean sphere $S^{|J_1|-1}$ consists of all $|J_1|$-dimensional unit vectors $ x$.    Unfortunately,    the cardinality of $S^{|J_1|-1}$ are uncountable cardinal number. We can't use (\ref{HhJ1bound2}) to conclude the upper bound of $\| Y\|^2$ directly. Thus we use the method based on Nets to control $\| Y\|^2$. Let $ S_{\varepsilon}$ be a subset of $S^{|J_1|-1}$. For any $ x \in S^{|J_1|-1}$, there exists $ \tilde{x} \in  S_{\varepsilon}$ such that $\| \tilde{x}- x\| \leq \varepsilon$.
			Then if $\| Y x^\top\|=\| Y\|$, there exists $ \tilde{x} \in  S_{\varepsilon}$ such that
			\begin{eqnarray*}
				\| Y \tilde{x}^\top\| \geq \| Y x^\top\|- \| Y( \tilde{x}- x)^\top\|  \geq \| Y\|-\varepsilon\| Y\|=(1-\varepsilon)\| Y\|.
			\end{eqnarray*}
			Let $\varepsilon=1/2$,
			\begin{eqnarray*}
				\| Y\|^2 \leq 4 \max_{ \tilde{x} \in  S_{1/2}} \| Y \tilde{x}^\top\|^2.
			\end{eqnarray*}
			This, together with (\ref{HhJ1bound2}) and $| S_{\varepsilon}| \leq (1+2\varepsilon^{-1})^{|J_1|}$, implies that
			\begin{eqnarray}\label{HhJ1bound3}
				P\Big(\| Y\|^2 > 4C_{A,3}v(|J_1|+|J_2|)\Big|B_Z\Big)  \leq c5^{|J_1|} \exp(-5|J_1|v-5|J_2|v).
			\end{eqnarray}
			Recalling (\ref{HhJ1bound}), one can conclude that for any $c>0$, there exists $C_{A,4}$ only depending on $A$ and $c$ such that
			\begin{eqnarray}\label{HhJ1bound4}
				\nonumber&&P\Big(\| n^{-2}Z_{J_1} \tilde  V_h Z_{J_1}^\top  Z_{J_1} \tilde  V_h^\top  Z_{J_2}^\top\|^2 > 4C_{A,4}vn^{-1}(|J_1|+|J_2|)\Big|B_Z\Big) \\
				&&\leq c5^{|J_1|} \exp(-5|J_1|v-5|J_2|v).
			\end{eqnarray}
			Others term in  (\ref{decomaa}) can be controlled by the same method.
			This completes the proof.
		\end{proof}
		\begin{lemma}\label{lemNj}
			Under conditions A1-A3 and $p=o(n)$,
			\begin{equation}\label{Labmda2}
				\|\hat{N}_{J_i}- \Lambda_i\|=O_p(n^{-1/2}q_i^{1/2}),
			\end{equation}
			where $J_i=\{j \in \mathcal{Z}:p_{i-1} < j \leq p_{i}\}$,  $\Lambda_i = \diag(\la_{p_{i-1} +1}, \cdots, \la_{p_i})$, and $\la_i$ are specified
			in Condition A3.
			
		\end{lemma}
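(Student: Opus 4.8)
The plan is as follows. Since $N$ is diagonal with $g$th diagonal entry $\lambda_g$ (shown just after (\ref{nfi})), its principal submatrix indexed by $J_i=\{p_{i-1}+1,\dots,p_i\}$ equals $\Lambda_i$, so $\hat N_{J_i}-\Lambda_i=(\hat N-N)_{J_i}$ is the $J_i\times J_i$ submatrix of $\hat N-N$, and it suffices to prove $\|(\hat N-N)_{J_i}\|=O_p(n^{-1/2}q_i^{1/2})$. Lemma~\ref{lemM} already yields a bound of this type for an \emph{off-diagonal} block $(\hat N-N)_{J_1,J_2}$ with $J_1\cap J_2=\varnothing$; for a diagonal block its proof fails, because the decomposition (\ref{decomaa}) relies on the independence of $Z_{J_2}$ from the Gram matrices built out of $Z_{J_1}$ and $Z_{J_1^c}$, which breaks when $J_1=J_2$. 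The reduction to off-diagonal blocks is carried out by a dyadic bisection of the consecutive index set $J_i$: for $\ell=0,1,\dots,L:=\lceil\log_2 q_i\rceil$, partition $J_i$ into consecutive subintervals $I_{\ell,1},I_{\ell,2},\dots$ of size at most $m_\ell:=\lceil q_i/2^{\ell}\rceil$ by repeated halving, so that each level-$\ell$ interval lies in a unique level-$(\ell-1)$ interval (its parent), and two level-$\ell$ intervals with a common parent are called siblings. For any symmetric matrix $S$ on $J_i$ one has $S=\diag(S)+\sum_{\ell=1}^{L}S^{(\ell)}$, where $S^{(\ell)}$ retains only the blocks $S_{I,I'}$ between distinct siblings at level $\ell$; $S^{(\ell)}$ is block-diagonal over the level-$(\ell-1)$ intervals, each block having the form $\left(\begin{smallmatrix}0&B\\B^{\top}&0\end{smallmatrix}\right)$, so $\|S^{(\ell)}\|$ is the maximum of $\|S_{I,I'}\|$ over sibling pairs at level $\ell$. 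Taking $S=(\hat N-N)_{J_i}$ and writing $M_{gu}$ for the $(g,u)$ entry of $\hat N-N$ (as in Lemma~\ref{t1}) gives
\begin{equation*}
\|(\hat N-N)_{J_i}\|\ \le\ \max_{g\in J_i}|M_{gg}|\ +\ \sum_{\ell=1}^{L}\ \max\bigl\{\|(\hat N-N)_{I,I'}\|:\ I,I'\ \text{siblings at level }\ell\bigr\}.
\end{equation*}

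Next, the diagonal term is handled by Lemma~\ref{t1}: $\max_{g\in J_i}|M_{gg}|\le(\sum_{g\in J_i}M_{gg}^2)^{1/2}$ with $E\sum_{g\in J_i}M_{gg}^2\le q_iC_1n^{-1}$, so $\max_{g\in J_i}|M_{gg}|=O_p(q_i^{1/2}n^{-1/2})$. For siblings $I,I'$ at level $\ell$ the index sets are disjoint, of size $\le m_\ell$, and their union (the parent) has size at least $m_\ell$, so Lemma~\ref{lemM}, applied on $B_Z=\{n^{-1}\|Z\|^2\le C_A\}$ (which has probability tending to one by Lemma~\ref{lemCZ}), gives for any $v>0$
\begin{equation*}
P\bigl(\|(\hat N-N)_{I,I'}\|^2>2C_2\,n^{-1}v\,m_\ell\ \big|\ B_Z\bigr)\ \le\ 2c\,5^{m_\ell}\exp(-5\,m_\ell v).
\end{equation*}
Choosing $v=v_\ell:=v_0+\ell\,m_\ell^{-1}\log q_i$ with $v_0$ a large enough absolute constant makes the right-hand side at most $2c\,q_i^{-5\ell}$; since there are at most $q_i$ sibling pairs at each level, a union bound over all pairs and over $\ell\le L$ shows that, with probability tending to one, the level-$\ell$ maximum above is at most $\sqrt{2C_2}\,n^{-1/2}\bigl(\sqrt{v_0 m_\ell}+\sqrt{\ell\log q_i}\bigr)$. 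Summing over $\ell$ and using $\sum_{\ell\ge1}\sqrt{m_\ell}=O(q_i^{1/2})$ (a geometric series) together with $\sum_{\ell=1}^{L}\sqrt{\ell\log q_i}=O((\log q_i)^2)=O(q_i^{1/2})$ gives $\sum_{\ell}\max\{\cdots\}=O_p(n^{-1/2}q_i^{1/2})$, and combining with the diagonal term yields (\ref{Labmda2}).

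The main obstacle is the probabilistic bookkeeping in the last step. Since Lemma~\ref{lemM} is stated conditionally on $B_Z$ and carries the free parameter $v$, one must let $v$ grow---only logarithmically---at the finer dyadic levels so that the union bound over the $O(q_i)$ off-diagonal blocks still fails with vanishing probability, while keeping the accumulated norm bound at the target order $n^{-1/2}q_i^{1/2}$; at the coarse levels $m_\ell$ is large enough that a fixed $v$ suffices, and the fine levels add only a polylogarithmic factor absorbed into $q_i^{1/2}$ (and $C_2$ can be taken uniform over the $v_\ell$ used). Conditions~A1--A2 and Lemmas~\ref{l0z}--\ref{l1z} enter only through the constants $C_A$, $C_1$, $C_2$, and Condition~A3 is used here only to define the blocks $J_i$.
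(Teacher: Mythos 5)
Your proposal is correct and follows essentially the same route as the paper: split $\hat N_{J_i}-\Lambda_i$ into its diagonal part, controlled entrywise via Lemma \ref{t1}, and its off-diagonal part, decomposed by recursive dyadic bisection into disjoint off-diagonal blocks each bounded by Lemma \ref{lemM} conditionally on $B_Z$ (Lemma \ref{lemCZ}), with a level-dependent choice of $v$ and a union bound. The only difference is cosmetic: you use the smoothly varying $v_\ell=v_0+\ell m_\ell^{-1}\log q_i$, whereas the paper uses the two-regime choice $v=1$ for coarse levels and $v=q_i^{4/5}\log^{-1}q_i$ for fine levels, both yielding the same $O_p(n^{-1/2}q_i^{1/2})$ rate.
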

		\begin{proof}
			We divide $\hat{N}_{J_i}$ into two terms: (i) the diagonal term $\hat{N}_{J_i,d}$ and (ii) the off-diagonal term $\hat{N}_{J_i,o}$. Lemma \ref{t1} ensures $\|\hat{N}_{J_i,d}- \Lambda_i\|=O_p(n^{-1/2} q_i^{1/2})$. Thus we only need to show $\|\hat{N}_{J_i,o}\|=O_p(n^{-1/2}q_i^{1/2})$. If $q_i$ is finite, Lemma \ref{t1} can also ensure it. So we only need to consider the case $q_i$ tends to infinity.
			
			We can rewrite $\hat{N}_{J_i,o}$ and control $\|\hat{N}_{J_i,o}\|$ with the following idea.
			\begin{eqnarray*}
				\hat{N}_{J_i,o}=\begin{pmatrix}  V_{11} &   V_{12}\\
					V_{21} &   V_{22}
				\end{pmatrix}=\begin{pmatrix}  V_{11} &   0\\
					0 &   V_{22}
				\end{pmatrix}+\begin{pmatrix}  0 &   V_{12}\\
					V_{21} &   0
				\end{pmatrix}=D_1+V_{o,1}.
			\end{eqnarray*}
			Each block is a $q_i/2 \times q_i/2$ matrix.
			Note that $ V_{12}= V_{21}^\top$ and the norm of the second term $V_{o,1}$
			(off-diagonal block) can be controlled by $\| V_{12}\|$. Moreover, we can
			control $\| V_{12}\|$ by Lemmas \ref{lemCZ} and \ref{lemM}. In details, Lemma \ref{lemM} implies that
			\begin{equation}\label{Masyv11}
				P\Big(\| V_{o,1}\|^2 > C_2vn^{-1}q_i\Big|B_Z\Big) \leq c(5^{q_i/2}+5^{q_i/2}) \exp(-5q_iv).
			\end{equation}
			
			For the first term, we can
			repeat the step on $ V_{11}$ and $ V_{22}$ to get a new matrix with
			off-diagonal blocks as follows:
			\begin{eqnarray*}
				V_{o,2}=diag\Big[\begin{pmatrix}  0 &   V_{11,12}\\
					V_{11,21} &   0
				\end{pmatrix},\begin{pmatrix}  0 &   V_{22,12}\\
					V_{22,21} &   0
				\end{pmatrix}\Big].
			\end{eqnarray*}
			Lemma \ref{lemM} implies that
			\begin{equation}\label{Masyv12}
				P\Big(\| V_{o,2}\|^2 > C_2vn^{-1}q_i/2\Big|B_Z\Big) \leq 2c(5^{q_i/4}+5^{q_i/4}) \exp(-5q_iv/2).
			\end{equation}
			Repeat the steps, we can find that $V_{o,j}$ has $2^{j-1}$ diagonal  blocks and each diagonal block has two $2^{-j}q_i \times 2^{-j}q_i$ off-diagonal blocks. Lemma \ref{lemM} implies that
			\begin{equation}\label{Masyv1k}
				P\Big(\| V_{o,j}\|^2 > 2^{1-j}C_2vn^{-1}q_i\Big|B_Z\Big) \leq 2^{j-1}c(5^{2^{-j}q_i}+5^{2^{-j}q_i}) \exp(-5q_iv\times2^{1-j}).
			\end{equation}
			We  divide it into $j_0$ matrices: $ \hat{N}_{J_i,o}=\sum_{j=1}^{j_0}V_{o,j}$, $2^{j_0-1} \leq q_i$ and  $j_0=O(\log q_i)$.
			For different $j$, we choose different $v$ to control (\ref{Masyv1k}). When $\log q_i=o(2^{1-j}q_i)$, we choose $v=1$. It follows that
			\begin{equation}\label{Masyv1k1}
				P\Big(\| V_{o,j}\|^2 > 2^{1-j}C_2n^{-1}q_i\Big|B_Z\Big) \leq 2^{j-1}c(5^{2^{-j}q_i}+5^{2^{-j}q_i}) \exp(-5q_i\times2^{1-j})=o(\log^{-1} q_i).
			\end{equation}
			
			Otherwise, we choose $v=q_i^{4/5}\log^{-1} q_i$. It follows that
			\begin{eqnarray}\label{Masyv1k2}
				&&P\Big(\| V_{o,j}\|^2 > C_2n^{-1}q_i\log^{-2} q_i\Big|B_Z\Big)\\
				\nonumber\leq &&P\Big(\| V_{o,j}\|^2 > 2^{1-j}C_2q_i^{4/5}n^{-1}q_i\log^{-1} q_i\Big|B_Z\Big)\\
				\nonumber \leq&& 2^{j-1}c(5^{2^{-j}q_i}+5^{2^{-j}q_i}) \exp(-5q_i^{9/5}\log^{-1} q_i\times2^{1-j})=o(\log^{-1} q_i).
			\end{eqnarray}
			(\ref{Masyv1k1})-(\ref{Masyv1k2}) and $\|\hat{N}_{J_i,o}\| \leq \sum_{j=1}^{j_0} \| V_{o,j}\|$ imply that
			\begin{eqnarray}\label{Masyv1k3}
				&&P\Big(\|\hat{N}_{J_i,o}\| > 5C_2^{1/2}n^{-1/2}q_i^{1/2}\Big|B_Z\Big)=o(1).
			\end{eqnarray}
			Lemma \ref{lemCZ} implies that $\lim_{n\rightarrow \infty}P(B_Z)=1$. This, together with (\ref{Masyv1k3}) and $\|\hat{N}_{J_i,d}- \Lambda_i\|=O_p(n^{-1/2} q_i^{1/2})$, completes the proof.

		\end{proof}
		
		\begin{lemma}\label{prewrtienlem1}
			Under conditions A1-A2 and $p=o(n)$,
			\begin{equation}\label{prewrtieneq1}
				\|\Omega^\top\hat{\Sigma}^{-1}\Omega-I_p\|=O_p(n^{-1/2}p^{1/2}).
			\end{equation}
		\end{lemma}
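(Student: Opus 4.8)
The plan is to strip off the mixing matrix algebraically, reducing the claim to a concentration bound for the sample covariance matrix of the latent field $Z(\cdot)$, and then to establish that bound at the sharp rate via an $\varepsilon$-net over the unit sphere, in the spirit of the proofs of Lemmas~\ref{lemCZ} and~\ref{lemM}. Since $X(s_i)=\Omega Z(s_i)$ gives $\tilde X(s_i)=\Omega\tilde Z(s_i)$, one has $\hat\Sigma=\Omega\hat\Sigma_Z\Omega^\top$, where
\[
\hat\Sigma_Z:=\frac1n\sum_{i=1}^n\tilde Z(s_i)\tilde Z(s_i)^\top=\frac1n\,Z\big(I_n-n^{-1}1_{n\times n}\big)Z^\top .
\]
Whenever $\hat\Sigma_Z$ is invertible this gives $\Omega^\top\hat\Sigma^{-1}\Omega=\hat\Sigma_Z^{-1}$, so it suffices to control $\|\hat\Sigma_Z^{-1}-I_p\|$. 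I would in fact show $\|\hat\Sigma_Z-I_p\|=O_p(n^{-1/2}p^{1/2})$; as $p=o(n)$ this is $o_p(1)$, so on the event $\{\|\hat\Sigma_Z-I_p\|\le1/2\}$, whose probability tends to $1$, the matrix $\hat\Sigma_Z$ is invertible with $\|\hat\Sigma_Z^{-1}\|\le2$, and the identity $\hat\Sigma_Z^{-1}-I_p=-\hat\Sigma_Z^{-1}(\hat\Sigma_Z-I_p)$ yields $\|\hat\Sigma_Z^{-1}-I_p\|\le2\|\hat\Sigma_Z-I_p\|$, which gives the lemma.

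To bound $\|\hat\Sigma_Z-I_p\|$ I would take $\mu_0=0$ without loss of generality (the centering in $\tilde Z$ absorbs any shift) and split $\hat\Sigma_Z-I_p=(n^{-1}ZZ^\top-I_p)-\bar Z\bar Z^\top$, where $\bar Z=n^{-1}\sum_{i=1}^nZ(s_i)$. The term $\bar Z\bar Z^\top$ is negligible: $\|\bar Z\bar Z^\top\|=\|\bar Z\|^2$ and $E\|\bar Z\|^2=\sum_{g=1}^pn^{-2}\sum_{i,j}K_g(s_i-s_j)=O(p/n)$ by the covariance decay~(\ref{zcov}) and the separation $\|s_i-s_j\|\ge\Delta$, which is the same elementary summation already carried out in the proof of Lemma~\ref{l0z}; hence $\|\bar Z\bar Z^\top\|=O_p(p/n)=o_p(n^{-1/2}p^{1/2})$. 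For the main term $n^{-1}ZZ^\top-I_p$ the entrywise variance bounds of Lemma~\ref{l1z} (with $Q=I_n$) give only the Frobenius-norm rate $O_p(n^{-1/2}p)$, which is a factor $p^{1/2}$ too large, so the operator-norm rate has to be obtained directly.

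For that I would fix a $1/2$-net $\calN$ of the unit sphere $S^{p-1}$, so that $|\calN|\le5^p$ and $\|n^{-1}ZZ^\top-I_p\|\le2\max_{a\in\calN}|a^\top(n^{-1}ZZ^\top-I_p)a|$. For fixed $a\in S^{p-1}$, writing $W_a(s)=a^\top Z(s)$, one has $a^\top(n^{-1}ZZ^\top-I_p)a=n^{-1}\sum_{i=1}^n\{W_a(s_i)^2-1\}$, a centred quadratic form in the vector $(W_a(s_1),\dots,W_a(s_n))^\top$; by Condition~A1 (the components of $Z$ are independent with a uniform sub-Gaussian bound and their spatial linear combinations are sub-Gaussian) the process $W_a(\cdot)$ is sub-Gaussian uniformly in $a$, and its covariance matrix $\big(\sum_ga_g^2K_g(s_i-s_j)\big)_{i,j}$ has operator norm $O(1)$ by~(\ref{zcov}) and Gershgorin. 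Then, exactly as in the exponential bounds obtained for $\|Zx^\top\|^2$ in the proofs of Lemmas~\ref{lemCZ} and~\ref{lemM}, one gets for $t\to0$
\[
P\big(|a^\top(n^{-1}ZZ^\top-I_p)a|>t\big)\le2\exp(-cnt^2),
\]
with $c$ depending only on $A$, and a union bound over $\calN$ with $t=Kn^{-1/2}p^{1/2}$ makes the probability at most $2\exp\{p(\log5-cK^2)\}$, which tends to $0$ for $K$ large (and $t=K(p/n)^{1/2}\to0$, so the quadratic regime of the tail applies). Hence $\|n^{-1}ZZ^\top-I_p\|=O_p(n^{-1/2}p^{1/2})$, so $\|\hat\Sigma_Z-I_p\|=O_p(n^{-1/2}p^{1/2})$, and the reduction above completes the proof.

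The hard part is precisely this last step: controlling $n^{-1}ZZ^\top$ near $I_p$ in operator norm at the rate $n^{-1/2}p^{1/2}$ rather than the $n^{-1/2}p$ given by a moment/Frobenius argument. This forces the $\varepsilon$-net together with a sharp exponential (Hanson--Wright-type) tail for the sub-Gaussian quadratic forms $a^\top\hat\Sigma_Z a$, and it requires keeping track, as in Condition~A1, that the sub-Gaussian norms of $a^\top Z(\cdot)$ and of its spatial linear combinations remain bounded uniformly over $a\in S^{p-1}$.
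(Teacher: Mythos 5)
Your reduction is exactly the paper's: since $\tilde X(s_j)=\Omega\tilde Z(s_j)$, the target quantity equals $\hat\Sigma_Z^{-1}-I_p$, and it suffices to prove $\|\hat\Sigma_Z-I_p\|=O_p(n^{-1/2}p^{1/2})$; your explicit treatment of invertibility and of the $\bar Z\bar Z^\top$ term is in fact more careful than the paper, which at this point simply states that the bound follows ``following the proof of Lemma~\ref{lemNj}'', i.e.\ via the dyadic block decomposition built on Lemmas~\ref{lemCZ} and~\ref{lemM}. The genuine gap is in your key tail bound. You assert $P\big(|a^\top(n^{-1}ZZ^\top-I_p)a|>t\big)\le 2\exp(-cnt^2)$ ``exactly as in'' Lemmas~\ref{lemCZ} and~\ref{lemM}; but the exponential bounds there concern $\|xZ^\top\|^2=\sum_{g\le p}\langle x,Z^g\rangle^2$ for a unit $x\in R^n$, i.e.\ sums of squares over the \emph{independent} component index $g$, which are honest sums of independent sub-exponential variables. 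Your quadratic form is $n^{-1}\|Z^\top a\|^2=n^{-1}\sum_{i\le n}W_a(s_i)^2$, a sum over the spatially \emph{dependent} location index $i$. A Hanson--Wright-type tail at rate $\exp(-cnt^2)$ for a quadratic form in a dependent vector is not a consequence of uniform sub-Gaussianity of its one-dimensional marginals, which is all that Condition A1 provides: there exist sub-Gaussian vectors with $O(1)$ marginal parameter whose squared norm fluctuates at order $n$ rather than $\sqrt n$, so the step as justified would fail. To repair it you must exploit the independence across components, e.g.\ split $a^\top(n^{-1}ZZ^\top-I_p)a=\sum_g a_g^2\,(n^{-1}\|Z^g\|^2-1)+\sum_{g\neq g'}a_ga_{g'}\,n^{-1}\langle Z^g,Z^{g'}\rangle$. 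The first (diagonal Gram) part is bounded uniformly in $a$ by $\max_g|n^{-1}\|Z^g\|^2-1|=O_p(n^{-1/2}p^{1/2})$ using only the variance bound of Lemma~\ref{l1z} and a union bound over the $p$ components, with no net and no exponential tail needed; the second (off-diagonal Gram) part is where a net/conditioning argument across independent components is required, exactly as in Lemma~\ref{lemM} --- which is the machinery the paper is invoking when it points to Lemma~\ref{lemNj}.

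Two smaller points. First, a $1/2$-net does not give $\|M\|\le 2\max_{a\in\calN}|a^\top Ma|$ for symmetric $M$; the standard bound is $\|M\|\le(1-2\varepsilon)^{-1}\max_{a\in\calN_\varepsilon}|a^\top Ma|$, so take $\varepsilon=1/4$ (cardinality at most $9^p$), which changes nothing else in your union bound. Second, once the tail bound is established through the component-wise structure rather than by citation, your single global net over $S^{p-1}$ is a legitimate and more standard alternative to the paper's recursive block-splitting route, and it delivers the same rate $O_p(n^{-1/2}p^{1/2})$.
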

		\begin{proof}
			Since $\tilde X(s_j)=\Omega \tilde Z(s_j)$,
			\begin{eqnarray*}
				\Omega^\top\hat{\Sigma}^{-1}\Omega-I_p=&&\Omega^\top[n^{-1} \sum_{1\le j \le n}\tilde  X(s_j) \tilde X(s_j)^\top]^{-1}\Omega-I_p\\
				=&&[n^{-1} \sum_{1\le j \le n}\tilde  Z(s_j) \tilde Z(s_j)^\top]^{-1}-I_p.
			\end{eqnarray*}
			It suffices to prove
			\begin{eqnarray*}
				\|n^{-1} \sum_{1\le j \le n} \tilde Z(s_j) \tilde Z(s_j)^\top-I_p\|=O_p(n^{-1/2}p^{1/2}).
			\end{eqnarray*}
			Following the proof of Lemma \ref{lemNj}, one can verify the above equation.
		\end{proof}
		
		\subsection{Proofs of Theorems}

		Recalling (\ref{nfihat}), write $\wh N = \wh \Gamma \wh \Lambda \wh \Gamma^{\top}$ as its spectral decomposition,
		i.e.
		\[
		\wh \Lambda = \diag(\wh \la_1, \cdots, \wh \la_p),
		\]
		where $\wh \la_1 \ge \cdots \ge \wh \la_p\ge 0$ are the eigenvalues of $\wh N$, and
		the columns of the orthogonal matrix $\wh \Gamma$ are the corresponding eigenvectors.
		Recalling the definition of $\wh W$ in (\ref{nfihatX})-(\ref{b5scaleh}), we can find that
		\begin{eqnarray*}
			\wh W=&&{1 \over k}\sum_{h=1}^k \hat M(f_h) \hat M(f_h)^\top\\
			=&&{1 \over k}\sum_{h=1}^k\big\{{1 \over n} \sum_{i,j=1}^n f_h(s_i - s_j)\wh \Sigma^{-1/2} \tilde   X(s_i)  \tilde X(s_j)^\top \wh \Sigma^{-1/2} \big\}\\
			&&\big\{{1 \over n} \sum_{i,j=1}^n f_h(s_i - s_j)\wh \Sigma^{-1/2} \tilde X(s_i)  \tilde  X(s_j)^\top \wh \Sigma^{-1/2} \big\}^\top\\
			=&&{1 \over k}\wh \Sigma^{-1/2}\Omega\sum_{h=1}^k\big\{{1 \over n} \sum_{i,j=1}^n f_h(s_i - s_j)  \tilde  Z(s_i)  \tilde Z(s_j)^\top \big\}\Omega^\top \wh \Sigma^{-1} \Omega\\
			&&\big\{{1 \over n} \sum_{i,j=1}^n f_h(s_i - s_j)  \tilde  Z(s_i) \tilde  Z(s_j)^\top \big\}^\top\Omega^\top\wh\Sigma^{-1/2}\\
			=&&{1 \over k}\wh \Sigma^{-1/2}\Omega\sum_{h=1}^k\big\{{1 \over n} \sum_{i,j=1}^n f_h(s_i - s_j)  \tilde   Z(s_i)  \tilde Z(s_j)^\top \big\}(\Omega^\top \wh \Sigma^{-1} \Omega-I_p)\\
			&&\big\{{1 \over n} \sum_{i,j=1}^n f_h(s_i - s_j)  \tilde  Z(s_i)  \tilde Z(s_j)^\top \big\}^\top\Omega^\top\wh\Sigma^{-1/2}+\wh \Sigma^{-1/2}\Omega\wh N\Omega^\top\wh\Sigma^{-1/2}.
		\end{eqnarray*}
		Let $\hat{\Sigma}^{-1/2}\Omega=\wh V_{\Omega} \wh \Lambda_{\Omega} \wh U_{\Omega}$ where $\wh V_{\Omega} \wh V_{\Omega}^\top=\wh U_{\Omega}\wh U_{\Omega}^\top=I_p$ and $\wh \Lambda_{\Omega}$ is a diagonal matrix. Then
		\begin{eqnarray*}
			\wh W=&&\wh V_{\Omega} \wh U_{\Omega} \wh \Gamma \wh \Lambda \wh \Gamma^{\top} \wh U_{\Omega}^\top\wh V_{\Omega}^\top+{1 \over k}\wh \Sigma^{-1/2}\Omega\sum_{h=1}^k\big\{{1 \over n} \sum_{i,j=1}^n f_h(s_i - s_j)  \tilde  Z(s_i)  \tilde Z(s_j)^\top \big\}\\
			&&\wh U_{\Omega}^\top (\wh \Lambda^2_{\Omega}-I_p) \wh U_{\Omega}\big\{{1 \over n} \sum_{i,j=1}^n f_h(s_i - s_j)  \tilde  Z(s_i) \tilde  Z(s_j)^\top \big\}^\top\Omega^\top\wh\Sigma^{-1/2}\\
			&&+\wh V_{\Omega} (\wh \Lambda_{\Omega}-I_p) \wh U_{\Omega} \wh \Gamma \wh \Lambda \wh \Gamma^{\top} \wh U_{\Omega}^\top\wh V_{\Omega}^\top+\wh V_{\Omega} \wh \Lambda_{\Omega} \wh U_{\Omega} \wh \Gamma \wh \Lambda \wh \Gamma^{\top} \wh U_{\Omega}^\top (\wh \Lambda_{\Omega}-I_p) \wh V_{\Omega}^\top.
		\end{eqnarray*}
		It follows that
		\begin{eqnarray*}
			\wh U_{\Omega}^\top\wh V_{\Omega}^\top \wh W\wh V_{\Omega} \wh U_{\Omega}=&& \wh \Gamma \wh \Lambda \wh \Gamma^{\top}+{1 \over k}\wh U_{\Omega}^\top \wh \Lambda_{\Omega} \wh U_{\Omega}\sum_{h=1}^k\big\{{1 \over n} \sum_{i,j=1}^n f_h(s_i - s_j)  \tilde  Z(s_i)  \tilde Z(s_j)^\top \big\}\\
			&&\wh U_{\Omega}^\top (\wh \Lambda^2_{\Omega}-I_p) \wh U_{\Omega}\big\{{1 \over n} \sum_{i,j=1}^n f_h(s_i - s_j)  \tilde  Z(s_i) \tilde  Z(s_j)^\top \big\}^\top\wh U_{\Omega}^\top \wh \Lambda_{\Omega} \wh U_{\Omega}\\
			&&+ \wh U_{\Omega}^\top (\wh \Lambda_{\Omega}-I_p) \wh U_{\Omega} \wh \Gamma \wh \Lambda \wh \Gamma^{\top} +\wh U_{\Omega}^\top \wh \Lambda_{\Omega} \wh U_{\Omega} \wh \Gamma \wh \Lambda \wh \Gamma^{\top} \wh U_{\Omega}^\top (\wh \Lambda_{\Omega}-I_p) \wh U_{\Omega}.
		\end{eqnarray*}
		Then
		\begin{eqnarray}\label{dfwN}
			\|\wh U_{\Omega}^\top\wh V_{\Omega}^\top\wh W\wh V_{\Omega} \wh U_{\Omega}- \wh \Gamma \wh \Lambda \wh \Gamma^{\top}\|=O\{\|\wh \Lambda_{\Omega}-I_p\|\|\wh \Lambda\|(1+\|\wh \Lambda_{\Omega}\|)^3\}.
		\end{eqnarray}
		(\ref{prewrtieneq1}) implies that $\|\wh \Lambda_{\Omega}-I_p\|=O_p(n^{-1/2}p^{1/2})$ and $\|\wh \Lambda_{\Omega}\|=O_p(1)$.
		
		Recalling $\hat{\Sigma}^{-1/2}\Omega=\wh V_{\Omega} \wh \Lambda_{\Omega} \wh U_{\Omega}$,
		\begin{eqnarray}\label{dfwn2}
			\|\wh U_W^{\top}  \wh \Sigma^{-1/2} \Omega-\wh U_W^{\top}  \wh V_{\Omega} \wh U_{\Omega}\| \leq \|\wh U_W^{\top} \wh V_{\Omega}^\top(\wh \Lambda_{\Omega}-I_p)\wh U_{\Omega}\|=O_p(n^{-1/2}p^{1/2}).
		\end{eqnarray}

		(\ref{dfwn2}) implies that the leading term of $\wh\Gamma_{\Omega}=\wh U_W^{\top} \wh \Sigma^{-1/2} \Omega$ is $\wh U_W^{\top}  \wh V_{\Omega} \wh U_{\Omega}$. (\ref{dfwN}) implies that
		$\wh U_W^{\top} \wh \Sigma^{-1/2} \Omega$ is close to $\wh \Gamma^{\top}$.

		Thus, the asymptotic properties of $\wh \Gamma^{\top}$ is the key point. We will prove the following theorem for $\wh \Gamma$ and $\wh \Lambda$.
		
		Put $q_i = p_i-p_{i-1}$ for $i=1, \cdots, m$ (see Condition A3), and
		\begin{equation}\label{eigendecGamma}
			\hat\Gamma=\begin{pmatrix}
				\hat\Gamma_{11} &  \cdots &
				\hat\Gamma_{1m}\\
				\cdots & \cdots & \cdots\\
				{\hat{\Gamma}}_{m1} &  \cdots & {\hat{\Gamma}}_{mm}
			\end{pmatrix},
			\qquad \hat\Lambda= \diag(\hat\Lambda_1,\cdots,\hat\Lambda_m),
		\end{equation}
		where submatrix $\wh \Gamma_{ij}$ is of the size $q_i \times q_j$, and $\hat\Lambda_i$
		is a $q_i \times q_i$ diagonal matrix.
		
		\begin{theorem}\label{t2forn}
			Let Conditions A1-A3 hold. As $n\to \infty$ and $p=o(n)$, it holds that
			\begin{equation}\label{eigendecGamma1}
				\|{\hat{\Gamma}}_{ij}\|=O_p\{n^{-1/2}(q_i+q_j)^{1/2}+n^{-1}p\}, \quad 1\le i \ne j \le m,
				\quad {\rm and}
			\end{equation}
			\begin{equation}\label{eigendecLambda1}
				\|{\hat{\Lambda}_i}-{\Lambda_i}\|=O_p(n^{-1/2}q_i^{1/2}+n^{-1}p), \quad 1 \le i \le m,
			\end{equation}
			where $\Lambda_i = \diag(\la_{p_{i-1} +1}, \cdots, \la_{p_i})$, and $\la_i$ are specified
			in Condition A3.
		\end{theorem}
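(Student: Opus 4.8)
Write $N=\diag(\Lambda_1,\dots,\Lambda_m)$ with $J_i=\{p_{i-1}+1,\dots,p_i\}$, and set $E_{i\ell}=\hat N_{J_i,J_\ell}-\delta_{i\ell}\Lambda_i$, so $\hat N=N+(E_{i\ell})_{i,\ell}$. The plan is to treat $\hat N$ as a perturbation of the block-diagonal matrix $N$, whose $m$ clusters of diagonal entries are separated by a gap $\ge C_1-o(1)$ while coalescing inside each cluster (Condition A3). The perturbation sizes are already available: $\|E_{ii}\|=\|\hat N_{J_i}-\Lambda_i\|=O_p(n^{-1/2}q_i^{1/2})$ by Lemma \ref{lemNj}, and for any disjoint $J,J'\subseteq\{1,\dots,p\}$, Lemmas \ref{lemCZ} and \ref{lemM} (taking $v$ a large constant in \eqref{Masy} and using $\lim_n P(B_Z)=1$) give $\|\hat N_{J,J'}\|=O_p(n^{-1/2}(|J|+|J'|)^{1/2})$; in particular $\|E_{i\ell}\|=O_p(n^{-1/2}(q_i+q_\ell)^{1/2})$ for $i\ne\ell$, and, applied blockwise, $\|\hat N-N\|=O_p(n^{-1/2}p^{1/2})=o_p(1)$. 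By Weyl's inequality the last bound gives $\max_k|\hat\la_k-\la_k|=o_p(1)$, so the $p$ eigenvalues of $\hat N$ split into $m$ clusters matched to those of $N$ (making the partition \eqref{eigendecGamma} of $\hat\Gamma$ well defined w.p. $\to1$) and, for $i\ne j$, $\mathrm{sep}(\Lambda_i,\hat\Lambda_j)\ge C_1/2$ w.p. $\to1$.

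The core step is the Sylvester identity extracted from $\hat N\hat\Gamma=\hat\Gamma\hat\Lambda$: its $(i,j)$ block for $i\ne j$ reads
\[
\Lambda_i\hat\Gamma_{ij}-\hat\Gamma_{ij}\hat\Lambda_j=-E_{ii}\hat\Gamma_{ij}-\sum_{\ell\ne i}E_{i\ell}\hat\Gamma_{\ell j}.
\]
Since $\Lambda_i,\hat\Lambda_j$ are diagonal with spectra separated by $\ge C_1/2$, the operator $X\mapsto\Lambda_iX-X\hat\Lambda_j$ is invertible with inverse of norm $\le 2/C_1$; as $\|E_{ii}\|=o_p(1)$, the term $E_{ii}\hat\Gamma_{ij}$ is absorbed on the left, so w.p. $\to1$
\[
\|\hat\Gamma_{ij}\|\le \frac{4}{C_1}\Bigl\|\sum_{\ell\ne i}E_{i\ell}\hat\Gamma_{\ell j}\Bigr\|,\qquad
\sum_{\ell\ne i}E_{i\ell}\hat\Gamma_{\ell j}=\hat N_{J_i,J_i^c}\,R,
\]
where $R$ is the vertical stack of the blocks $\hat\Gamma_{\ell j}$, $\ell\ne i$, which, being a submatrix of the orthogonal matrix $\hat\Gamma$, satisfies $\|R\|\le1$. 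With $\|\hat N_{J_i,J_i^c}\|=O_p(n^{-1/2}p^{1/2})$ this is a first, crude pass yielding $\|\hat\Gamma_{\ell j}\|=O_p(n^{-1/2}p^{1/2})$ for all $\ell\ne j$ (the same rate a direct Davis--Kahan bound on $\|\hat N-N\|$ would give).

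To sharpen, I split off the resonant summand: $\|E_{ij}\hat\Gamma_{jj}\|\le\|E_{ij}\|=O_p(n^{-1/2}(q_i+q_j)^{1/2})$, while $\sum_{\ell\ne i,j}E_{i\ell}\hat\Gamma_{\ell j}=\hat N_{J_i,(J_i\cup J_j)^c}\,R'$ with $\|R'\|\le(\sum_{\ell\ne i,j}\|\hat\Gamma_{\ell j}\|^2)^{1/2}=O_p(n^{-1/2}p^{1/2})$ by the first pass, so this remainder is $O_p(n^{-1/2}p^{1/2})\cdot O_p(n^{-1/2}p^{1/2})=O_p(n^{-1}p)$; this gives \eqref{eigendecGamma1}. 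For \eqref{eigendecLambda1}, read the $(i,i)$ block of $\hat\Lambda=\hat\Gamma^\top\hat N\hat\Gamma$: every cross term $\hat\Gamma_{\ell i}^\top\hat N_{J_\ell,J_{\ell'}}\hat\Gamma_{\ell'i}$ with $(\ell,\ell')\ne(i,i)$ carries a factor $\hat\Gamma_{\ell i}$ or $\hat\Gamma_{\ell'i}$ of index $\ne i$, hence is $O_p(n^{-1}p)$ by \eqref{eigendecGamma1} (using $\|\hat N_{J_\ell}\|=O_p(1)$ from Lemma \ref{l0z} and $p=o(n)$), leaving $\hat\Lambda_i=\hat\Gamma_{ii}^\top\Lambda_i\hat\Gamma_{ii}+O_p(n^{-1/2}q_i^{1/2}+n^{-1}p)$. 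Since $\hat\Gamma_{ii}^\top\hat\Gamma_{ii}=I_{q_i}-\sum_{\ell\ne i}\hat\Gamma_{\ell i}^\top\hat\Gamma_{\ell i}=I_{q_i}+O_p(n^{-1}p)$, writing the polar decomposition $\hat\Gamma_{ii}=U_iP_i$ with $P_i=(\hat\Gamma_{ii}^\top\hat\Gamma_{ii})^{1/2}=I_{q_i}+O_p(n^{-1}p)$ and $U_i$ orthogonal gives $\hat\Gamma_{ii}^\top\Lambda_i\hat\Gamma_{ii}=P_iU_i^\top\Lambda_iU_iP_i$, whose eigenvalues differ from those of $\Lambda_i$ by $O_p(n^{-1}p)$ because $\|\Lambda_i\|\le\la_{\max}$; two applications of Weyl's inequality then give $\|\hat\Lambda_i-\Lambda_i\|=O_p(n^{-1/2}q_i^{1/2}+n^{-1}p)$, both matrices being diagonal with entries in descending order.

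The main obstacle is obtaining the cluster-local rate $n^{-1/2}(q_i+q_j)^{1/2}$ rather than the global $n^{-1/2}p^{1/2}$ that any single perturbation inequality applied to $\hat N-N$ delivers; this is what forces the two-stage bootstrap, in which the coupling $\sum_\ell E_{i\ell}\hat\Gamma_{\ell j}$ must be decomposed into the single resonant block $E_{ij}\hat\Gamma_{jj}$ (small coefficient, $O(1)$ block) and a remainder whose own blocks were already shown to be small. One is also compelled to work with whole eigenspace blocks rather than with individual eigenvectors, since \eqref{nfilamdba1a} permits eigenvalues within a cluster to merge, so only the cluster projection—not an individual eigenvector—is stable, which is precisely why the block Sylvester identity (and not a scalar perturbation expansion) is the right vehicle.
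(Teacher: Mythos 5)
Your proposal is correct, and it reaches both conclusions by a route that differs from the paper's in its key technical devices. For the off-diagonal blocks, the paper also starts from $\hat N\hat\Gamma=\hat\Gamma\hat\Lambda$, but it solves the stacked equation for $\tilde\Gamma_{21}$ through the explicit resolvent $(\lambda_1 I_{p-p_1}-U_{22})^{-1}$ (using (\ref{nfilamdba1a}) to replace $\hat\Lambda_1$ by the scalar $\lambda_1$), and then obtains the refined rate $n^{-1/2}(q_1+q_i)^{1/2}+n^{-1}p$ from the block structure of that resolvent: diagonal resolvent blocks $V_{ii}$ are $O_p(1)$ while off-diagonal ones are $O_p(n^{-1/2}p^{1/2})$, and these multiply the blocks $\hat N_{J_t,J_1}$ bounded via Lemmas \ref{lemCZ} and \ref{lemM}. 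You instead invert the block Sylvester operator $X\mapsto\Lambda_iX-X\hat\Lambda_j$ and sharpen a crude global bound by a two-pass bootstrap that isolates the resonant term $E_{ij}\hat\Gamma_{jj}$; this avoids collapsing $\hat\Lambda_j$ to a scalar (so the within-cluster coalescence is not needed for this step), at the mild cost that the stated bound $2/C_1$ on the inverse Sylvester operator in spectral norm should be justified by the fact that the two clusters lie in disjoint ordered intervals (Weyl plus (\ref{nfilamdba1b})); for general separated Hermitian spectra an extra absolute constant would appear. For the eigenvalue bound the two arguments are genuinely different: the paper reduces $\det(\lambda I_p-\hat N)=0$ by a Schur complement, shows the correction $\hat N_{J_1,J_1^c}(\lambda I-\hat N_{J_1^c})^{-1}\hat N_{J_1^c,J_1}$ is $O_p(n^{-1}p)$, and then invokes Lemma \ref{lemNj}; you read off the $(i,i)$ block of $\hat\Gamma^\top\hat N\hat\Gamma$, use the already-proved eigenvector rates to kill the cross terms, and finish with a polar decomposition of $\hat\Gamma_{ii}$ and Weyl's inequality. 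Both approaches rest on the same ingredients (Lemmas \ref{l0z}, \ref{lemCZ}, \ref{lemM}, \ref{lemNj} and Condition A3); the paper's determinant route treats the eigenvalues without reference to the eigenvector blocks, whereas yours couples the two estimates (with no circularity, since your separation step only needs $\|\hat N-N\|=o_p(1)$), and your Sylvester/bootstrap device is arguably more transparent about where the cluster-local rate $n^{-1/2}(q_i+q_j)^{1/2}$ comes from.
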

		(\ref{dfwN}), (\ref{dfwn2}), (\ref{prewrtieneq1}) and Theorem \ref{t2forn} can conclude Theorem \ref{t2}. Thus, we now need to prove Theorem \ref{t2forn}.
		
		\begin{proof}[ Proof of Theorem \ref{t2forn}]
			(\ref{nfilamdba1b}) and (\ref{nfilamdba1max}) show that $m$ is bounded.
			Let $J_i=\{j \in \mathcal{Z}:p_{i-1} < j \leq p_{i}\}$.
			At first we prove (\ref{eigendecLambda1}). We only need to prove it when $i=1$ and other cases can be concluded by a permutation. Define $J_1^c$ be the complementary set of $J_1$, then we can rewrite $\det(\lambda I_p-\hat{N})=0$ as follows.
			\begin{eqnarray}\label{det0}
				0=\det(\lambda I_p-\hat{N})=\det\begin{pmatrix} \lambda I_{p_1}-\hat{N}_{J_1} &  -\hat{N}_{J_1,J_1^c}\\
					-\hat{N}_{J_1^c,J_1} &  \lambda I_{p-p_1}-\hat{N}_{J_1^c}
				\end{pmatrix}.
			\end{eqnarray}
			Lemmas \ref{lemCZ} and \ref{lemM} conclude $\|\hat{N}_{J_1^c,J_1}\|=O_p(n^{-1/2}p^{1/2})=o_p(1)$. Lemmas \ref{lemCZ}-\ref{lemNj} and the condition A3 imply  that there exists a positive constant $\tilde{C}_N$ such that
			\begin{eqnarray}\label{speei}
				\lim_{n \rightarrow \infty}P(\|\lambda_lI_{p-p_1}-\hat{N}_{J_1^c}\|_{min}>\tilde{C}_N)=1
			\end{eqnarray}
			for any $1 \leq l \leq p_1$.
			Lemma \ref{lemNj} also implies that
			\begin{eqnarray}\label{det2}
				\lim_{n\rightarrow \infty} P\Big(\lambda_{p_1}-\tilde{C}_N/2<\|\hat{N}_{J_1}\|_{min} \leq  \|\hat{N}_{J_1}\|<\lambda_{1}+\tilde{C}_N/2 \Big)=1.
			\end{eqnarray}
			
			If $\lambda \in (\lambda_{p_1}-\tilde{C}_N/2,\lambda_{1}+\tilde{C}_N/2)$ is a solution of (\ref{det0}), it is also (with probability 1) a solution of
			\begin{eqnarray}\label{det1}
				0=\det\Big(\lambda I_{p_1}-\hat{N}_{J_1}-\hat{N}_{J_1,J_1^c}(\lambda I_{p-p_1}-\hat{N}_{J_1^c})^{-1}\hat{N}_{J_1^c,J_1}\Big).
			\end{eqnarray}

			Lemma \ref{lemM} and  (\ref{speei}) imply that
			\begin{eqnarray}\label{det3}
				\|\hat{N}_{J_1,J_1^c}(\lambda I_{p-p_1}-\hat{N}_{J_1^c})^{-1}\hat{N}_{J_1^c,J_1}\|=O_p(n^{-1}p).
			\end{eqnarray}
			Let $\tilde{\lambda}_1 \geq \cdots \geq \tilde{\lambda}_{p_1}$ be the eigenvalues of $\hat{N}_{J_1}$, (\ref{det1})-(\ref{det3}) conclude that
			\begin{equation}\label{Labmda2a}
				\tilde{\lambda}_l-\hat{\lambda}_l=O_p(n^{-1}p)
			\end{equation}
			for any $1 \leq l \leq p_1$.
			This, together with (\ref{Labmda2}), concludes (\ref{eigendecLambda1}).
			
			Now we consider (\ref{eigendecGamma1}). We only need to prove it when $j=1$ and $i>1$. Other cases can be concluded by a permutation.
			From $\wh N = \wh \Gamma \wh \Lambda \wh \Gamma^{\top}$ and (\ref{eigendecGamma}), we can find that
			\begin{equation}\label{Gamma2}
				\begin{pmatrix} \sum_{i=1}^m\hat{N}_{J_1,J_i} \hat{\Gamma}_{i1} \\
					\cdots\\
					\sum_{i=1}^m\hat{N}_{J_m,J_i} \hat{\Gamma}_{i1}
				\end{pmatrix}=\hat{N}\begin{pmatrix}  \hat{\Gamma}_{11} \\
					\cdots\\
					\hat{\Gamma}_{m1}
				\end{pmatrix}=\begin{pmatrix}  \hat{\Gamma}_{11} \hat{\Lambda}_1 \\
					\cdots\\
					\hat{\Gamma}_{m1} \hat{\Lambda}_1
				\end{pmatrix}.
			\end{equation}
			Define $U_{11}=\hat{N}_{J_1,J_1}$, $U_{12}=\hat{N}_{J_1,J_1^c}$, $U_{21}=\hat{N}_{J_1^c,J_1}$ and $U_{22}=\hat{N}_{J_1^c,J_1^c}$. Similarly, define $\tilde{\Gamma}_{21}^\top=( \hat{\Gamma}_{21}^\top,\cdots, \hat{\Gamma}_{m1}^\top)^\top$. Then we can rewrite (\ref{Gamma2}) as
			\begin{equation}\label{Gamma3}
				\begin{pmatrix} U_{11} \hat{\Gamma}_{11}+U_{12}\tilde{\Gamma}_{21} \\
					U_{21} \hat{\Gamma}_{11}+U_{22}\tilde{\Gamma}_{21}
				\end{pmatrix}=\begin{pmatrix}  \hat{\Gamma}_{11} \hat{\Lambda}_1 \\
					\tilde{\Gamma}_{21}{\hat{\Lambda}}_1
				\end{pmatrix}.
			\end{equation}
			\begin{eqnarray*}
				\tilde{\Gamma}_{21} \hat{\Lambda}_1=\tilde{\Gamma}_{21}( \hat{\Lambda}_1-\lambda_1 I_{p_1})+\lambda_1\tilde{\Gamma}_{21}.
			\end{eqnarray*}
			Then the second line of (\ref{Gamma3}) is equivalent to
			\begin{eqnarray*}
				(U_{22}-\lambda_1 I_{p-p_1})\tilde{\Gamma}_{21}=\tilde{\Gamma}_{21}( \hat{\Lambda}_1-\lambda_1 I_{p_1})-U_{21} \hat{\Gamma}_{11}.
			\end{eqnarray*}
			Recalling (\ref{speei}), $U_{22}-\lambda_1 I_{p-p_1}$ is invertible with probability 1 as $n$ tends to infinity.
			\begin{eqnarray*}
				\tilde{\Gamma}_{21}=(U_{22}-\lambda_1 I_{p-p_1})^{-1}\tilde{\Gamma}_{21}( \hat{\Lambda}_1-\lambda_1 I_{p_1})-(U_{22}-\lambda_1 I_{p-p_1})^{-1}U_{21} \hat{\Gamma}_{11}.
			\end{eqnarray*}
			(\ref{nfilamdba1a})-(\ref{nfilamdba1b}) and Lemmas \ref{lemCZ}-\ref{lemNj} imply that $\| \hat{\Lambda}_1-\lambda_1 I_{p_1}\|=o_p(1)$ and $\|(U_{22}-\lambda_1 I_{p-p_1})^{-1}\|=O_p(1)$. Then $(\lambda_1 I_{p-p_1}-U_{22})^{-1}U_{21}\hat{\Gamma}_{11}$ is the leading term of $\tilde{\Gamma}_{21}$. Moreover, $\|\hat{\Gamma}_{11}\|=O(1)$.
			Thus we only need to consider $(\lambda_1 I_{p-p_1}-U_{22})^{-1}U_{21}$. We rewrite $(\lambda_1 I_{p-p_1}-U_{22})^{-1}$ as
			\begin{equation*}
				\begin{pmatrix} \lambda_1I_{p_2}-\hat{N}_{J_2,J_2}&\cdots&-\hat{N}_{J_2,J_m} \\
					\cdots&\cdots&\cdots\\
					-\hat{N}_{J_m,J_2}&\cdots&\lambda_1I_{p_m}-\hat{N}_{J_m,J_m}
				\end{pmatrix}^{-1}=(\lambda_1 I_{p-p_1}-U_{22})^{-1}=\begin{pmatrix} V_{22}&\cdots&V_{2m} \\
					\cdots&\cdots&\cdots\\
					V_{m2}&\cdots&V_{mm}
				\end{pmatrix}.
			\end{equation*}
			(\ref{nfilamdba1a})-(\ref{nfilamdba1b}) and Lemma \ref{lemNj} ensure $\|(\lambda_1I_{p_i}-\hat{N}_{J_i,J_i})^{-1}\|=O_p(1)$ for $2 \leq i \leq m$. Lemma \ref{lemM} ensures $\|\hat{N}_{J_i,J_t}\|=O_p(n^{-1/2}p^{1/2})=o_p(1)$ for $2 \leq i \neq t \leq m$. Since $m$ is finite,  we can find  $\|V_{ii}\|=O_p(1)$ and
			$\|V_{it}\|=O_p(n^{-1/2}p^{1/2})$ for $2 \leq i \neq t \leq m$. Recall that $\|\hat{N}_{J_i,J_1}\|=O_p(n^{-1/2}(q_1+q_i)^{1/2})$ for $2 \leq i  \leq m$ and
			\begin{equation*}
				(\lambda_1 I_{p-p_1}-U_{22})^{-1}U_{21}=\begin{pmatrix} V_{22}&\cdots&V_{2m} \\
					\cdots&\cdots&\cdots\\
					V_{m2}&\cdots&V_{mm}
				\end{pmatrix}\begin{pmatrix} \hat{N}_{J_2,J_1} \\
					\cdots\\
					\hat{N}_{J_m,J_1}
				\end{pmatrix}.
			\end{equation*}
			It follows that
			$\|V_{ii}\hat{N}_{J_i,J_1}\|=O_p(n^{-1/2}(q_1+q_i)^{1/2})$ and $\|\sum_{t\neq i}V_{it}\hat{N}_{J_t,J_1}\|=O_p(n^{-1}p)$.
			We complete the proof of (\ref{eigendecGamma1}).

			\end{proof}
			Now we prove Theorem \ref{t3}. By the same idea, we give the following result for $\wh N$.
			\begin{theorem} \label{t3forN}
				Let conditions A1, A2 and A4 hold. Denote by $\hat \gamma_{ij}$ the $(i,j)$-th
				entry of matrix $\wh \Gamma $ in (\ref{eigendecGamma}). Then as $n, p \to
				\infty$, it holds that

				\begin{equation}\label{eigendecGamma2gap}
					\wh \gamma_{ij}=O_p(n^{-1/2}v^{-1}_{\rm gap}|j-i|^{-1}) \quad {\rm for} \;\;
					1\le i\ne j \le p,
					\quad {\rm and}
				\end{equation}
				\begin{equation}\label{eigendecLambda2gape}
					\wh \gamma_{ii}=1+O_p(n^{-1}v^{-2}_{\rm gap}) \quad {\rm for} \;\; i=1, \cdots, p.
				\end{equation}
				Moreover,
				\begin{equation}\label{eigendecLambda1aa}
					\|{\hat{\Lambda}}-{\Lambda}\|=O_p(n^{-1/2}p^{1/2}).
				\end{equation}

			\end{theorem}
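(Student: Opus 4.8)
The plan is to regard $\wh N$ as the perturbation $\wh N=N+R$, $R:=\wh N-N$, of the diagonal matrix $N=\diag(\la_1,\dots,\la_p)$, and to exploit that for a weakly decreasing sequence the minimum pairwise gap is attained at a consecutive pair, so that $|\la_i-\la_j|\ge|i-j|\,v_{\rm gap}$ for every $i\ne j$. Since A4 forces $p=o(n^{1/3})$ (Remark 1), in particular $p=o(n)$, Lemmas \ref{t1}, \ref{lemCZ}, \ref{lemM} and \ref{prewrtienlem1} all apply, and the diagonal/off-diagonal split together with the dyadic net estimate used in the proof of Lemma \ref{lemNj} --- which rely only on A1, A2 and $p=o(n)$ --- yield $\|R\|=O_p(n^{-1/2}p^{1/2})$, the diagonal block being controlled entrywise (and by a union bound over $p$ entries) through Lemma \ref{t1}. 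From Lemma \ref{t1} I also keep the finer facts $E\,M_{gu}^2\le C_1n^{-1}$ for each entry $M_{gu}$ of $R$, hence $\sum_u M_{iu}^2=O_p(n^{-1}p)$ for each row $i$. By Weyl's inequality $\max_l|\wh\la_l-\la_l|\le\|R\|=O_p(n^{-1/2}p^{1/2})$, which is (\ref{eigendecLambda1aa}); combined with A4 this gives, on an event $\calE$ with $P(\calE)\to1$, the denominator bound $|\la_i-\wh\la_j|\ge|i-j|v_{\rm gap}-\|R\|\ge\frac12|i-j|v_{\rm gap}$ for all $i\ne j$.

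Reading $\wh N\wh\gamma_{\cdot j}=\wh\la_j\wh\gamma_{\cdot j}$ componentwise and using that $N$ is diagonal gives, for $i\ne j$, the exact identity $(\la_i-\wh\la_j)\wh\gamma_{ij}=-[R\wh\gamma_{\cdot j}]_i=-M_{ij}\wh\gamma_{jj}-\sum_{l\ne j}M_{il}\wh\gamma_{lj}$. Bounding $|\wh\gamma_{jj}|\le1$, applying Cauchy--Schwarz to the sum, and using the denominator bound yields, on $\calE$,
\[
|\wh\gamma_{ij}|\ \le\ \frac{2\bigl(|M_{ij}|+(\textstyle\sum_u M_{iu}^2)^{1/2}\,\delta_j^{1/2}\bigr)}{|i-j|\,v_{\rm gap}},\qquad \delta_j:=\sum_{l\ne j}\wh\gamma_{lj}^2 .
\]
Squaring and summing over $i\ne j$ turns this into a self-bounding inequality for $\delta_j$: since $\sum_{m\ge1}m^{-2}<\infty$ together with $E\,M_{ij}^2\le C_1n^{-1}$ gives $\sum_{i\ne j}M_{ij}^2/(i-j)^2=O_p(n^{-1})$ and $\sum_{i\ne j}(\sum_u M_{iu}^2)/(i-j)^2=O_p(n^{-1}p)$, one obtains $\delta_j\le 8v_{\rm gap}^{-2}\{O_p(n^{-1})+\delta_j\,O_p(n^{-1}p)\}$. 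Because A4 makes $n^{-1}p\,v_{\rm gap}^{-2}\to0$, the $\delta_j$-term is absorbed into the left side, so $\delta_j=O_p(n^{-1}v_{\rm gap}^{-2})$. Choosing the sign of $\wh\gamma_{\cdot j}$ so that $\wh\gamma_{jj}\ge0$, $\wh\gamma_{jj}^2=1-\delta_j$ then gives (\ref{eigendecLambda2gape}). Substituting $\delta_j^{1/2}=O_p(n^{-1/2}v_{\rm gap}^{-1})$ and $(\sum_u M_{iu}^2)^{1/2}=O_p(n^{-1/2}p^{1/2})$ back into the displayed bound, the numerator is $|M_{ij}|+O_p(n^{-1}p^{1/2}v_{\rm gap}^{-1})=O_p(n^{-1/2})$ (the error being $n^{-1/2}$ times $n^{-1/2}p^{1/2}v_{\rm gap}^{-1}=o(1)$), which is (\ref{eigendecGamma2gap}). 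Finally, exactly as Theorem \ref{t2} was deduced from Theorem \ref{t2forn} via (\ref{dfwN}), (\ref{dfwn2}) and Lemma \ref{prewrtienlem1}, Theorem \ref{t3} follows from Theorem \ref{t3forN}, with the pre-whitening error of Lemma \ref{prewrtienlem1} contributing the extra $n^{-1/2}p^{1/2}$ factors appearing in its statement.

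The main obstacle is securing the sharp bound $\delta_j=O_p(n^{-1}v_{\rm gap}^{-2})$, i.e.\ one free of the factor $p$ that a naive $\|R\|\cdot\|\wh\gamma_{\cdot j}\|$ estimate of $[R\wh\gamma_{\cdot j}]_i$ would introduce. Two points need care: (i) isolating the diagonal contribution $M_{ij}\wh\gamma_{jj}$ and estimating it by the single-entry bound of Lemma \ref{t1} rather than by $\|R\|$, so that the convergence of $\sum m^{-2}$ does the rest; and (ii) making the ``solve for $\delta_j$'' step legitimate by running the whole argument on one event $\calE$ with $P(\calE)\to1$ on which the relevant $O_p(\cdot)$ quantities are deterministic constants, so that $\delta_j\le C_1n^{-1}v_{\rm gap}^{-2}+C_2n^{-1}p\,v_{\rm gap}^{-2}\delta_j$ may be inverted once $C_2n^{-1}p\,v_{\rm gap}^{-2}\le\frac12$. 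One should also verify the elementary sorted-gap inequality used for the denominators, and note that the entrywise conclusions are claimed (and proved) for each fixed pair $(i,j)$, so no uniformity over $i,j$ of the row estimates is needed.
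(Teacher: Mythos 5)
Your argument is correct, and its skeleton coincides with the paper's: both establish $\|\wh N-N\|=O_p(n^{-1/2}p^{1/2})$ by the Lemma \ref{lemNj} machinery (valid since A4 forces $p=o(n^{1/3})$) and get (\ref{eigendecLambda1aa}) from Weyl, both use the entrywise eigen-identity $(\wh\la_j-\la_i)\wh\gamma_{ij}=[(\wh N-N)\wh\gamma_{\cdot j}]_i$ together with the sorted-gap bound $|\la_i-\la_j|\ge|i-j|v_{\rm gap}$, both isolate the diagonal term $M_{ij}\wh\gamma_{jj}$ via the second-moment bound of Lemma \ref{t1}, and both close with a self-bounding (bootstrap) step and $\wh\gamma_{jj}^2=1-\sum_{l\ne j}\wh\gamma_{lj}^2$. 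Where you differ is the bootstrap quantity: the paper iterates on the uniform maximum $v=\max_{i}\max_{j\ne i}|\sum_{s\ne j}M_{is}\wh\gamma_{sj}|$ and controls it by a weighted $\ell_1$ bound of the form $O(v_{\rm gap}^{-1}\log p\,\max_{i,s}|M_{is}|)$, which implicitly requires a maximal bound on the entries of $\wh N-N$ beyond the stated second moments; you instead apply Cauchy--Schwarz and iterate on the per-column $\ell_2$ mass $\delta_j=\sum_{l\ne j}\wh\gamma_{lj}^2$, so that only $E M_{gu}^2\le C_1n^{-1}$ (hence $A_j=O_p(n^{-1})$, $B_j=O_p(n^{-1}p)$, absorbed since $n^{-1}p\,v_{\rm gap}^{-2}\to0$ under A4) is needed. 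This makes your version slightly more self-contained and avoids the $\log p$/maximum step, at the cost of delivering the bounds entrywise (for each fixed $(i,j)$) rather than through a uniform intermediate quantity --- which is all the theorem's statement requires, and your handling of the event $\calE$ and of the final reduction of Theorem \ref{t3} via (\ref{dfwN}), (\ref{dfwn2}) and Lemma \ref{prewrtienlem1} matches the paper.
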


			\begin{proof}[ Proof of Theorem \ref{t3forN}]
				Following the proof of Lemma \ref{lemNj}, one can verify that $\|\wh \Lambda-N\|=O_p(n^{-1/2}p^{1/2})$. This, together with A4, implies (\ref{eigendecLambda1aa}).

					From $\wh N \wh \Gamma= \wh \Gamma \wh \Lambda $, we can find that
					\begin{eqnarray}\label{reeigendeco}
						\wh \Gamma \wh \Lambda-N \wh \Gamma=(\wh N-N) \wh \Gamma.
					\end{eqnarray}
					(\ref{reeigendeco}) implies that
					\begin{eqnarray}\label{reeigendeco2a}
						\wh \gamma_{ij}(\wh \lambda_{j}-\lambda_{i})=\sum_{s=1}^p M_{is} \wh \gamma_{sj},
					\end{eqnarray}
					where $M_{is}$ is defined in Lemma \ref{t1}.
					The condition A4 and $\|\wh \Lambda-N\|=O_p(n^{-1/2}p^{1/2})$ can control $(\wh \lambda_{j}-\lambda_{i})$.  Then we can divide the right hand of the above equation into two part.
					\begin{eqnarray}\label{reeigendeco2}
						\sum_{s=1}^{p} M_{is} \wh \gamma_{sj}=\sum_{s \neq j} M_{is} \wh \gamma_{sj}+M_{ij} \wh \gamma_{jj}.
					\end{eqnarray}
					(\ref{Nasy}) implies that $E|M_{ij} \wh \gamma_{jj}|^2 \leq E|M_{ij}|^2 \leq C_1n^{-1}$. Thus we only need to consider the order of $\sum_{s \neq j} M_{is} \wh \gamma_{sj}$. Define $v=\max_{1 \leq i \leq p}\max_{j \neq i}|\sum_{s \neq j} M_{is} \wh \gamma_{sj}|$. Then for any $j \neq i$, (\ref{reeigendeco2a}) implies that $$|\wh \gamma_{ij}| \leq (|i-j|v_{gap}-\|\wh \Lambda-N\|)^{-1}(v+|M_{ij}|)$$
					and
					\begin{eqnarray*}
						&&|\sum_{s \neq j} M_{is} \wh \gamma_{sj}|\leq  \sum_{s \neq j} |M_{is}| | \wh \gamma_{sj}|  \\
						\leq&& \sum_{s \neq j} |M_{is}| (|s-j|v_{gap}-\|\wh \Lambda-N\|)^{-1}(v+|M_{sj}|)\\
						\leq&&v \sum_{s \neq j} |M_{is}|(|s-j|v_{gap}-\|\wh \Lambda-N\|)^{-1}+ \sum_{s \neq j} |M_{is}||M_{sj}|(|s-j|v_{gap}-\|\wh \Lambda-N\|)^{-1}.
					\end{eqnarray*}
					The condition A4, $\|\wh \Lambda-N\|=O_p(n^{-1/2}p^{1/2})$ and (\ref{Nasy}) conclude that
					$$\sum_{s \neq j} |M_{is}|(|s-j|v_{gap}-\|\wh \Lambda-N\|)^{-1}=O(v_{gap}^{-1}\log p\max_{1 \leq i,s \leq p}|M_{is}|)=o_p(1)$$ and
					$$\sum_{s \neq j} |M_{is}||M_{sj}|(|s-j|v_{gap}-\|\wh \Lambda-N\|)^{-1}=o_p(n^{-1/2}).$$
					This, together with the definition of $v$, implies that $v=o_p(n^{-1/2})$. $$|\wh \gamma_{ij}| \leq (|i-j|v_{gap}-\|\wh \Lambda-N\|)^{-1}[o_p(n^{-1/2})+|M_{ij}|].$$
					This, together with (\ref{Nasy}), concludes  (\ref{eigendecGamma2gap}).
					$$\wh \gamma_{ii}^2=1-\sum_{j \neq i} \wh \gamma_{ij}^2 \geq 1-\sum_{j \neq i} (|i-j|v_{gap}-\|\wh \Lambda-N\|)^{-2}(v+|M_{ij}|)^2=1+O_p(n^{-1}v_{gap}^{-2}) .$$
					We complete the proof.
				\end{proof}
						
				(\ref{dfwN})  and (\ref{prewrtieneq1}) imply that
				\begin{eqnarray*}
					\|\wh \Gamma^{\top}\wh U_{\Omega}^\top\wh V_{\Omega}^\top \wh W\wh V_{\Omega} \wh U_{\Omega}\wh \Gamma- \wh \Lambda\|=O_p(n^{-1/2}p^{1/2}).
				\end{eqnarray*}

				This and Theorem \ref{t3forN} can conclude the asymptotic properties of $\wh U_W^{\top}  \wh V_{\Omega} \wh U_{\Omega}\wh \Gamma$.
				Then we can prove Theorem \ref{t3} by (\ref{dfwn2}) and Theorem \ref{t3forN}.

			\end{document}